\newtheorem{theorem}{Theorem}[section]
\newtheorem{corollary}[theorem]{Corollary}
\newtheorem{definition}[theorem]{Definition}
\newtheorem{lemma}[theorem]{Lemma}
\newtheorem{proposition}[theorem]{Proposition}
\newtheorem{remark}[theorem]{Remark}
\algrenewcommand\alglinenumber[1]{\tiny #1:}
\newcommand{\T}[1]{^\mathsf{#1}}
\newcommand{\argmax}[1]{\underset{#1}{\operatorname{arg}\,\operatorname{max}}\;}
\newcommand{\argmin}[1]{\underset{#1}{\operatorname{arg}\,\operatorname{min}}\;}
\newcommand{\minimize}[1]{\underset{#1}{\operatorname{minimize}}\;}
\newcommand{\maximize}[1]{\underset{#1}{\operatorname{maximize}}\;}
\newcommand{\I}[1]{\llbracket {#1}\rrbracket_I}
\newcommand{\iI}[1]{\llbracket {#1}\rrbracket_{\mathbb{I}}}
\newcommand{\XI}[1]{\llbracket {#1}\rrbracket_{X,I}}
\newcommand{\II}{\mathbb{I}}
\newcommand{\y}{\widetilde{y}}
\newcommand{\X}{\widetilde{X}}
\newcommand{\Beta}{\widetilde{\beta}}
\newcommand{\ES}[1]{\textnormal{\tiny {#1}}}
\renewcommand{\arraystretch}{1.2}
\newcommand{\Chi}{\raisebox{2pt}{$\chi$}}
\numberwithin{equation}{section}
\begin{document}
\begin{center}{\Large\bf  Group SLOPE - adaptive selection of groups of predictors}
\footnote{An earlier version of the paper appeared on arXiv.org in November 2015: arXiv:1511.09078 }
\vspace{0.2in}

\noindent{Damian Brzyski$^{a,b}$,  Alexej Gossmann$^{c}$, Weijie Su$^{d}$, Ma\l{}gorzata Bogdan$^{e}$}
\vspace{0.2in}

\noindent $\mbox{}^{a}$ {\it \footnotesize Department of Epidemiology and Biostatistics,  Indiana University, Bloomington, IN 47405, USA}

\noindent $\mbox{}^{b}$ {\it \footnotesize  Institute of Mathematics, Jagiellonian University, 30-348 Cracow, Poland}

\noindent $\mbox{}^{c}$ {\it \footnotesize  Department of Mathematics, Tulane University, New Orleans, LA 70118, USA}
 
\noindent  $\mbox{}^{d}$ {\it \footnotesize  Department of Statistics, University of Pennsylvania, Philadelphia, PA 19104, USA}

\noindent $\mbox{}^{e}$ {\it \footnotesize  Institute of Mathematics, University of Wroclaw, 50-384 Wroclaw, Poland}

\vspace{0.2in}

\noindent Key words:  Asymptotic Minimax, False Discovery Rate, Group selection, Model Selection, Multiple Regression , SLOPE 

\end{center}

\begin{abstract}
Sorted L-One Penalized Estimation (SLOPE, \cite{SLOPE}) is a relatively new convex optimization procedure which allows for adaptive selection of regressors under sparse high dimensional designs. Here we extend the idea of SLOPE to deal with the situation when one aims at selecting whole groups of explanatory variables instead of single regressors. Such groups can be formed by clustering strongly correlated predictors or groups of dummy variables corresponding to different levels of the same qualitative predictor. We formulate the respective convex optimization problem, gSLOPE (group SLOPE), and propose an efficient algorithm for its solution. We also define a notion of the group false discovery rate (gFDR) and provide a choice of the sequence of tuning parameters for gSLOPE so that gFDR is provably controlled at a prespecified level if the groups of variables are orthogonal to each other. Moreover, we prove that the resulting procedure adapts to unknown sparsity and is asymptotically minimax with respect to the estimation of the proportions of variance of the response variable explained by regressors from different groups. We also provide a method for the choice of the regularizing sequence  when variables in different groups are not orthogonal but statistically independent and illustrate its good properties  with computer simulations. 
Finally, we illustrate the advantages of gSLOPE  in the context of Genome Wide Association Studies. \texttt{R} package \texttt{grpSLOPE} with implementation of our method is available on CRAN.
\end{abstract}


\section{Introduction}

Consider the classical multiple regression model of the form
\begin{equation}
\label{mainmodel}
y=X\beta+z,\end{equation} where $y$ is the $n$ dimensional vector of values of the response variable, $X$ is the $n$ by $p$ experiment (design) matrix and $z\sim\mathcal{N}(0,\sigma^2 \mathbf{I}_n)$. We assume that $y$ and $X$ are known, while $\beta$ is unknown. In many applications the purpose of the statistical analysis is to recover the support of $\beta$, which identifies the set of important regressors. Here, the true support corresponds to truly relevant variables (i.e. variables which have impact on observations).  Common procedures to solve this model selection problem rely on minimization of some objective function consisting of the weighted sum of two components: first term responsible for the goodness of fit and second term penalizing the model complexity. Among such procedures one can mention classical model selection criteria like the Akaike Information Criterion (AIC) \cite{Aka} and the Bayesian Information Criterion (BIC) \cite{Schw}, where the  penalty depends on the number of variables included in the model,  or  LASSO \cite{LASSO}, where the penalty depends on the $\ell_1$ norm of regression coefficients. The main advantage of LASSO over classical model selection criteria is that it is a convex optimization problem and, as such, it can be easily solved even for very large design matrices.  

LASSO solution is obtained by solving the optimization problem 
\begin{equation}
\label{LASSO}
\argmin b\ \ \bigg\{\frac 12\big\|y-Xb\big\|^2+\lambda_L\|b\|_1\bigg\},
\end{equation}
where $\lambda_L$ is a tuning parameter defining the trade-off between the model fit and the sparsity of solution. 
In practical applications the selection of good $\lambda_L$ might be very challenging. For example it has been reported that in high dimensional settings  the popular cross-validation  typically leads  to detection of  a large number of false regressors (see e.g. \cite{SLOPE}). The general rule is that when one reduces $\lambda_L$, then LASSO can identify more elements from the true support (true discoveries) but at the same time it generates more false discoveries.  In general the numbers of true and false discoveries for a given  $\lambda_L$ depend on unknown properties on the data generating mechanism, like the number of true regressors and the magnitude of their effects. A very similar problem occurs when selecting  thresholds for individual tests in the context of multiple testing. Here it was found that the popular Benjamini-Hochberg rule (BH, \cite{BH}), aimed at control of the False Discovery Rate (FDR), adapts to the unknown data generating mechanism and has some desirable optimality properties under a variety of statistical settings (see e.g. \cite{ABDJ, ABOS, ABOS2, ABOS3}). The main property of this rule is that it relaxes the thresholds along the sequence of test statistics, sorted in the decreased order of magnitude. Recently the same idea was used in a new generalization of LASSO, named SLOPE (Sorted L-One Penalized Estimation, \cite{SLOPE2, SLOPE}). Instead of the $\ell_1$ norm (as in LASSO case), the method uses FDR control properties of $J_{\lambda}$ norm, defined as follows; for sequence $\{\lambda\}_{i=1}^p$ satisfying $\lambda_1\geq\ldots\geq\lambda_p\geq0$ and $b\in\mathbb{R}^p$, $J_{\lambda}(b):=\sum_{i=1}^p\lambda_i|b|_{(i)},$ where $|b|_{(1)}\geq\ldots\geq |b|_{(p)}$ is the vector of sorted absolute values of coordinates of $b$. SLOPE is the solution to a convex optimization problem
\begin{equation}
\label{SLOPE}
\argmin b\ \ \bigg\{\frac 12\big\|y-Xb\big\|^2+J_{\lambda}(b)\bigg\},
\end{equation}
which clearly reduces to LASSO  for $\lambda_1=\ldots=\lambda_p=:\lambda_L$.
Similarly as in classical model selection, the support of solution defines the subset of variables estimated as relevant. In \cite{SLOPE} it is shown that when the sequence $\lambda$ corresponds to the decreasing sequence of threshold for BH then SLOPE controls  FDR under orthogonal designs, i.e. when $X^TX=\mathbf{I}_n$. Moreover, in \cite{WE} it is proved that SLOPE with this sequence of tuning parameters adapts to unknown sparsity and is asymptotically minimax under orthogonal and random Gaussian designs.  

In the sequence of examples presented in \cite{SLOPE2}, \cite{SLOPE} and \cite{geneSLOPE} it was shown that SLOPE has very desirable properties in terms of FDR control in case when regressor variables are weakly correlated. While there exist other interesting approaches which allow to control FDR under correlated designs (e.g., \cite{ko}),  the efforts to prevent detection of false regressors which are strongly correlated with true ones inevitably lead to a loss of power. 
An alternative approach to deal with strongly correlated predictors is to simply give up the idea of distinguishing between them and include all of them into the selected model as a group.
This leads to the problem of group selection in linear regression, extensively investigated and applied in many fields of science.
In many of these applications the groups are selected not only due to the strong correlations but also by taking into account the problem specific scientific knowledge. It is also common to cluster dummy variables corresponding to different levels of qualitative predictors.

Probably the most known convex optimization method for selection of groups of explanatory variables is the group LASSO (gLASSO) \cite{Bakin}. For a fixed tuning parameter, $\lambda_{gL}>0$, the gLASSO estimate is most frequently (e.g. \cite{gLASSO1}, \cite{gLASSO5}) defined as a solution to optimization problem
\begin{equation}
\label{gLASSO}
\argmin b\ \ \bigg\{\frac 12\Big\|y-\sum_{i=1}^mX_{I_i}b_{I_i}\Big\|_2^2+\sigma\lambda_{gL}\sum_{i=1}^m\sqrt{|I_i|}\|b_{I_i}\|_2\bigg\},
\end{equation}
where the sets $I_1,\ldots,I_m$ form a partition of the set $\{1,\ldots,p\}$, $|I_i|$ denotes the number of elements in set $I_i$, $X_{I_i}$ is the submatrix of $X$ composed of columns indexed by $I_i$ and $b_{I_i}$ is the restriction of $b$ to indices from $I_i$. The method introduced in this article is, however, closer to the alternative version of gLASSO, in which penalties are imposed on $\|X_{I_i}b_{I_i}\|_2$ rather than $\|b_{I_i}\|_2$. This method was formulated in \cite{gLASSO6}, where authors defined estimate of $\beta$ as
\begin{equation}
\label{gLASSO2}
\beta^\ES{gL}:=\argmin b\ \ \bigg\{\frac 12\Big\|y-\sum_{i=1}^mX_{I_i}b_{I_i}\Big\|_2^2+\sigma\lambda_{gL}\sum_{i=1}^m\sqrt{|I_i|}\|X_{I_i}b_{I_i}\|_2\bigg\},
\end{equation}
with the condition $\|X_{I_i}\beta^\ES{gL}_{I_i}\|_2>0$ serving as a group relevance indicator.

Similarly as in the context of regular model selection, the properties of gLASSO strongly depend on the shrinkage parameter $\lambda_{gL}$, whose optimal value is the function of unknown parameters of true data generating mechanism. Thus, a natural question arises if the idea of SLOPE can be used for construction of the similar adaptive procedure  for the group selection. To answer this query in this paper we define and investigate the properties of the group SLOPE (gSLOPE). We formulate the respective optimization problem  and provide the algorithm for its solution. We also define the notion of the group FDR (gFDR), and provide the theoretical choice of the sequence of regularization parameters, which guarantees that gSLOPE controls gFDR in the situation when variables in different groups are orthogonal to each other. Moreover, we prove that the resulting procedure adapts to unknown sparsity and is asymptotically minimax with respect to the estimation of the proportions of variance of the response variable explained by regressors from different groups. Additionally, we provide a way of constructing the sequence of regularization parameters under the assumption that the regressors from distinct groups are independent and use computer simulations to show that it allows to control gFDR. Good properties of group SLOPE are illustrated using the practical example of Genome Wide Association Study. \texttt{R} package \texttt{grpSLOPE} with implementation of our method is available on CRAN.

\section{Group SLOPE}
\subsection{Formulation of the optimization problem}
\label{subsec:gs2711944}
Let the design matrix $X$ belong to the space $M(n,p)$ of matrices with $n$ rows and $p$ columns. Furthermore, suppose that $I=\{I_1,\ldots,I_m\}$ is some partition of the set $\{1,\ldots,p\}$, i.e. $I_i$'s are nonempty sets, $I_i\cap I_j=\emptyset$ for $i\neq j$ and $\bigcup I_i = \{1,\ldots,p\}$. We will consider the linear regression model with $m$ groups of the form
\begin{equation}
\label{gmodel}
y=\sum_{i=1}^mX_{I_i}\beta_{I_i}+z,
\end{equation}
where $X_{I_i}$ is the submatrix of $X$ composed of columns indexed by $I_i$ and $\beta_{I_i}$ is the restriction of $\beta$ to indices from the set $I_i$. We will use notations $l_1,\ldots,l_m$ to refer to the ranks of submatrices $X_{I_1},\ldots,X_{I_m}$. To simplify notations in further part, we will assume that $l_i>0$ (i.e. there is at least one nonzero entry of $X_{I_i}$ for all $i$). Besides this, $X$ may be absolutely arbitrary matrix, in particular any linear dependencies inside submatrices $X_{I_i}$ are allowed.

In this article we will treat the value $\|X_{I_i}\beta_{I_i}\|_2$ as a measure of an impact of $i$th group on the response and we will say that the group $i$ is truly relevant if and only if $\|X_{I_i}\beta_{I_i}\|_2>0$. Thus our task of the identification of the relevant groups is equivalent with finding the support of the vector $\XI{\beta}:= \big(\|X_{I_1} \beta_{I_1}\|_2, \ldots, \|X_{I_m} \beta_{I_m}\|_2\big)^\mathsf{T}$.

To estimate the nonzero coefficients of $\XI{\beta}$, we will use a new penalized method, namely group SLOPE (gSLOPE). For a given nonincreasing sequence of  nonnegative tuning parameters, $\lambda_1,\ldots,\lambda_m$, a given sequence of positive weights, $w_1,\ldots,w_m$, and a design matrix, $X$, the gSLOPE, $\beta^\ES{gS}$, is defined as solutions to
\begin{equation}
\label{gSLOPE}
\beta^\ES{gS}: = \argmin b\ \ \Big\{\frac 12\big\|y-Xb\big\|_2^2+\sigma J_{\lambda}\big( W\XI{b}\big)\Big\},
\end{equation}
where $W$ is a diagonal matrix with
$W_{i,i}:=w_i,$ for $i=1,\ldots,m$. The estimate of $\XI{\beta}$ support is simply defined by the indices corresponding to nonzeros of $\XI{\beta^\ES{gS}}$.

It is easy to see that when one considers $p$ groups containing only one variable (i.e. singleton groups situation), then taking all weights equal to one reduces (\ref{gSLOPE}) to SLOPE (\ref{SLOPE}). On the other hand, taking $w_i=\sqrt{|I_i|}$ and putting $\lambda_1=\ldots=\lambda_m=:\lambda_{gL}$, immediately gives gLASSO problem (\ref{gLASSO2}) with the smoothing parameter $\lambda_{gL}$. The gSLOPE could be therefore treated both: as the extension to SLOPE, and the extension to group LASSO.

Now, let us define $\widetilde{p}=l_1+\ldots+l_m$ and consider the following partition, $\II=\{\II_1,\ldots,\II_m\}$, of the set $\{1,\ldots,\widetilde{p}\}$
\begin{equation}
\II_1:=\big\{1,\ldots, l_1\big\},\quad \II_2:=\big\{l_1+1,\ldots, l_1+l_2\big\},\quad \ldots,\quad \II_m:=\Big\{\sum_{j=1}^{m-1}l_i+1,\ldots, \sum_{j=1}^{m}l_i\Big\}.
\end{equation}
Observe that each $X_{I_i}$ can be represented as $X_{I_i}=U_iR_i$, where $U_i$ is a matrix with $l_i$ orthogonal columns of a unit $l_2$ norm, whose span coincides with the space spanned by the columns of $X_{I_i}$ , and $R_i$ is the corresponding matrix of a full row rank. Define $n$ by $l$ matrix $\widetilde{X}$ by putting $\widetilde{X}_{\II_i}:=U_i$ for $i=1,\ldots,m$. Now observe that denoting  $c_{\II_i}:=R_ib_{I_i}$ for $i\in\{1,\ldots,m\}$ we immediately obtain
\begin{equation}
\arraycolsep=1.4pt\def\arraystretch{1.8}
\begin{array}{c}
Xb=\sum\nolimits_{i=1}^mX_{I_i}b_{I_i} = \sum\nolimits_{i=1}^mU_iR_ib_{I_i} = \sum\nolimits_{i=1}^m\widetilde{X}_{\II_i}c_{\II_i}=\widetilde{X}c,\\
\Big(\XI{b}\Big)_i=\|X_{I_i}b_{I_i}\|_2 = \|R_ib_{I_i}\|_2= \|c_{\II_i}\|_2	
\end{array}
\end{equation}
and the problem (\ref{gSLOPE}) can be equivalently presented in the form
\begin{equation}
\label{24111029}
\left\{
\begin{array}{l}
c^\ES{gS}: = \argmin {c}\ \ \bigg\{\frac12\big\|y-\widetilde{X}c\big\|_2^2+\sigma J_{\lambda}\Big( W\iI{c}\Big)\bigg\} \\
c^\ES{gS}_{\II_i}:=R_i\beta^\ES{gS}_{I_i},\ i=1,\ldots,m
\end{array}
\right.,
\end{equation}
for $\iI{c}:= \big(\|c_{\II_1}\|_2, \ldots, \|c_{\II_m}\|_2\big)^\mathsf{T}.$ 
Therefore to identify the relevant groups and estimate their group effects it is enough to solve the optimization problem \eqref{24111029}. We will say that (\ref{24111029}) is the standardized version of the problem \eqref{gSLOPE}.

\begin{remark} 
Similar formulation of the group SLOPE was proposed in \cite{Gossmann2015}. However \cite{Gossmann2015} considers only the case when the weights $w_i$ are equal to the square root of the group size and penalties are imposed directly on $\|\beta_{I_i}\|_2$ rather than on group effects $\|X_{I_i} \beta_{I_i}\|_2$. This makes the method of \cite{Gossmann2015} dependent on scaling or rotations of variables in a given group.  In comparison to \cite{Gossmann2015}, where a Monte Carlo approach for estimating the regularizing sequence was proposed, our article provides choice of the smoothing parameters which provably allow for FDR control in case where the regressors in different groups are orthogonal to each other and its modification, which according to our simulation study allows for FDR control where regressors in different groups are independent. 
\end{remark}  


\subsection{Numerical algorithm}

As shown in Appendix \ref{31071347} the function $J_{\lambda, W, \II}(b):=J_{\lambda}\Big( W\iI{b}\Big)$ is a norm and the optimization problem  (\ref{24111029}) can be solved by using proximal gradient methods.  In our \texttt{R} package \texttt{grpSLOPE} available on CRAN (The Comprehensive R Archive Network) the accelerated proximal gradient method known as FISTA \cite{FISTA} is applied, which uses the specific procedure for choosing steps sizes, to achieve fast convergence rate. The proximal operator for gSLOPE is obtained by appropriate transformation and reduction of the problem, so the fast proximal operator for SLOPE \cite{SLOPE2} can be used. To derive proper stopping criteria, we have considered dual problem to gSLOPE and employed the strong duality property. The detailed description of the proximal operator for gSLOPE as well as of the dual norm and conjugate of grouped sorted $l_1$ norm is provided in the Appendix \ref{31071347}.


\subsection{Group FDR}

Group SLOPE is designed to select groups of variables, which might be very strongly correlated within a group or even linearly dependent. In this context we do not intend to identify single important predictors but rather want to point at the groups which contain at least one true regressor. To theoretically investigate the properties of gSLOPE in this context we now introduce the respective notion of group FDR (gFDR).
\begin{definition}
Consider model (\ref{gmodel}) and let $\beta^\ES{gS}$ be an estimate given by (\ref{gSLOPE}). We define two random variables: the number of all groups selected by gSLOPE (Rg) and the number of groups falsely discovered by gSLOPE (Vg), as
$$Rg:=\big | \big\{i:\ \|X_{I_i}\beta^\ES{gS}_{I_i}\|_2\neq 0\big\}\big |, \qquad Vg:=\big |\big\{i:\ \|X_{I_i}\beta_{I_i}\|_2=0,\ \|X_{I_i}\beta^\ES{gS}_{I_i}\|_2 \neq 0\big\}\big |.$$
\end{definition}
\begin{definition}
We define the false discovery rate for groups (gFDR) as
\begin{equation}
gFDR: = \mathbb{E}\left [ \frac{Vg}{\max\{Rg, 1\}} \right].
\end{equation}
\end{definition}


\subsection{Control of gFDR when variables from different groups are orthogonal}\label{subsec:06232149}

Our goal is the identification of the regularizing sequence for gSLOPE such that gFDR can be controlled at any given level $q\in(0,1)$. In this section we will provide such a sequence, which provably controls gFDR in case when variables in different groups are orthogonal to each other. In subsequent sections we will replace this condition with the weaker assumption of the stochastic independence of regressors in different groups. Before the statement of the main theorem on gFDR control, we will recall the definition of $\chi$ distribution and define a scaled $\chi$ distribution.

\begin{definition}
We will say that a random variable $X_1$ has a $\Chi$ distribution with $l$ degrees of freedom, and write $X_1 \sim \Chi_l$, when $X_1$ could be expressed as $X_1 = \sqrt{X_2},$ for $X_2$ having a $\Chi^2$ distribution with $l$ degrees of freedom.
We will say that a random variable $X_1$ has a scaled $\Chi$ distribution with $l$ degrees of freedom and scale $\mathcal{S}$, when $X_1$ could be expressed as $X_1 = \mathcal{S}\cdot X_2,$ for $X_2$ having a $\Chi$ distribution with $l$ degrees of freedom. We will use the notation $X_1 \sim \mathcal{S}\Chi_l$. 
\end{definition}
\begin{theorem}[gFDR control under orthogonal case]
\label{gFDRcontrol}
Consider model (\ref{gmodel}) with the design matrix $X$ satisfying $X_{I_i}^\mathsf{T}X_{I_j}=0$, for any $i\neq j$. Denote the number of zero coefficients in $\XI{\beta}$ by $m_0$ and let $w_1,\ldots,w_m$ be positive numbers. Moreover, define the sequence of regularizing parameters $\lambda^{max}=(\lambda^{max}_1,\ldots,\lambda^{max}_m)^\mathsf{T}$, with
\begin{equation}\label{lambda_max} 
\lambda^{max}_i:=\max\limits_{j=1,\ldots,m}\left\{\frac1{w_j}F^{-1}_{\Chi_{l_j}}\left (1-\frac{q\cdot i}{m}\right )\right\},
\end{equation}
where $F_{\Chi_{l_j}}$ is a cumulative distribution function of $\Chi$ distribution with $l_j$ degrees of freedom.
Then any solution, $\beta^\ES{gS}$, to problem gSLOPE \eqref{gSLOPE} generates the same vector $\XI{\beta^\ES{gS}}$ and it holds
\begin{equation*}
gFDR =\mathbb{E}\left [ \frac{Vg}{\max\{Rg, 1\}} \right]\leq q \cdot \frac{m_0}{m}.
\end{equation*}
\end{theorem}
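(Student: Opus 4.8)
The plan is to reduce to the standardized problem \eqref{24111029} and exploit that, under between-group orthogonality, it becomes a separable group-denoising problem to which a SLOPE-type argument applies. First I would pass to \eqref{24111029}. The hypothesis $X_{I_i}^\mathsf{T}X_{I_j}=0$ for $i\ne j$ gives $U_i^\mathsf{T}U_j=0$, so together with $U_i^\mathsf{T}U_i=\mathbf{I}_{l_i}$ the matrix $\widetilde X$ has orthonormal columns, $\widetilde X^\mathsf{T}\widetilde X=\mathbf{I}_{\widetilde p}$ (here $\widetilde p=\sum_i l_i=\mathrm{rank}(X)\le n$, since orthogonal column spaces make ranks add). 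Writing the truth in standardized coordinates, $\tilde y:=\widetilde X^\mathsf{T}y=c^\star+\tilde z$ with $c^\star_{\II_i}=R_i\beta_{I_i}$ and $\tilde z:=\widetilde X^\mathsf{T}z\sim\mathcal N(0,\sigma^2\mathbf I_{\widetilde p})$, and $\tfrac12\|y-\widetilde X c\|_2^2=\mathrm{const}+\tfrac12\|\tilde y-c\|_2^2$. Since the penalty depends on $c$ only through the group norms $\|c_{\II_i}\|_2$, for fixed magnitudes the optimal $c_{\II_i}$ aligns with $\tilde y_{\II_i}$; setting $u_i:=\|\tilde y_{\II_i}\|_2$ and $v_i:=\|c_{\II_i}\|_2$ the problem collapses to the weighted SLOPE
\[ \min_{v\ge 0}\ \tfrac12\sum_{i=1}^m (u_i-v_i)^2 + \sigma J_\lambda(Wv). \]
This objective is strictly convex in $v$, so its minimizer $v^\star$ is unique; every solution $\beta^\ES{gS}$ realizes these magnitudes, $\XI{\beta^\ES{gS}}_i=\|c^\ES{gS}_{\II_i}\|_2=v^\star_i$, which proves that $\XI{\beta^\ES{gS}}$ is common to all solutions. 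For a null group, $c^\star_{\II_j}=0$, hence $u_j=\|\tilde z_{\II_j}\|_2\sim\sigma\Chi_{l_j}$, mutually independent and independent of $(u_k)_{k\ne j}$.

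Next I would read off the selection threshold from the KKT conditions of this reduced problem. Group $j$ has $v^\star_j=0$ exactly when stationarity admits a subgradient of the map $v\mapsto J_\lambda(Wv)$ at $v^\star$ whose $j$-th entry absorbs $u_j/w_j$; as that entry is bounded by $\lambda_R$ when $R:=Rg$ groups are selected, I obtain the implication: if group $j$ is selected then $u_j>\sigma w_j\lambda^{max}_R$. Weights and group sizes are then reconciled by the very definition \eqref{lambda_max}: since $\lambda^{max}_r\ge\tfrac1{w_j}F^{-1}_{\Chi_{l_j}}(1-qr/m)$, we get $w_j\lambda^{max}_r\ge F^{-1}_{\Chi_{l_j}}(1-qr/m)$, so for a null group $j$
\[ \mathbb P\big(u_j>\sigma w_j\lambda^{max}_r\big)\le\mathbb P\big(\Chi_{l_j}>F^{-1}_{\Chi_{l_j}}(1-qr/m)\big)=\frac{qr}{m}. \]
This is precisely the role of the outer maximum over $j$ in \eqref{lambda_max}: it forces the null tail probability at $r$ selected groups to be at most $qr/m$ simultaneously for every degree of freedom $l_j$.

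Finally I would run the SLOPE false-discovery bookkeeping. Write $gFDR=\sum_{r\ge1}\tfrac1r\sum_{j\ \mathrm{null}}\mathbb P(j\text{ selected},\,Rg=r)$. To decouple the threshold event from $\{Rg=r\}$, I would use a drop-one comparison: removing the $j$-th group and solving the reduced weighted SLOPE on $(u_k)_{k\ne j}$ with the truncated sequence $(\lambda_1,\dots,\lambda_{m-1})$ yields a count $R_{-j}$ that depends only on $(u_k)_{k\ne j}$ (hence is independent of $u_j$ for null $j$) and satisfies $Rg=R_{-j}+1$ on the event that $j$ is selected. Combining with the threshold implication and independence,
\[ \mathbb P(j\text{ selected},\,Rg=r)\le\mathbb P\big(u_j>\sigma w_j\lambda^{max}_r\big)\,\mathbb P(R_{-j}=r-1)\le\frac{qr}{m}\,\mathbb P(R_{-j}=r-1). \]
Summing over $r$ gives $\sum_r\tfrac1r\cdot\tfrac{qr}{m}\,\mathbb P(R_{-j}=r-1)=\tfrac{q}{m}$ for each of the $m_0$ null groups, whence $gFDR\le q\,m_0/m$.

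The main obstacle is the pair of lemmas for the weighted, grouped sorted-$\ell_1$ norm underlying the last two paragraphs: the exact KKT threshold (selected $\Rightarrow u_j>\sigma w_j\lambda^{max}_R$, with the correct rank index) and the drop-one identity $Rg=R_{-j}+1$ together with the independence of $R_{-j}$ from $u_j$. These are the grouped/weighted analogues of the prox lemmas behind SLOPE's FDR proof; the delicate points are handling arbitrary positive weights $w_j$ and heterogeneous degrees of freedom $l_j$ (which enter only through the per-group $\Chi_{l_j}$ calibration), and the index bookkeeping when a coordinate is deleted and the penalty sequence is truncated.
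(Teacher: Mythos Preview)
Your approach is essentially the paper's: reduce to the standardized problem, collapse to a one-dimensional weighted SLOPE in the group norms, then run the Benjamini--Hochberg style bookkeeping via a threshold lemma and a drop-one lemma. The reduction, the uniqueness argument, and the tail bound $\mathbb P(u_j>\sigma w_j\lambda^{max}_r)\le qr/m$ are all correct and match the paper.

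There is, however, a concrete error in your drop-one comparison. You propose to solve the reduced problem on $(u_k)_{k\ne j}$ with the truncated sequence $(\lambda_1,\dots,\lambda_{m-1})$; the correct shift is $(\lambda_2,\dots,\lambda_m)$. With your truncation the identity $Rg=R_{-j}+1$ on $\{j\text{ selected}\}$ can fail: if $Rg=r$ and $j$ is selected, the other $r-1$ selected groups satisfy only $w_k^{-1}u_k>\sigma\lambda_r$, not $>\sigma\lambda_{r-1}$, so under the larger penalties $(\lambda_1,\dots,\lambda_{m-1})$ some of them may be killed and $R_{-j}<r-1$. Since your final sum needs $\sum_{r\ge1}\mathbb P(R_{-j}=r-1)=1$ exactly, this breaks the bound. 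The paper's Lemma~\ref{lemma3} uses the shifted sequence $\tilde\lambda=(\lambda_2,\dots,\lambda_m)$ and proves the \emph{equality} $\tilde R^j=r-1$ (both directions), which is what makes the telescoping work.

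A smaller point: your KKT sketch (``the $j$-th subgradient entry is bounded by $\lambda_R$'') is not quite enough on its own. The subdifferential of $v\mapsto J_\lambda(Wv)$ at $v^\star$ does not directly hand you $w_j\lambda_R$ as the bound for a zero coordinate without first knowing that the support of $v^\star$ consists of the indices with the $R$ largest values of $w_i^{-1}u_i$. The paper establishes this ordering property (Corollary~\ref{06021230}) and the two-sided threshold inequalities (Lemma~\ref{lemma1}) for SLOPE with a diagonal design $D=W^{-1}$, and then Lemma~\ref{lemma2} gives the precise equivalence $\{v^\star_j\ne0,\,R=r\}=\{w_j^{-1}u_j>\sigma\lambda_r,\,R=r\}$. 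Your last paragraph correctly flags that these lemmas are the crux; just be aware that the weighted case genuinely requires reproving them (as the paper does in Appendix~\ref{ap:diagonal}), and that the drop-one lemma must use the shifted $\lambda$.
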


\begin{proof}
We will start with the standardized version of the gSLOPE problem, given by \eqref{24111029}. Based on results discussed in Appendix \ref{Sec:alt_rep}, we can consider an equivalent formulation of \eqref{24111029}
\begin{equation}
\label{17022353}
\left\{
\begin{array}{l}
c^*=\argmin{c}\left\{\frac12\sum_{i=1}^m\big(\|\widetilde{y}_{\II_i}\|_2-w^{-1}_ic_i\big)^2+J_{\sigma\lambda}(c)\right\}\\
\|X_{I_i}\beta^\ES{gS}_{I_i}\|_2=c^*_i \big(w_i\|\widetilde{y}_{\II_i}\|_2\big)^{-1}\widetilde{y}_{\II_i} ,\quad i=1,\ldots,m\;\;,
\end{array}
\right.
\end{equation}
where $\y=\X^T y$ has a multivariate normal distribution $\mathcal{N}\big(\widetilde{\beta},\ \sigma^2 \mathbf{I}_{\widetilde{p}}\big)$ with $\widetilde{\beta}_{\II_i}=R_i\beta_{I_i}$. The uniqueness of $\XI{\beta^\ES{gS}}$ follows simply from the uniqueness of $c^*$ in \eqref{17022353}. Define random variables ${R:=\big | \big\{i:\ c_i^*\neq 0\big\}\big |}$ and ${V:=\big |\big\{i:\ \|\widetilde{\beta}_{\II_i}\|_2=0,\quad c^*_i \neq 0\big\}\big |}$. Clearly, then $Rg=R$ and $Vg=V$. Consequently, it is enough to show that
$$\mathbb{E}\left[\frac{V}{\max\{R, 1\}}\right]\leq q \cdot \frac{m_0}{m}.$$
Without loss of generality we can assume that groups $I_1,\dots,I_{m_0}$ are truly irrelevant, which gives $\|\widetilde{\beta}_{\II_1}\|_2=\ldots=\|\widetilde{\beta}_{\II_{m_0}}\|_2=0$ and $\|\widetilde{\beta}_{\II_j}\|_2>0$ for $j>m_0.$ Suppose now that $r,i$ are some fixed indices from $\{1,\ldots,m\}$. From definition of $\lambda^{max}_r$
\begin{equation}
\lambda^{max}_r\geq \frac1{w_i}F^{-1}_{\chi_{l_i}}\left(1-\frac{qr}m\right)\ \Longrightarrow\ 1-F_{\chi_{l_i}}\left(\lambda^{max}_rw_i\right)\leq\frac{qr}m.
\end{equation}
Now, let us fix $i\leq m_0$. Since $\sigma^{-1}\|\widetilde{y}_{\II_i}\|_2\sim\chi_{l_i}$ we have
\begin{equation}
\label{09022303}
\mathbb{P}\left(w_i^{-1}\|\widetilde{y}_{\II_i}\|_2\geq \sigma\lambda^{max}_r\right)=\mathbb{P}\left(\sigma^{-1}\|\widetilde{y}_{\II_i}\|_2\geq \lambda^{max}_rw_i\right)=1-F_{\chi_{l_i}}\left(\lambda^{max}_rw_i\right)\leq \frac{qr}m.
\end{equation}
Now, denote by $\widetilde R^i$ the number of nonzero coefficients in SLOPE estimate \eqref{17022353} after eliminating $i$th group of explanatory variables. Thanks to lemmas \ref{lemma2} and \ref{lemma3}, we immediately get
\begin{equation}
\label{09022302}
\big\{\iI{\widetilde{y}}:\ c^*_i\neq 0\textrm{ and }R=r\big\}\subset\big\{\iI{\widetilde{y}}:\ w_i^{-1}\|\widetilde{y}_{\II_i}\|_2>\sigma\lambda^{max}_r\textrm{ and }\widetilde{R}^i=r-1\big\},
\end{equation}
which together with  (\ref{09022303}) raises
\begin{equation}
\begin{split}
\mathbb{P}(c_i^*\neq0\textrm{ and }R=r)\ & \leq\ \mathbb{P}\left(w_i^{-1}\|\widetilde{y}_{\II_i}\|_2>\sigma\lambda^{max}_r\textrm{ and }\widetilde{R}^i=r-1\right)\\
&=\ \mathbb{P}\left(w_i^{-1}\|\widetilde{y}_{\II_i}\|_2>\sigma\lambda^{max}_r\right) \mathbb{P}\left(\widetilde{R}^i=r-1\right)\\
&\leq\ \frac{qr}m\mathbb{P}\left(\widetilde{R}^i=r-1\right).
\end{split}
\end{equation}
Therefore
\begin{equation}\label{eq:thm_last}
\begin{split}
&\mathbb{E}\left[\frac{V}{\max\{R, 1\}} \right]=\sum_{r=1}^m\mathbb{E}\left[\frac{V}{r}\mathds{1}_{\{R=r\}}\right] = \sum_{r=1}^m\frac1r\mathbb{E}\left[\sum_{i=1}^{m_0}\mathds{1}_{\{c^*_i\neq0\}}\mathds{1}_{\{R=r\}}\right]=\\
&\sum_{r=1}^m\frac1r\sum_{i=1}^{m_0}\mathbb{P}\left(c^*_i\neq0\textrm{ and }R=r\right)\ \leq\ \sum_{i=1}^{m_0}\frac qm\sum_{r=1}^m\mathbb{P}\big(\widetilde{R}^i=r-1\big) = \frac{qm_0}m,
\end{split}
\end{equation}
which finishes the proof.
\end{proof}

Figure \ref{19091846} illustrates the performance of gSLOPE under the design matrix $X=\mathbf{I}_p$ (hence the rank of $i$ group, $l_i$, coincides with its size), with $p=5000$. In  Figure \ref{19091846} (a) all groups are of the same size $l=5$, while in Figures \ref{19091846} (b) - (d) the explanatory variables are clustered into $m=1000$ groups  of sizes from the set $\{3, 4, 5, 6, 7\}$; $200$ groups of each size. Each coefficient of $\beta_{I_i}$, in truly relevant group $i$, was generated independently from $U[0.1, 1.1]$ distribution and then $\beta_{I_i}$ was scaled such that $\big(\XI{\beta}\big)_i=a\sqrt{l_i}$. Parameter $a$ was selected to satisfy the condition $\frac1m\sum_{i=1}^ma\sqrt{l_i} = \frac1m\sum_{i=1}^mB(m,l_i)$, where $B(m_i,l)$ is defined in (\ref{05031641}). Such signals are comparable to the maximal noise and can be detected with moderate power, which allows for a meaningful comparison between different methods. 

\begin{figure}[h!]
\includegraphics[width=\textwidth]{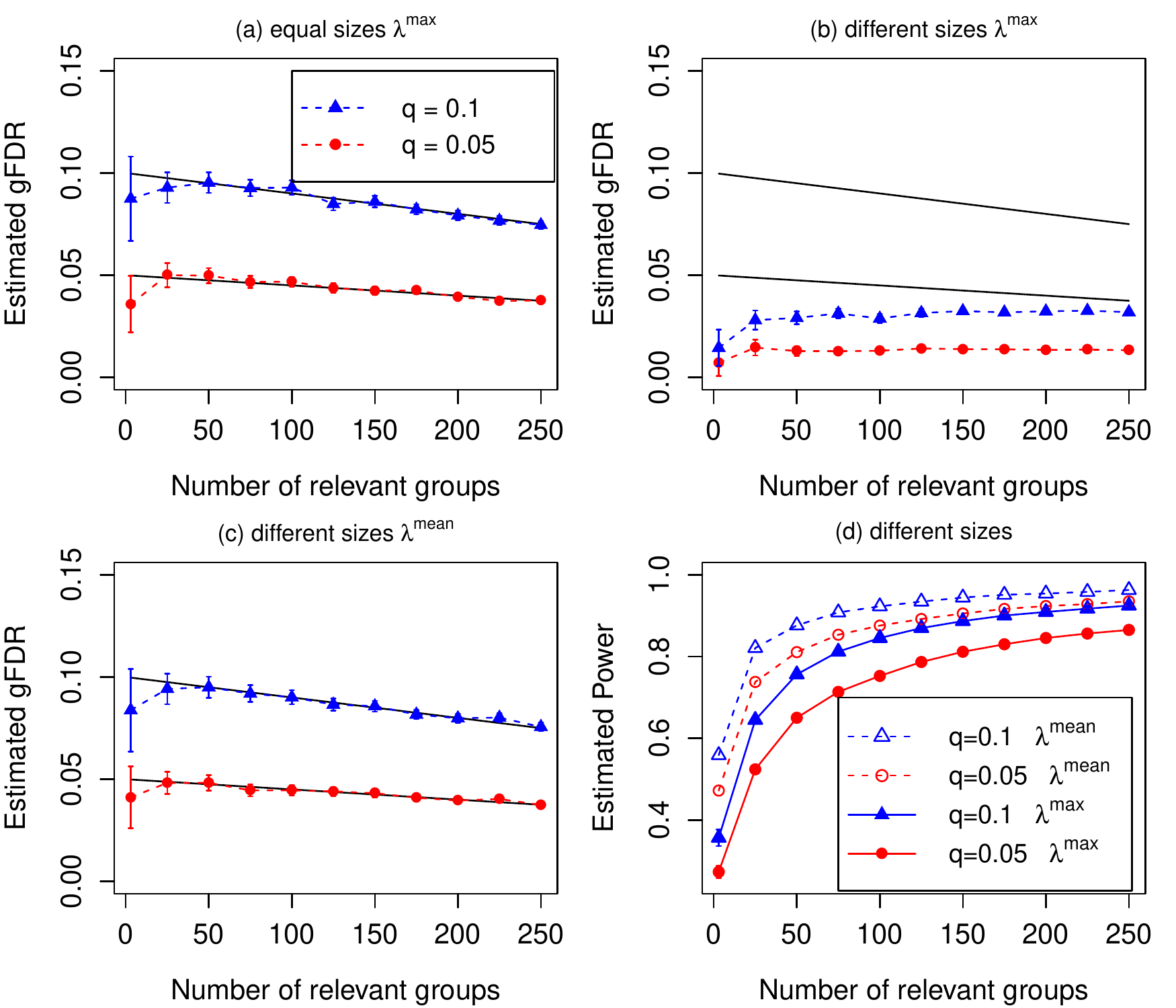}
 \caption{Orthogonal situation with $n=p=5000$ and $m=1000$. In (a) all groups are of the same size $l=5$, while in (b)-(d) there are 200 groups of each of sizes  $l_i\in \{3,4,5,6,7\}$.  In (a) and (b) gSLOPE works with the regularizing sequence $\lambda^{max}$, while in (c) and (d) $\lambda^{mean}$ is used. For each target gFDR level and true support size, $300$ iterations were performed. Bars correspond to $\pm 2$SE. Black straight lines represent the "nominal" gFDR level $q\cdot\big((m-k)/m\big)$, for $k$ being true support size. Weights are defined as $w_i:=\sqrt{l_i}$.
 }
\label{19091846}
\end{figure}

Figure \ref{19091846} (a) illustrates that the sequence $\lambda^{max}$ allows to keep gFDR very close to the "nominal" level when groups are of the same size. However, Figure \ref{19091846} (b) shows that for groups of different size $\lambda^{max}$ is rather conservative, i.e. the achieved gFDR is significantly lower than assumed. This suggests that the shrinkage (dictated by $\lambda$) could be slightly decreased, such that the method gets more power and still achieves the gFDR below the assumed level. Returning to the proof of Theorem \ref{gFDRcontrol}, we can see that for each $i\in\{1,\ldots,m\}$ we have 
\begin{equation}\label{con_max}
1-F_{\Chi_{l_i}}\left(\lambda^{max}_rw_i\right)\leq\frac{qr}m\;\;,
\end{equation}
with equality holding only for $i$ being the index of the maximum in (\ref{lambda_max}). In the result the inequality in (\ref{eq:thm_last}) is usually strict and the true gFDR might be substantially smaller than the nominal level. 
The natural  relaxation of (\ref{con_max}) is to require only that
\begin{equation}
\label{08121449}
\sum_{i=1}^m\Big(1-F_{w^{-1}_i\Chi_{l_i}}(\lambda_r)\Big)\leq qr.
\end{equation}
Replacing the inequality in (\ref{08121449}) by equality yields the strategy of choosing the relaxed $\lambda$ sequence  
\begin{equation}
\label{08191453}
\lambda^{mean}_r:= \overline{F}^{-1}\left(1-\frac{qr}m\right)\quad\textrm{for}\quad \overline{F}(x): = \frac1m\sum_{i=1}^mF_{w^{-1}_i\Chi_{l_i}}(x),\quad r\in\{1,\ldots,m\},
\end{equation}
where $F_{w^{-1}_i\Chi_{l_i}}$ is the cumulative distribution function of scaled chi distribution with $l_i$ degrees of freedom and scale $\mathcal{S}=w_i^{-1}$. In Figure \ref{19091846}c we present estimated gFDR, for tuning parameters given by (\ref{08191453}).  The results suggest that with relaxed version of tuning parameters, we can still achieve the "average" gFDR control, where the "average" is with respect to the uniform distribution over all possible signal placements. As shown in Figure \ref{19091846}d, application of $\lambda^{mean}$ allows to achieve a substantially larger power than the one provided by $\lambda^{max}$. Such a strategy could be especially important in situation, when differences between the smallest and the largest quantiles (among distributions $w_i^{-1}\Chi_{l_i}$) are relatively large and all groups have the same prior probability of being relevant. 

\subsection{The accuracy of estimation}

Up until this point, we have only considered the testing properties of gSLOPE. Though originally proposed to control the FDR, surprisingly, SLOPE enjoys appealing estimation properties as well \cite{WE}. It thus would be desirable to extend this link between testing and estimation for gSLOPE. In measuring the deviation of an estimator from the ground truth $\beta$, as earlier, we focus on the group level instead of an individual. Accordingly, here we aim to estimate parts of variance of $Y$ explained by every group, which are contained in the vector $\llbracket \beta\rrbracket_{X,I}: = \big(\|X_{I_1} \beta_{I_1}\|_2, \ldots, \|X_{I_m} \beta_{I_m}\|_2\big)^\mathsf{T} $ or $\iI{\widetilde{\beta}}: = \big(\|\widetilde{\beta}_{\II_1}\|_2, \ldots, \|\widetilde{\beta}_{\II_m}\|_2\big)^\mathsf{T} $, equivalently. For illustration purpose, we employ the setting described as follows. Imagine that we have a sequence of problems with the number of groups $m$ growing to infinity: the design $X$ is orthonormal at groups level; ranks of submatrices $X_{I_i}$, $l_i$, are bounded, that is, $\max l_i \le l$ for some constant integer $l$; denoting by $k \ge 1$ the sparsity level (that is, the number of relevant groups), we assume the asymptotics $k/m \rightarrow 0$. Now we state our minimax theorem, where we write $a \sim b$ if $a/b \rightarrow 1$ in the asymptotic limit, and $\|\XI{\beta}\|_0$ denotes the number of nonzero entries of $\XI{\beta}$. The proof makes use of the same techniques for proving Theorem 1.1 in \cite{WE} and is deferred to the Appendix.
\begin{theorem}\label{minimax}
Fix any constant $q \in (0, 1)$, let $w_i = 1$ and $\lambda_i = F_{\chi_l}^{-1}(1 - qi/m)$ for $i=1,\ldots,m$. Under the preceding conditions, gSLOPE is asymptotically minimax over the nearly black object $\big\{\beta: \big\|\XI{\beta}\big\|_0 \le k\big\}$, i.e.,
\[
\sup_{\|\XI{\beta}\|_0 \le k} \mathbb{E} \left( \Big\| \XI{\beta^\ES{gS}} - \XI{\beta} \Big\|_2^2\right) \sim  \inf_{\widehat\beta}\sup_{\|\XI{\beta}\|_0 \le k} \mathbb{E} \left( \Big\| \XI{\widehat\beta} - \XI{\beta} \Big\|_2^2\right),
\] 
where the infimum is taken over all measurable estimators $\widehat\beta(y, X)$.
\end{theorem}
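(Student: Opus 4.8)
The plan is to use the group standardization of Section~\ref{subsec:gs2711944} to collapse the statement into an asymptotic minimaxity result for ordinary SLOPE acting on a vector of group-norm statistics, and then to transplant the argument behind Theorem~1.1 of \cite{WE} to that reduced model. First I would recall that, under the group-orthonormal design with $w_i=1$, the equivalent formulation \eqref{17022353} shows $c^*$ to be exactly the SLOPE solution, with penalty $\sigma\lambda$, applied to the observation vector $Z=(Z_1,\ldots,Z_m)^\mathsf{T}$ where $Z_i:=\|\y_{\II_i}\|_2$, and that $\big(\XI{\beta^\ES{gS}}\big)_i=\|c^\ES{gS}_{\II_i}\|_2=c^*_i$. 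Writing $\mu_i:=\|\Beta_{\II_i}\|_2=\big(\XI{\beta}\big)_i$, the loss becomes
\[
\Big\|\XI{\beta^\ES{gS}}-\XI{\beta}\Big\|_2^2=\sum_{i=1}^m\big(c^*_i-\mu_i\big)^2,
\]
where the $Z_i$ are independent and $Z_i/\sigma$ is a noncentral $\chi$ variable with $l_i\le l$ degrees of freedom and noncentrality $\mu_i/\sigma$. Thus the theorem is equivalent to asymptotic minimaxity of SLOPE in a sequence model with independent $\chi$-type noise and bounded degrees of freedom.

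For the upper bound I would carry over the analysis of \cite{WE}, whose only model-specific inputs are the quantile behaviour of the penalty sequence and tail/concentration bounds for the noise. The key point is that the chosen penalties agree with the Gaussian ones to leading order: from the tail expansion $\mathbb{P}(\chi_l>t)=(1+o(1))\,t^{\,l-2}e^{-t^2/2}/\big(2^{l/2-1}\Gamma(l/2)\big)$ one obtains, uniformly over $1\le i\le k$,
\[
\lambda_i=F^{-1}_{\chi_l}\!\left(1-\frac{qi}{m}\right)=\sqrt{2\log(m/i)}\,(1+o(1)),\qquad \sum_{i=1}^k\lambda_i^2\sim 2k\log(m/k),
\]
exactly matching the $\Phi^{-1}$ sequence used in \cite{WE}. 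Feeding this into the Su--Cand\`es comparison argument (controlling the SLOPE risk by the penalty budget plus the contribution of the largest order statistics of the noise), together with sub-Gaussian-type concentration for $\chi_{l_i}$ with bounded $l_i$, should yield
\[
\sup_{\|\XI{\beta}\|_0\le k}\mathbb{E}\Big(\big\|\XI{\beta^\ES{gS}}-\XI{\beta}\big\|_2^2\Big)\le 2\sigma^2 k\log(m/k)\,(1+o(1)).
\]
The positive null mean $\mathbb{E}\,Z_i\approx\sigma\sqrt{l_i}$ of the $\chi$ statistics (which has no Gaussian analogue) is harmless because $\lambda_i>\sqrt{l}$ on the relevant range, so it is absorbed into the $o(1)$; similarly the Jensen bias $\mathbb{E}\,Z_i-\mu_i=O(\sigma^2/\mu_i)$ on strong signals contributes only $O(\sigma^2k)=o\big(\sigma^2 k\log(m/k)\big)$.

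For the matching lower bound I would use the standard sparse two-point/prior construction: place $k$ signals of common magnitude $\tau=\sigma\sqrt{2\log(m/k)}\,(1+o(1))$ at uniformly random positions among the $m$ groups and bound the Bayes risk from below as in \cite{WE}. Because the $\chi_{l_i}$ tails coincide with the Gaussian tail to leading exponential order, the detection-boundary computation produces the same leading constant, giving
\[
\inf_{\widehat\beta}\ \sup_{\|\XI{\beta}\|_0\le k}\mathbb{E}\Big(\big\|\XI{\widehat\beta}-\XI{\beta}\big\|_2^2\Big)\ge 2\sigma^2 k\log(m/k)\,(1+o(1)),
\]
and combining the two displays establishes the claimed asymptotic equivalence.

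The main obstacle I anticipate lies not in the reduction but in the passage from Gaussian to $\chi$ noise inside the \cite{WE} machinery. One must show that the $\chi_l$ quantiles and tails match the Gaussian ones with error terms that are uniform over the \emph{entire} range $1\le i\le m$ (and not merely for fixed $i$), since the Su--Cand\`es argument couples all coordinates simultaneously, and one must verify that the polynomial prefactor $t^{\,l-2}$ together with the positive null drift of $\chi_l$ genuinely fall into lower-order terms. Re-deriving the concentration and resolvent/comparison inequalities of \cite{WE} with these $\chi$-specific estimates is where the real work is concentrated; once they are in place, the orthogonality-at-group-level reduction makes the remainder of the proof formally identical to the scalar case.
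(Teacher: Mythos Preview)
Your reduction to the SLOPE problem on $Z=(\|\y_{\II_i}\|_2)_{i\le m}$ with independent noncentral $\chi_{l_i}$ entries is exactly the starting point the paper adopts, and your upper-bound strategy---splitting the risk into the penalty budget $\sum_{i\le k}\lambda_i^2\sim 2k\log(m/k)$ plus the off-support contribution, with $\chi_l$ quantile asymptotics standing in for the Gaussian ones---is essentially the paper's as well. The paper makes the off-support step concrete via three lemmas that partition $\sum_{i>k}\widehat\zeta_i^2$ into ranges $[1,Ak]$, $[Ak,\alpha m]$, $[\alpha m,m-k]$ and bound each separately; the on-support step uses the elementary inequality $|\zeta_i-\mu_i|\le 2\|\xi_i\|_2$ obtained from the representation $\zeta_i^2=(\xi_{i1}+\mu_i)^2+\xi_{i2}^2+\cdots+\xi_{il_i}^2$, which is more direct than your Jensen-bias heuristic but serves the same purpose.

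The substantive difference is in the lower bound. You propose redoing the sparse-prior Bayes-risk computation directly in the $\chi$ model, relying on the fact that the $\chi_l$ tail matches the Gaussian tail to leading exponential order. This could be carried through, but the paper sidesteps it entirely: since the raw data $\y$ is Gaussian, one restricts the parameter space to vectors $\Beta$ with at most one nonzero coordinate per group (the first index in each $\II_i$, say). On that sub-model the problem \emph{is} the classical Gaussian sequence model of length $m$ with sparsity $k$, whose minimax risk is $(1+o(1))\,2k\log(m/k)$ by \cite{donoho1994minimax}; the extra Gaussian coordinates in each block carry no information and are irrelevant. No $\chi$-specific detection-boundary analysis is needed. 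Your route would incur precisely the uniform-in-$i$ $\chi$ tail and quantile estimates you flag as the main obstacle, whereas the paper's embedding confines all $\chi$-specific work to the upper bound and obtains the lower bound for free.
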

Notably, in this theorem the choice of $\lambda_i$ does not assume the knowledge of sparsity level. Or putting it differently, in stark contrast to gLASSO, gSLOPE is adaptive to a range of sparsity in achieving the exact minimaxity. Combining Theorems~\ref{gFDRcontrol} and \ref{minimax}, we see the remarkable link between FDR control and minimax estimation also applies to gSLOPE \cite{ABDJ, WE}. While it is out of the scope of this paper, it is of great interest to extend this minimax result to general design matrices.

\subsection{The impact of chosen weights}
In this subsection we will discuss the influence of chosen weights, $\{w_i\}_{i=1}^m$, on results. Let $I=\{I_1,\ldots,I_m\}$ be a given partition into groups and $l_1,\ldots,l_m$ be ranks of submatrices $X_{I_i}$. Assume the orthogonality at group level, i.e., that it holds $X_{I_i}^\mathsf{T}X_{I_j}=0$, for $i\neq j$, and suppose that $\sigma=1$. The support of $\XI{\beta}$ coincides with the support of vector $c^*$ defined in \eqref{17022353}, namely 
\begin{equation}
c^* = \argmin{c}\ \frac12\Big\|\iI{\y}-W^{-1}c\Big\|_2^2+J_{\lambda}(c),
\end{equation} 
where $W^{-1}$ is a diagonal matrix with positive numbers $w_1^{-1},\ldots,w_m^{-1}$ on the diagonal. Suppose now, that $c^*$ has exactly $r$ nonzero coefficients. From Corollary \ref{06021230}, these indices are given by $\{\pi(1),\ldots,\pi(r)\}$, where $\pi$ is permutation which orders $W^{-1}\iI{\y}$. Hence, the order of realizations $\big\{w_i^{-1}\|\widetilde{y}_{\II_i}\|_2\big\}_{i=1}^m$ decides about the subset of groups labeled by gSLOPE as relevant. Suppose that groups $I_i$ and $I_j$ are truly relevant, i.e., $\|\Beta_{\II_i}\|_2>0$ and $\|\Beta_{\II_j}\|_2>0$. The distributions of $\|\y_{\II_i}\|_2$ and $\|\y_{\II_j}\|_2$ are noncentral $\Chi$ distributions, with $l_i$ and $l_j$ degrees of freedom, and the noncentrality parameters equal to $\|\Beta_{\II_i}\|_2$ and $\|\Beta_{\II_j}\|_2$, respectively. Now, the expected value of the noncentral $\Chi$ distribution could be well approximated by the square root of the expected value of the noncentral $\Chi^2$ distribution, which gives
$$\mathbb{E}\big(w_i^{-1}\|\y_{\II_i}\|_2\big)\approx w_i^{-1}\sqrt{\mathbb{E}\big(\|\y_{\II_i}\|^2_2\big)}=w_i^{-1}\sqrt{l_i+\|\Beta_{\II_i}\|_2^2}.$$
Therefore, roughly speaking, truly relevant groups $I_i$ and $I_j$ are treated as comparable, when it occurs ${l_i/w_i^2+\|\Beta_{\II_i}\|_2^2/w_i^2\approx l_j/w_j^2+\|\Beta_{\II_j}\|_2^2/w_j^2}$. This gives us the intuition about the behavior of gSLOPE with the choice $w_i=\sqrt{l_i}$ for each $i$. Firstly, gSLOPE treats all irrelevant groups as comparable, i.e. the size of the group has a relatively small influence on it being selected as a false discovery. Secondly,  gSLOPE treats two truly relevant groups as comparable, if groups effect sizes satisfy the condition $\big(\XI{\beta}\big)_i/\big(\XI{\beta}\big)_j\approx\sqrt{l_i}/\sqrt{l_j}$. The derived condition could be recast as $\|X_{I_i}\beta_{I_i}\|_2^2/l_i \approx \|X_{I_j}\beta_{I_j}\|_2^2/l_j$. This gives a nice interpretation: with the choice $w_i:=\sqrt{l_i}$, gSLOPE treats two groups as comparable, when these groups have similar squared effect group sizes per coefficient. One possible idealistic situation, when such a property occurs, is when all $\beta_i$'s in truly relevant groups are comparable.

In Figure \ref{21090909} we see that when the condition $\big(\XI{\beta}\big)_i/\big(\XI{\beta}\big)_j=\sqrt{l_i}/\sqrt{l_j}$ is met, the fractions of groups with different sizes in the selected truly relevant groups (STRG) are approximately equal. To investigate the impact of selected weights on the set of discovered groups, we performed simulations with different settings, namely we used $w_i=1$ and $w_i = l_i$ (without changing other parameters). With the first choice, larger groups are penalized less than before, while the second choice yields the opposite situation. This is reflected in the proportion of each groups in STRG (Figure \ref{21090909}). 
\begin{figure}[h!]
\centering
\begin{subfigure}{.33\textwidth}
  \centering
	\includegraphics[width=1\linewidth]{./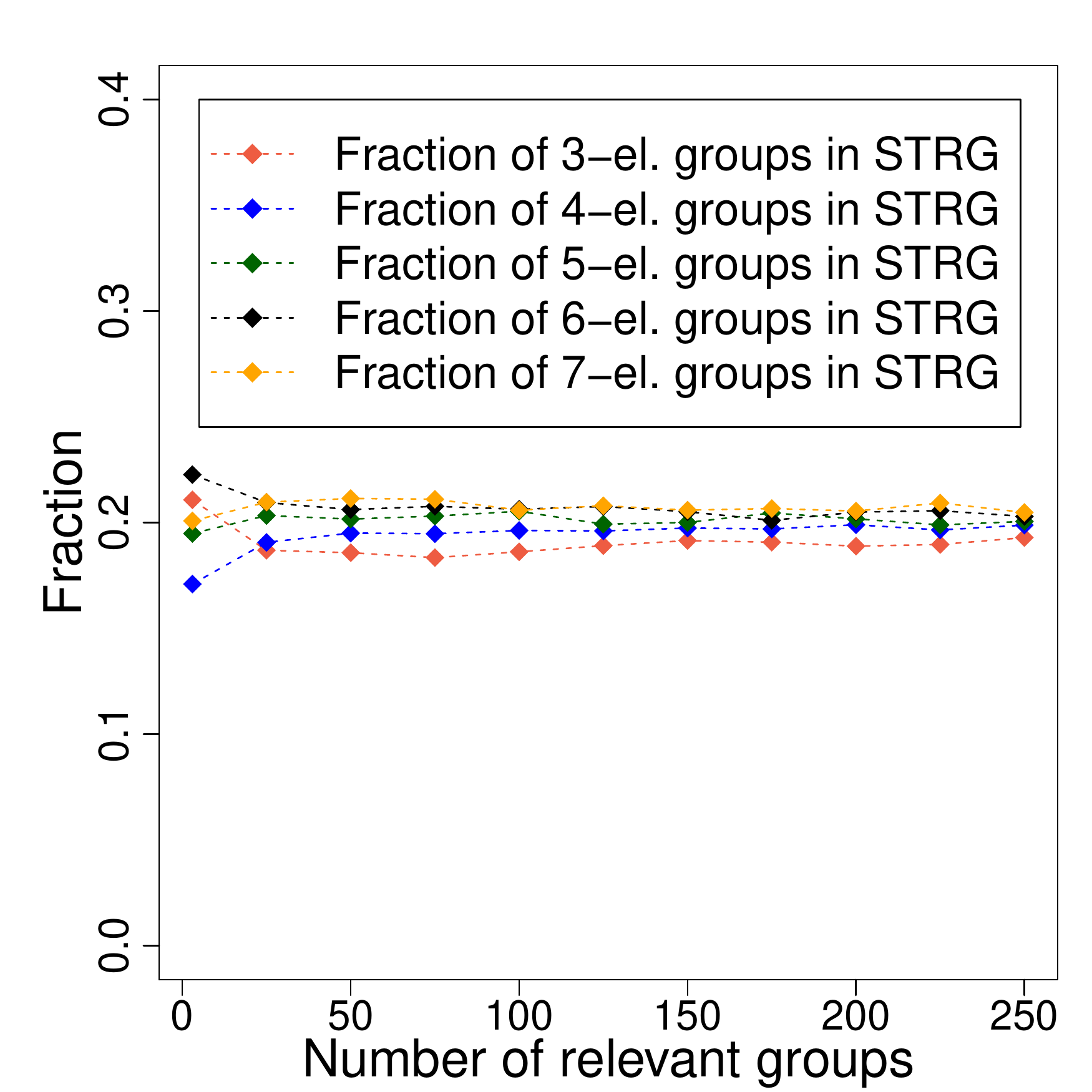}
  \caption{Structure of STRG, $w_i:=\sqrt{l_i}$}
\end{subfigure}%
\begin{subfigure}{.33\textwidth}
  \centering
  \includegraphics[width=1\linewidth]{./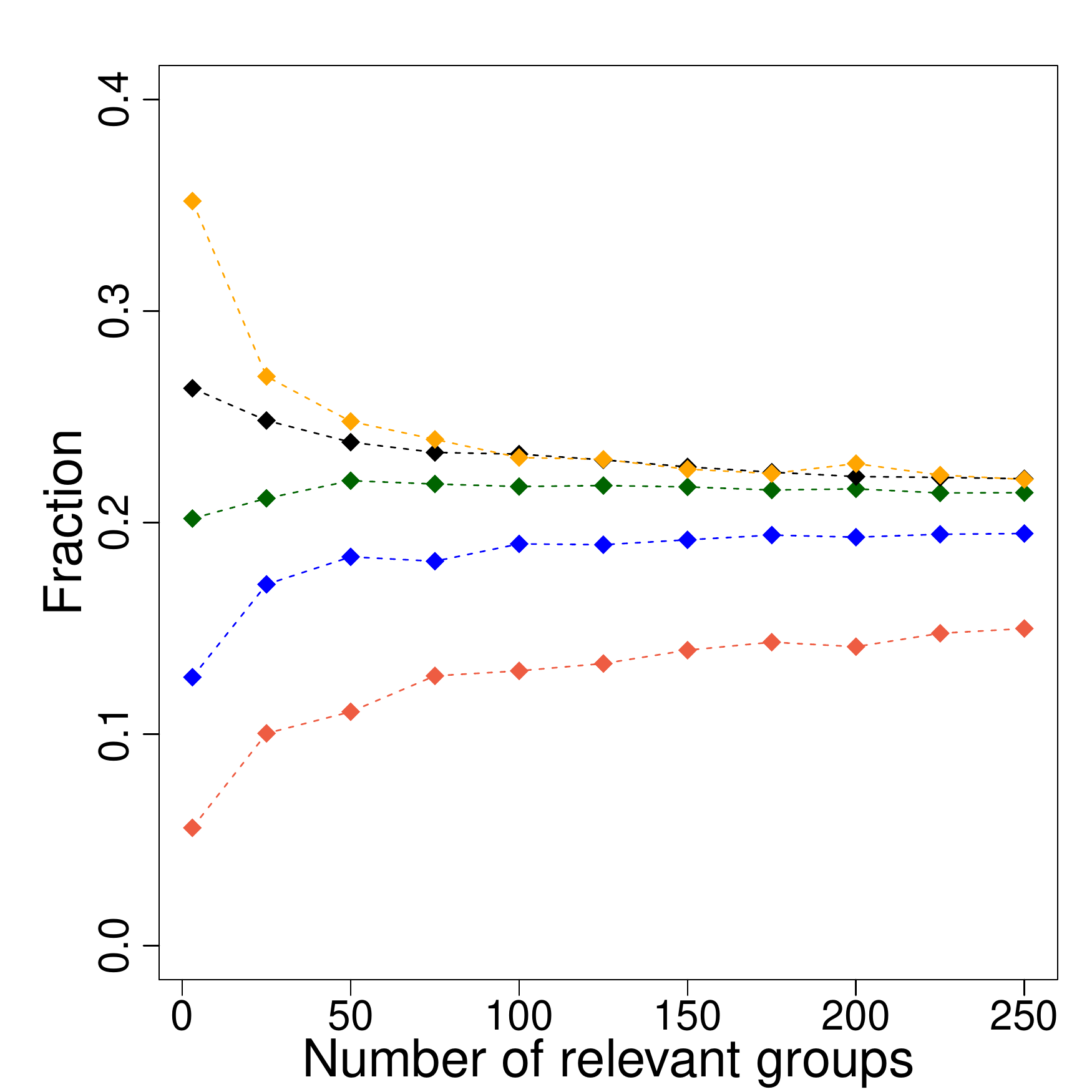}
  \caption{Structure of STRG, $w_i:=1$}
\end{subfigure}%
\begin{subfigure}{.33\textwidth}
  \centering
  \includegraphics[width=1\linewidth]{./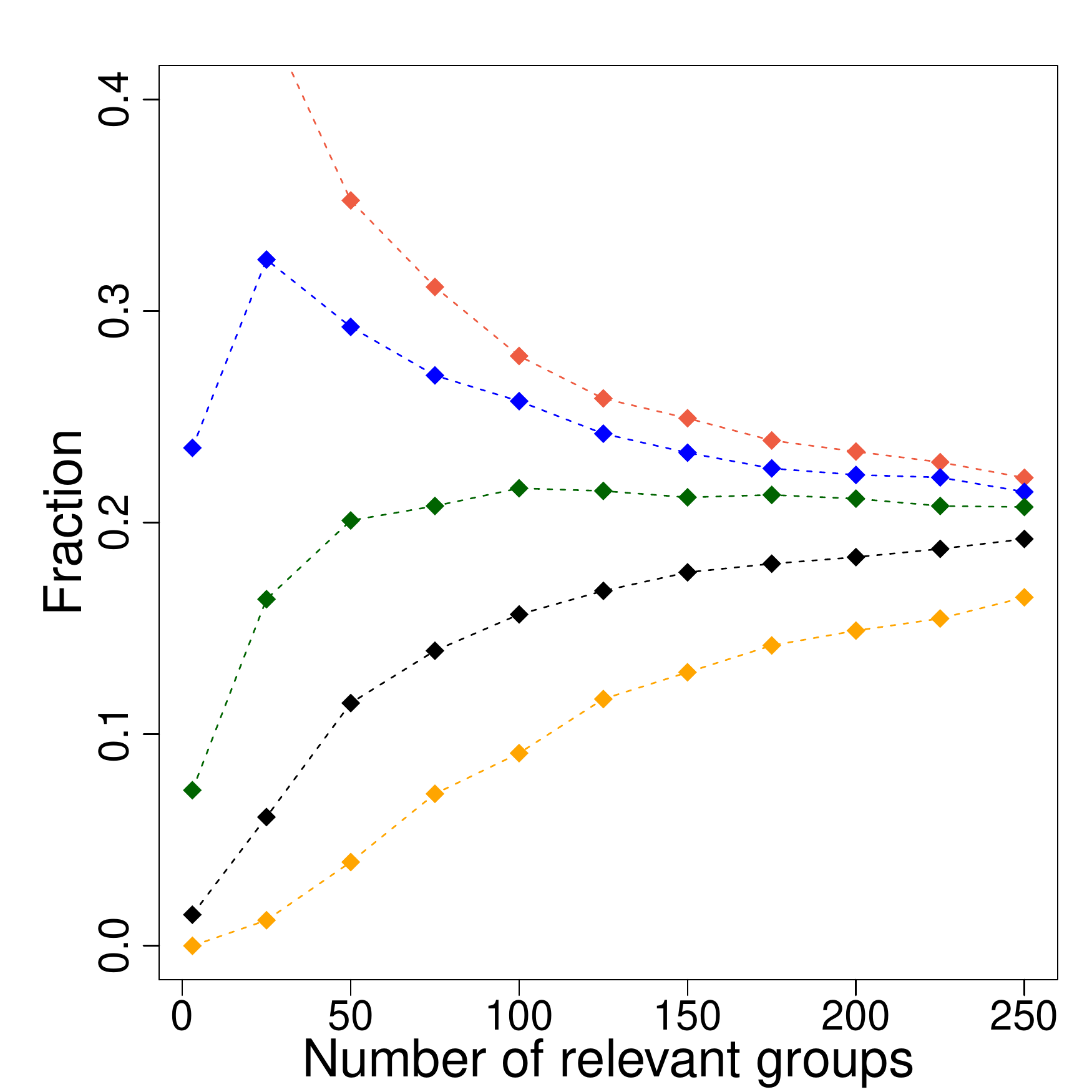}
  \caption{Structure of STRG, $w_i:=l_i$}
\end{subfigure}%
\caption{Fraction of each group sizes in selected truly relevant groups (STRG). Beyond the weights, this simulation was conducted with the same setting as in experiments summarized in Figure 1 for $\lambda^{mean}$. In particular, for truly relevant groups $i$ and $j$, it occurs $\big(\XI{\beta}\big)_i/\big(\XI{\beta}\big)_j=\sqrt{l_i}/\sqrt{l_j}$. Target gFDR level was fixed as $0.05$.}
\label{21090909}
\end{figure}
The values of gFDR are very similar under all choices of weights. 

\subsection{Independent groups and unknown $\sigma$}

The assumption that variables in different groups are orthogonal to each other can be satisfied only in rare situations of specifically designed experiments. However, in a variety of applications one can assume that variables in different groups are independent. Such a situation occurs for example in the context of identifying influential genes using distant genetic markers, whose genotypes can be considered as stochastically independent. In this case a group can be formed by clustering  dummy variables corresponding to different genotypes of a given marker.  
Though the difference between stochastic independence and algebraic orthogonality seems rather small, it turns out that small sample correlations between independent regressors together with the shrinkage of regression coefficients lead to magnifying the effective noise and require the adjustment of the tuning sequence $\lambda$ (see \cite{FDR_LASSO} for discussion of this phenomenon in the context of LASSO).  Concerning regular SLOPE, this problem was addressed by heuristic modification of $\lambda$, proposed in \cite{SLOPE} and \cite{SLOPE2}. This modified sequence was calculated based upon the assumption that explanatory variables are randomly sampled from the Gaussian distribution. However, simulation results from \cite{SLOPE2} illustrate that it controls FDR also in case when the columns of the design matrix correspond to additive effects of independent SNPs and the number of causal genes is moderately small.  

Following  ideas for regular SLOPE presented in \cite{SLOPE2}, we propose the Procedure \ref{11091537} for calculating the sequence of tuning parameters in case when variables in different groups are independent. The heuristics justifying this choice are substantially more technically involved than the heuristics for regular SLOPE and their details are presented in the Appendix \ref{subsec:06232149}. Procedure \ref{11091537} is based on the sequence of $\lambda^{mean}$ but the version for the conservative choice $\lambda^{max}$ follows analogously. The proposed sequence of tuning parameters flattens out for a certain value $k^{\star}$ dependent on $q, n$ and $l_1,\ldots, l_m$. It  is supposed to control gFDR when the number of identified groups is not much larger than $k^{\star}$.

\begin{algorithm}
{\fontsize{9pt}{5pt}\selectfont
  \caption{Sequence of tuning parameters for independent groups}
	\label{11091537}
  \begin{algorithmic}    
		\State \textbf{input:} $q\in (0,1)$,\ \ $w_1,\ldots,w_m>0$,\ \ $p,\ n,\ m,\ l_1,\ldots, l_m \in\mathbb{N}$
		\State $\lambda_i:=\overline{F}^{-1}\left(1-\frac qm\right),\quad$ for $\quad\overline{F}(x): = \frac1m\sum_{i=1}^mF_{w^{-1}_i\chi_{l_i}}(x)$;
		\State  \textbf{for} $i\in\{2,\ldots,m\}$:
		\State \indent $\lambda^S: = (\lambda_1,\ldots, \lambda_{i-1})^\mathsf{T}$;
		\State \indent $\mathcal{S}_j: = \sqrt{\frac{n-l_j(i-1)}n+\frac{w_j^2\|\lambda^S\|^2_2}{n-l_j(i-1)-1}},\qquad$ for $j\in\{1,\ldots,m\}$;
		\State \indent $\lambda^*_i:=\overline{F}^{-1}_{\mathcal{S}}\left(1-\frac {qi}m\right),\quad$ for $\quad\overline{F}_{\mathcal{S}}(x): =  \frac1m\sum_{j=1}^m F_{\mathcal{S}_jw_j^{-1}\chi_{l_j}}(x)$;
		\State \indent if $\lambda^*_i\leq \lambda_{i-1}$, then put $\lambda_i:=\lambda^*_i$. Otherwise, stop the procedure and put $\lambda_j:=\lambda_{i-1}$ for $j\geq i$;
		\State  \textbf{end for}
  \end{algorithmic}
	}
\end{algorithm}

Up until this moment, we have used $\sigma$ in gSLOPE optimization problem, assuming that this parameter is known . However, in many applications $\sigma$ is unknown and its estimation is an important issue. When $n>p$, the standard procedure is to use the unbiased estimator of $\sigma^2$, $\hat{\sigma}_{\textnormal{\tiny OLS}}^2$, given by
\begin{equation}
\label{11091832}
\hat{\sigma}_{\textnormal{\tiny OLS}}^2:=\big(y-X\beta^{\textnormal{\tiny OLS}}\big)^\mathsf{T}\big(y-X\beta^{\textnormal{\tiny OLS}}\big)/(n-p),\textrm{ for }\beta^{\textnormal{\tiny OLS}}:=(X^\mathsf{T}X)^{-1}X^\mathsf{T}y.
\end{equation}
For the target situation, with $p$ much larger than $n$, such an estimator can not be used. To estimate $\sigma$ we will therefore apply the procedure which was dedicated for this purpose in \cite{SLOPE2} in the context of SLOPE. Below we present algorithm adjusted to gSLOPE (Procedure \ref{11091824}).
\begin{algorithm}
{\fontsize{9pt}{5pt}\selectfont
  \caption{gSLOPE with estimation of $\sigma$}
	\label{11091824}
  \begin{algorithmic}
		\State \textbf{input:} $y$,\ $X$ and $\lambda$ (defined for some fixed $q$)
		\State  \textbf{initialize:} $S_+=\emptyset$;
		\State  \textbf{repeat}
		\State  \indent $S=S_+$;
		\State  \indent \textrm{compute RSS obtained by regressing }$y${ onto variables in }$S$;
		\State  \indent set $\hat{\sigma}^2=RSS/(n-|S|-1)$;
		\State  \indent compute the solution $\beta^{\textnormal{\tiny gS}}$ to gSLOPE with parameters $\hat{\sigma}$ and sequence $\lambda$;
		\State  \indent set $S_+=\operatorname{supp}(\beta^{\textnormal{\tiny gS}})$;
		\State  \textbf{until} $S_+=S$
  \end{algorithmic}
	}
\end{algorithm}
The idea standing behind the procedure is simple. The gSLOPE property of producing sparse estimators is used, and in each iteration columns in design matrix are first restricted to support of $\beta^{\textnormal{\tiny gS}}$, so that the number of rows exceeds the number of columns and (\ref{11091832}) can be used. Algorithm terminates when gSLOPE finds the same subset of relevant variables as in the preceding iteration.

To investigate the performance of gSLOPE under the Gaussian design and various group sizes, we performed simulations with $1000$ groups. Their sizes were drawn from the binomial distribution, $Bin(1000; 0.008)$, so as the expected value of the group size was equal to $8$ (Figure \ref{21091752}c). As a result, we obtained $7917$ variables, divided into $1000$ groups (the same division was used in all iterations and scenarios). For each sparsity level and the gFDR level $0.1$, and each iteration we generated entries of the design matrix using $\mathcal{N}\big(0,\frac1n\big)$ distribution, then $X$ was standardized and the values of response variable were generated according to model (\ref{gmodel}) with $\sigma=1$ and signals generated as in simulations for Figure \ref{19091846}. To identify relevant groups based on the simulated data we have used the iterative version of gSLOPE, with $\sigma$ estimation (Procedure \ref{11091824}) and lambdas given by Procedure \ref{11091537}. We performed $200$ repetitions for each scenario, $n$ was fixed as $5000$. Results are represented in Figure \ref{21091752} and show that our procedure allows to control gFDR at the assumed level.

Additionally, Figure \ref{21091752} compares gSLOPE to gLASSO with two  choices of the smoothing parameter $\lambda$. Firstly, we used $\lambda=\lambda^{mean}_1$, which allows to control FDR under the total null hypothesis. Secondly, for each of the iterations we chose $\lambda$ based on leave-one-out cross-validation.   It turns out that the first of these choices becomes rather conservative when the number of truly relevant groups increases. Then gLASSO has a smaller FDR but also a much smaller power than SLOPE (by a factor of three for $k=60$). Cross-validation works in the opposite way - it yields a large power but also results in a huge proportion of false discoveries, which in our simulations systematically exceeds 60\%.   

\begin{figure}[h!]
\centering
\includegraphics[width=1\linewidth]{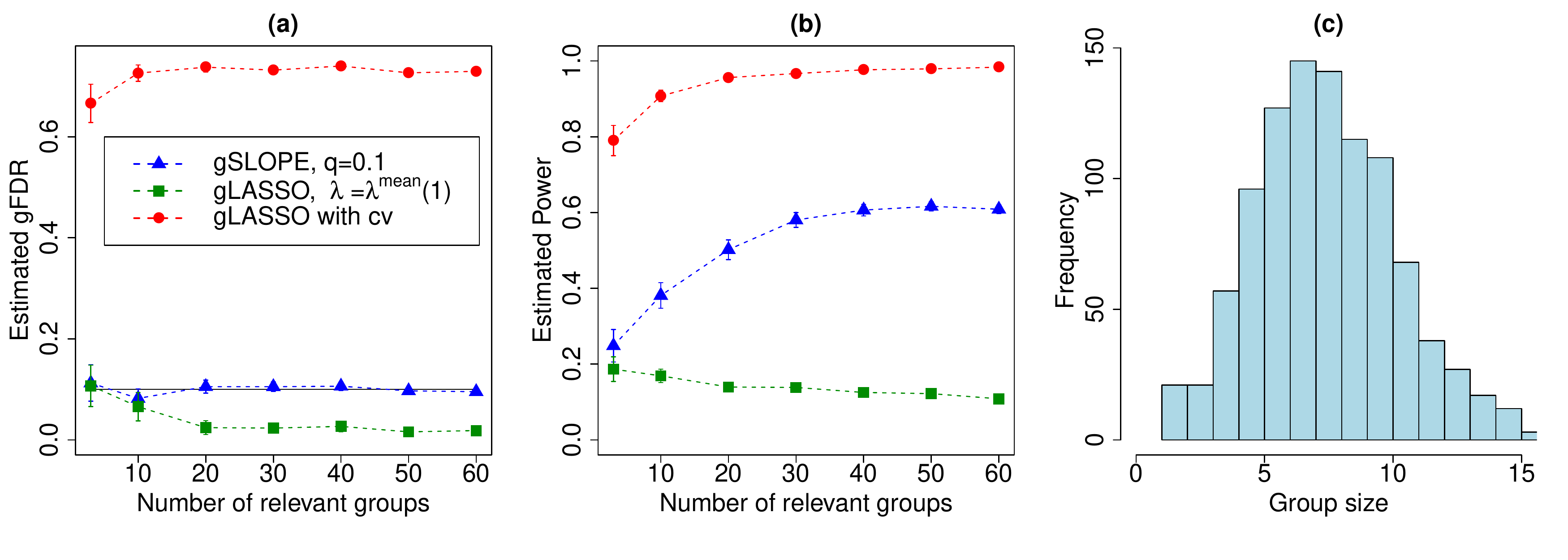}
 \caption{Independent regressors and various group sizes: $m=1000$, $p=7917$ and $n=5000$. Bars correspond to $\pm 2$SE. Entries of design matrix were drawn from $\mathcal{N}(0,1/n)$ distribution and truly relevant signal, $i$, was generated such as $\|X_{I_i}\beta_{I_i}\|_2=\frac1m\sum_{i=1}^mB(m,l_i)$, where $B(m,l)$ is defined in \eqref{05031641}.}
\label{21091752}
\end{figure}


\subsection{Simulations in the context of Genome-Wide Association Studies}\label{SNP_simul}

To test the performance of gSLOPE in the context of Genome-Wide Association Studies (GWAS) we have used the North Finland Birth Cohort (NFBC) dataset, available in dbGaP with accession
number phs000276.v2.p1 (\url {http://www.ncbi.nlm.nih.gov/projects/gap/cgi-bin/study.cgi?study_id=phs000276.v2.p1}) and described in detail in \cite{Sabatti}. The raw data contains $364\;590$ markers for $5\;402$ subjects. To obtain roughly independent SNPs this data set was initially screened such that in the final data set the maximal correlation between any pair of SNPs  does not exceed $\sqrt{0.1}=0.316$. The reduced data set contains $p=26\;233$ SNPs.

The explanatory variables for our genetic model were defined in Table \ref{Tab:coding}, where ${\it a}$ denotes the less frequent (variant) allele.
\begin{table}
\caption{Coding for explanatory variables}
\centering
\begin{tabular}{|c|c|c|}
\hline
genotype&additive dummy variable $\widetilde X$&dominance dummy variable $\widetilde Z$\\
\hline
{\it aa}&2&0\\
{\it aA}&1&1\\
{\it AA}&0&0\\
\hline
\end{tabular}
\label{Tab:coding}
\end{table}
In case when  population frequencies of both alleles are the same, variables $\widetilde X$ and $\widetilde Z$ are uncorrelated. In other cases correlations between these variables is different from zero and can be very strong for rare genetic variants. Since each SNP is described by two dummy variables, the full design matrix $[\widetilde X\ \widetilde Z]$  contains $52\; 466$ potential regressors.
This matrix  was then centered and standardized, so the columns of the final design matrix $[X\ Z]$ have zero mean and unit norm. 
		
The trait values are simulated according to two scenarios. In Scenario 1 we simulate from an additive model, where each of the causal SNPs influences the trait only through the additive dummy variable in matrix $X$,
\begin{equation}\label{eq:additive}
y=X\beta_X+\epsilon\;\;.
\end{equation}
Here $\epsilon \sim \mathcal{N}(0, \mathbf{I})$, the number of `causal' SNPs $k$ varies
between $1$ and $80$ and each causal SNP has an additive effect (non-zero components of $\beta_X$) equal to $5$ or $-5$, with $P(\beta_{Xi}=5)=P(\beta_{Xi}=-5)=0.5$. In each of $100$ iterations of our experiment causal SNPs were randomly selected from the full set of $26\; 233$ SNPs. 

The additive model (\ref{eq:additive}) assumes that for each of the SNPs the expected value of the trait for the heterozygote ${\it aA}$ is the average of expected trait values for both homozygotes ${\it aa}$ and ${\it AA}$. This idealistic assumption is usually not satisfied and many of the SNPs exhibit some dominance effects. To illustrate the performance of gSLOPE in the presence of dominance effects we simulated data according to Scenario $2$; 
\begin{equation}\label{eq:dominance}
y=[X\ Z]\left [\begin{BMAT}(c)[0.5pt,0pt,0.7cm]{c}{cc}\beta_X\\ \beta_Z \end{BMAT} \right ]+\epsilon\;\;
\end{equation}
which differs from Scenario 1 by adding dominance effects (non-zero components of $\beta_Z$), which for each of $k$ selected SNPs are sampled from the uniform distribution on$[-5,-3] \cup [3,5]$. The simulated data sets were analyzed using three different approaches:
\begin{itemize}
\setlength\itemsep{0em}
	\item gSLOPE with $p=26\;233$ groups, where each of the groups contains two explanatory variables, describing the additive and the dominance effect of the same SNP,  
	\item SLOPE$_X$, where the regular SLOPE is used to search through the reduced design matrix $X$ (as in \cite{SLOPE2} or  \cite{geneSLOPE}),
	\item SLOPE$_{XZ}$, where the regular SLOPE is used to search through the full design matrix $[X\ Z]$.
\end{itemize}

In all versions of SLOPE we used the iterative procedure for estimation of $\sigma$ and the sequence $\lambda$ heuristically adjusted to the case of the Gaussian design matrix, as implemented in the CRAN packages \texttt{SLOPE} and \texttt{grpSLOPE}.

Figure \ref{SNP_sim} provides the summary of this simulation study. Here FDR and power are calculated at the SNP level. Specifically, in case of SLOPE$_{XZ}$ the SNP is  counted as a one discovery if the corresponding additive or the dominance dummy variable is selected.

\begin{figure}[h!]
\includegraphics[width=\textwidth]{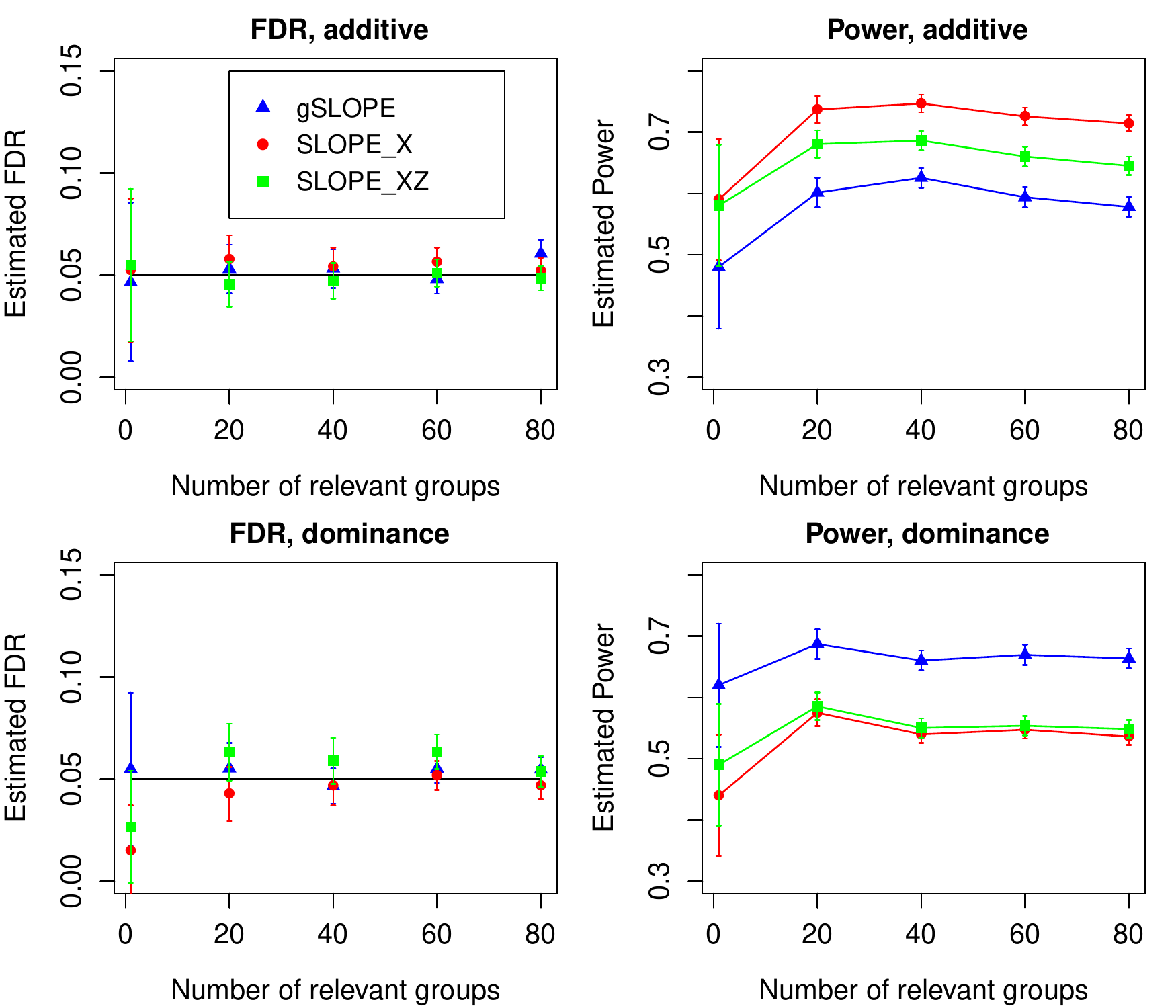}
 \caption{Simulations using real SNP genotypes: $n=5\;402$, $p=26\;233$. Power and gFDR  are estimated based on $100$ iterations of each simulation scenario. Upper panel illustrates the situation where all causal SNPs have only additive effects, while in lower panel each causal SNP has also some dominance effect. }
\label{SNP_sim}
\end{figure}

As shown in Figure \ref{SNP_sim}, for both of the simulated scenarios all versions of SLOPE control gFDR for all considered values of $k$. 
When the data are simulated according to the additive model the highest power is offered by SLOPE$_X$, with the power of gSLOPE being smaller by approximately $13\%$ over the whole range of $k$. However, in the presence of large dominance effects the situation is reversed and gSLOPE offers the highest power, which systematically exceeds the power of SLOPE$_X$ by the symmetric amount of $13\%$. In our simulations SLOPE$_{XZ}$ has intermediate performance and does not substantially improve the power of SLOPE$_X$ in the presence of dominance effects.  
Thus our simulations suggest that gSLOPE provides an information complementary to SLOPE$_X$ and might be a useful tool in the context of Genome-Wise Association Studies.


\subsection{gSLOPE under GWAS application: real phenotype data}\label{GWAS_realdata}

Finally, we have applied group SLOPE to identify SNPs associated with four lipid phenotypes available in NFBC dataset: high-density lipoproteins (HDL), low-density lipoproteins (LDL), triglycerides (TG), and total cholesterol (CHOL). The data set contains  genotypes of $364\,590$ SNPs for $5\,402$ individuals and was previously analyzed in \cite{geneSLOPE} using regular SLOPE to search for additive SNP effects. Before this analysis the data were reduced by applying the p-value threshold for single marker t-tests and  selecting representatives of strongly correlated SNPs, also based on p-values.  Since this pre-processing selects most promising SNPs by performing multiple testing on the full set of $p=364590$ SNPs, the sequence of the tuning parameters for SLOPE needs to be adjusted to this value of $p$ rather than to the number of selected representatives. The algorithm for this analysis is implemented in \texttt{R} package \texttt{geneSLOPE} and its details are explained in \cite{geneSLOPE}.  According to extensive simulation study and real data analysis reported in \cite{geneSLOPE}, \texttt{geneSLOPE} allows to control FDR for the analysis with full size GWAS data. 

In our data analysis we used three versions of SLOPE: geneSLOPE for additive effects (as in \cite{geneSLOPE}),  geneSLOPE$_{XZ}$, with  the design matrix extended by inclusion of dominance dummy variables, and  gene group SLOPE (geneGSLOPE). In geneSLOPE$_{XZ}$ and geneGSLOPE representative SNPs were selected  based on the one way ANOVA tests. For all these procedures the pre-processing was based on p-value threshold $p<0.05$ and the correlation cutoff $\rho<0.3$, which allowed to reduce the data set to roughly 8500 of interesting representative SNPs. For the convenience of the reader, the Procedure \ref{09061754} for the full geneGSLOPE analysis is provided below.

\begin{algorithm}
\caption{geneGSLOPE procedure}
\label{09061754}
  \begin{algorithmic}
	\State \textbf{Input}: $r \in(0,1)$, $\pi\in(0,1]$
	\vspace{3pt}
	\State   \textbf{Screen SNPs}:
		\State (1) For each SNP calculate independently the $p$-value for the ANOVA test with the null hypothesis, $H_0:\mu_{aa}=\mu_{aA}=\mu_{AA}$.
		\State (2)  Define the set ${\cal B}$ of indices corresponding to SNPs whose $p$-values are smaller than $\pi$.
		\vspace{3pt}
		\State \textbf{Cluster SNPs}:
		\State (3) Select the SNP $j$ in ${\cal B}$ with the smallest $p$-value and find all SNPs whose Pearson correlation with this selected SNP is larger than or equal to $r$.
		\State (4) Define this group as a cluster and SNP $j$ as the representative of the cluster. Include SNP $j$ in ${\cal S}$, and  remove the entire cluster from ${\cal B}$.
		\State (5) Repeat steps (3)-(4) until ${\cal B}$ is empty. Denote by $m$ number of all clumps (this is also the number of elements in ${\cal S}$). 
		\vspace{3pt}
		\State \textbf{Selection}:
		\State (6) Apply the iterative gSLOPE method (i.e. gSLOPE with $\sigma$ estimation and correction for independent regressors) on $X_{\cal S}$, being matrix $X$ restricted to columns corresponding to the set ${\cal S}$ of selected SNPs. Here, the tuning parameters, vector $\lambda$, is defined as in Procedure \ref{11091537}, with $p$ being the number of all initial SNPs, and then this vector is restricted only to first $m$ coefficients.
		\State (7) Representatives which were selected indicate the selection of entire clumps.
		\end{algorithmic}
\end{algorithm} 

Results in the context of number of discoveries given by geneSLOPE, geneSLOPE$_{XZ}$ and geneGSLOPE are summarized in Table \ref{18590924}, where we can observe that both geneSLOPE and geneSLOPE$_{XZ}$, gave identical results for LDL, CHOL and TG. Compared to these methods geneGSLOPE did not reveal any new response-related SNPs for LDL and CHOL. Actually, for these two traits geneGSLOPE missed some SNPs detected by the other two methods. 
\begin{table}[ht]
\centering
	\begin{tabular}{|c|cccc|}
  \hline 
	&HDL & LDL & TG & CHOL\\
  \hline
	geneSLOPE&7 &6 &2 &5\\
  \hline
	geneSLOPE$_{XZ}$&8 &6 &2 &5\\
  \hline
	geneGSLOPE   &8&4&8&4\\
	\hline
	New discoveries: geneSLOPE$_{XZ}$ &1&0&0&0\\
	\hline
	New discoveries: geneGSLOPE &2&0&6&0\\
  \hline
	\end{tabular}\caption{Number of discoveries in real data analysis}
	\label{18590924}
\end{table}

A different situation takes place for TG, where geneGSLOPE identifies 6 additional SNPs as compared to the other two methods. All these detections have a similar structure, showing a significant recessive effect of the minor allele.  In all these cases the minor allele frequency was smaller than 0.1. The detection of such "rare" recessive effects by the simple linear regression model is rather difficult, since the regression line adjusts mainly to the two prevalent genotype groups and is almost flat \cite{Lettre}. 

In case of HDL all three versions of SLOPE gave different results. geneSLOPE$_{XZ}$ identifies one new SNP as compared to geneSLOPE, while geneGSLOPE identifies one more SNP and misses one of the discoveries obtained by other two methods. In Figure \ref{227280916} we compare two exemplary discoveries: one detected at the same time by geneSLOPE and geneGSLOPE (known discovery) and one detected only by geneGSLOPE (new discovery). This example clearly shows the additive effect of the previously detected SNP and the recessive character of the second SNP. In case of new discovery there are only 5 individuals in the last genotype group, which makes the change in the mean not detectable by simple linear regression.  
\begin{figure}[ht]
\centering
\begin{subfigure}{.45\textwidth}
  \centering
	\includegraphics[width=1\linewidth]{./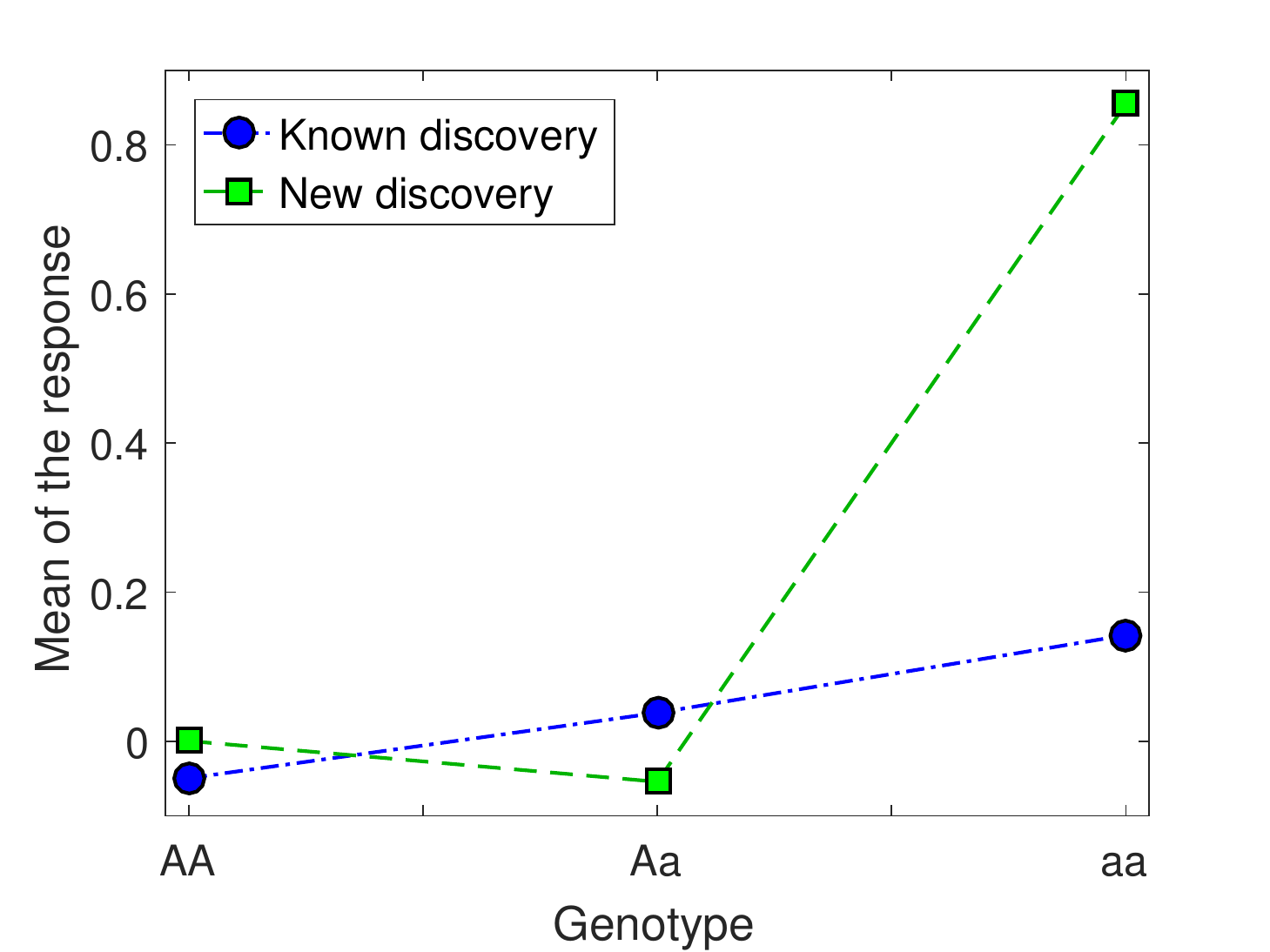}
  \caption{Mean of $y$ with respect to genotype}
\end{subfigure}%
\begin{subfigure}{.45\textwidth}
  \centering
  \includegraphics[width=1\linewidth]{./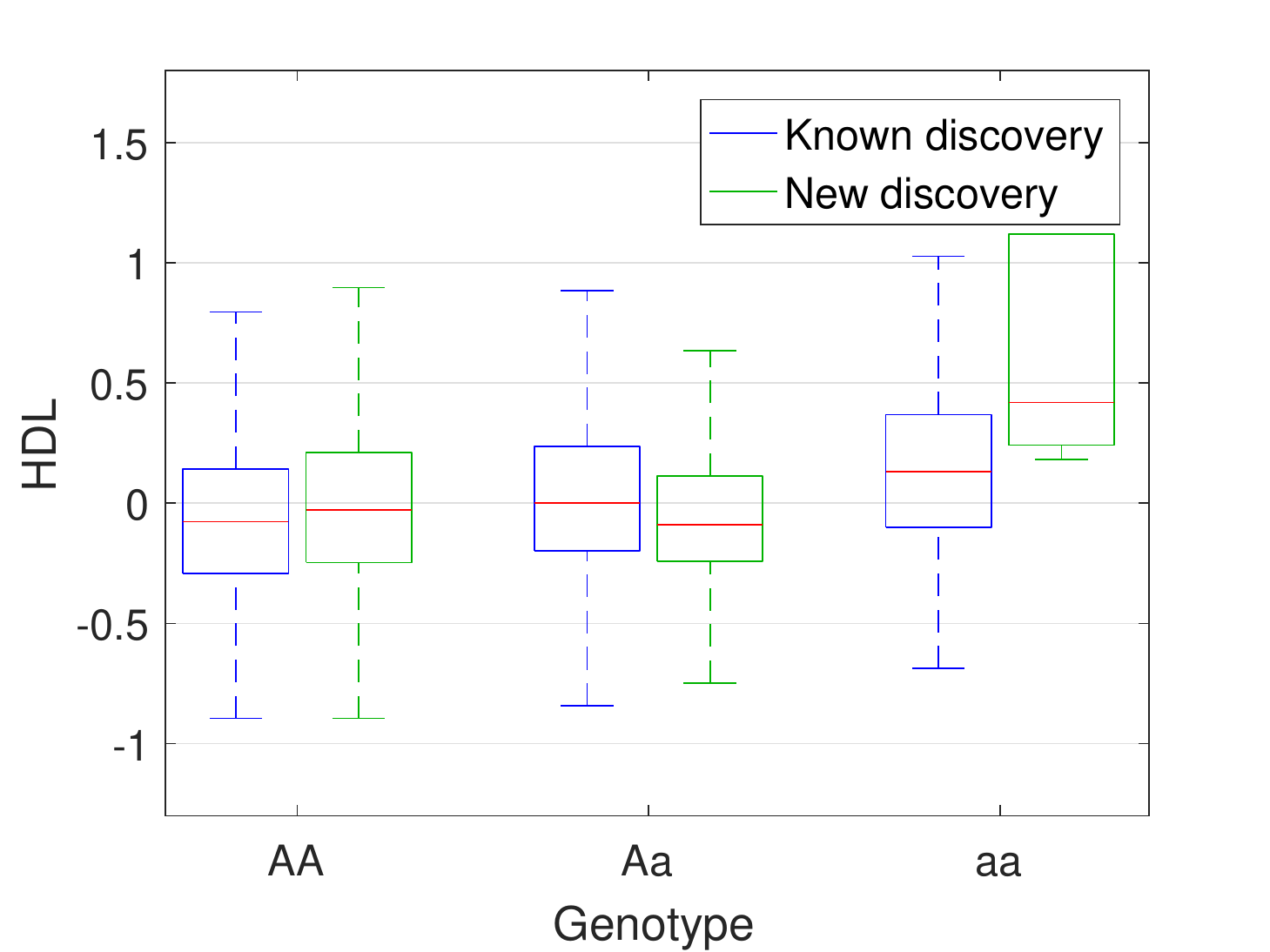}
  \caption{Boxplots (outliers not shown)}
\end{subfigure}%
\caption{Comparison of discovery detected by both geneSLOPE and geneGSLOPE (known discovery), and discovery detected only by geneGSLOPE (new discovery).}
\label{227280916}
\end{figure}

The results of real data analysis agree with results of simulations. They show that geneGSLOPE has a lower power than geneSLOPE for detection of additive effects but can be very helpful in detecting rare recessive variants. Thus these two methods are complementary to each other and can be used together to enhance the power of detection of influential genes.

\section{Discussion}

Group SLOPE is a new convex optimization procedure for selection of important groups of explanatory variables, which can be considered as a  generalization of group LASSO and of SLOPE. In this article we provide an algorithm for solving group SLOPE and discuss the choice of the sequence of regularizing parameters. Our major focus is the control of group FDR, which can be obtained when variables in different groups are orthogonal to each other or they are stochastically independent and the signal is sufficiently sparse.  After some preprocessing of the data such situations occur frequently in the context of genetic studies, which in this paper serve as a major example of applications. While we concentrated mainly on using gSLOPE to group dummy variables corresponding to different effects of the same SNP, gSLOPE can be used to group SNPs based on biological function, physical location etc. We also expect this method to be advantageous in the context of identification of groups of rare genetic variants, where considering their joint effect on phenotype should substantially increase the power of detection. 

The major purpose of controlling FDR rather than absolutely eliminating false discoveries is the wish to increase the power of detection of signals which are comparable to the noise level. As shown by a variety of theoretical and empirical results, this allows SLOPE to obtain an optimal balance between the number of false and true discoveries and leads to very good estimation and predictive properties (see e.g. \cite{SLOPE}, \cite{SLOPE2} or \cite{WE}). Our Theorem \ref{minimax} illustrates that these good estimation properties are inherited by group SLOPE.

We provide the regularizing sequence $\lambda^{max}$, which provably controls gFDR in case when variables in different groups are orthogonal. Additionally, we propose its relaxation $\lambda^{mean}$, which according to our extensive simulations controls "average" gFDR, where the average is with respect to all possible signal placements. This sequence can be easily modified taking into account the prior distribution on the signal placement. Such "Bayesian" version of gSLOPE and the proof of control of the respective average gFDR remains an interesting topic for a further research. 

Another important topic for a further research is the formal proof  of gFDR control when variables in different groups are independent and setting precise limits on the sparsity levels under which it can be done. Asymptotic formulas, which allow for very accurate prediction of FDR for LASSO under Gaussian design are provided in \cite{FDR_LASSO}. We expect that similar results can be obtained for SLOPE and gSLOPE and generalized to the case of random matrices, where variables are independent and come from sub-Gaussian distributions. However, the technical complexity of results reported in \cite{FDR_LASSO} illustrates that this task is rather challenging.  An alternative approach for the perfect  gFDR control under random designs  is to couple gSLOPE with the new knock-off procedure proposed in \cite{knockrandom}. We expect that such a combination should allow to increase the power of detection of relevant features, as compared to other methods currently used with knock-offs. 

While we concentrated on control of FDR in case when groups of variables are roughly orthogonal to each other, it is worth mentioning that original SLOPE has very interesting properties also in case when regressors are strongly correlated. As shown e.g. in \cite{OWL}, the Sorted L-One norm has a tendency to average estimated regression coefficients over groups of strongly correlated predictors, which enhances the predictive properties. This also allows not to lose important predictors due to their correlation with other features. We expect similar properties to hold for gSLOPE but the investigation of the properties of gSLOPE when variables in different groups are strongly correlated remains an interesting topic for a further research.


\section*{Acknowledgement}
We would like to thank Emmanuel J. Cand\`es and Jan Mielniczuk for helpful remarks and suggestions and Christine Peterson for screening the North Finland Birth Cohort (NFBC) dataset. D.~B. would like to thank Professor Jerzy Ombach for significant help with the process of obtaining access to the data. D.~B. and M.~B. are  supported by European Union's 7th Framework Programme for research, technological development and demonstration under Grant Agreement no 602552 and by the Polish Ministry of Science and Higher Education according to agreement 2932/7.PR/2013/2. Additionally D.B. acknowledges the support from NIMH grant R01MH108467, W.~S. was partially supported by a General Wang Yaowu Stanford Graduate Fellowship.

\vspace{20 pt}

\bibliographystyle{plain}
\bibliography{gSLOPE}
\newpage
\appendix
\section{$J_\lambda$ norm properties}
For nonnegative, nonincreasing sequence $\lambda_1 \geq \ldots \geq \lambda_p \geq 0$ consider function $ \mathbb{R}^p \ni b \longmapsto J_\lambda(b) \in \mathbb{R}$ given by $J_\lambda(b)=\sum_{i=1}^{p}\lambda_i \cdot |b|_{(i)}$, where $|b|_{(1)}\geq \ldots\geq |b|_{(p)}$ is the vector of sorted absolute values.
\begin{proposition} \label{Props1}
If $a$, $b \in \mathbb{R}^p$ are such that $|a| \preceq |b|$, then $|a|_{(\cdot)} \preceq |b|_{(\cdot)}$.
\end{proposition}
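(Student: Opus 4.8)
The plan is to read $\preceq$ as the coordinatewise order on $\mathbb{R}^p$, so that the hypothesis $|a| \preceq |b|$ says precisely that $|a_i| \le |b_i|$ for every index $i$, and the conclusion $|a|_{(\cdot)} \preceq |b|_{(\cdot)}$ amounts to $|a|_{(k)} \le |b|_{(k)}$ for every $k$. Writing $x_i := |a_i|$ and $y_i := |b_i|$, I then have $0 \le x_i \le y_i$ for all $i$, and the task reduces to showing that the $k$-th largest entry of $x$ is bounded by the $k$-th largest entry of $y$, where the entries are sorted in the decreasing order $x_{(1)} \ge \cdots \ge x_{(p)}$ as in the paper's convention.

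First I would record the threshold characterization of an order statistic: for any real $t$ one has $x_{(k)} \ge t$ if and only if $\#\{i : x_i \ge t\} \ge k$, i.e. at least $k$ of the coordinates lie above the threshold. This is merely a restatement of what it means to be the $k$-th largest value among $x_1, \ldots, x_p$, and it is the only point at which one must be attentive to repeated values; since the assertion concerns only the sorted magnitudes and not the indices realizing them, ties cause no difficulty.

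The key step is the monotonicity of the superlevel counts. Because $x_i \le y_i$, the implication $x_i \ge t \Rightarrow y_i \ge t$ holds for every $i$, whence $\{i : x_i \ge t\} \subseteq \{i : y_i \ge t\}$ and therefore $\#\{i : x_i \ge t\} \le \#\{i : y_i \ge t\}$ for every threshold $t$. Applying this with the particular choice $t = x_{(k)}$, the left-hand count is at least $k$ by the characterization above, so the right-hand count is at least $k$ as well, which by the same characterization forces $y_{(k)} \ge x_{(k)}$. As $k$ was arbitrary, the conclusion follows.

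I do not anticipate a genuine obstacle; the only thing to get right is the precise phrasing of the order-statistic characterization so that repeated values are handled cleanly. If one prefers to sidestep thresholds, an equally short route uses the variational identity $x_{(k)} = \max_{|S|=k} \min_{i \in S} x_i$, where $S$ ranges over index sets of size $k$: since $\min_{i \in S} x_i \le \min_{i \in S} y_i$ for each fixed $S$, taking the maximum over all such $S$ yields $x_{(k)} \le y_{(k)}$ at once. Either argument completes the proof.
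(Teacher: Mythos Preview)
Your proof is correct and follows essentially the same superlevel-count argument as the paper: after reducing to nonnegative vectors, the paper sorts $a$ so that $a_k=|a|_{(k)}$ and observes that the $k$ indices $1,\ldots,k$ all satisfy $b_j\ge a_j\ge a_k$, which is exactly your inclusion $\{i:x_i\ge t\}\subseteq\{i:y_i\ge t\}$ specialized to $t=x_{(k)}$. Your phrasing via index sets is in fact slightly cleaner than the paper's, which defines $S_k$ as a set of \emph{values} and so is mildly imprecise in the presence of ties.
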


\begin{proof}
Without loss of generality we can assume that $a$ and $b$ are nonnegative and that it occurs $a_1\geq \ldots\geq a_p$. We will show that $a_k \leq b_{(k)}$ for $k\in \{1,\ldots,p\}$. Fix such $k$ and consider the set $S_k:=\{b_i:\  b_i \geq a_k\}$. It is enough to show that $|S_k|\geq k$.
For each $j\in \{1,\ldots,k\}$ we have
\begin{equation*}
b_j \geq a_j \geq a_k\ \Longrightarrow\ b_j\in S_k,
\end{equation*} 
what proves the last statement.
\end{proof}

\begin{corollary}
\label{17021046}
Let $a \in \mathbb{R}^p$, $b \in \mathbb{R}^p$ and $|a| \preceq |b|$ then Proposition (\ref{Props1}) instantly gives that ${J_\lambda (a) \leq J_\lambda (b)}$, since $J_\lambda (a)=\lambda^\mathsf{T}|a|_{(\cdot)} \leq \lambda^\mathsf{T}|b|_{(\cdot)}= J_\lambda (b)$.
\end{corollary}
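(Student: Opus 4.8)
The plan is to chain Proposition~\ref{Props1} together with the elementary fact that a componentwise inequality between vectors is preserved after pairing with a nonnegative weight vector; there is no reason to re-derive anything about sorting, since Proposition~\ref{Props1} already supplies that. First I would invoke Proposition~\ref{Props1}: under the hypothesis $|a| \preceq |b|$ it yields the componentwise inequality $|a|_{(\cdot)} \preceq |b|_{(\cdot)}$ between the two sorted-absolute-value vectors. This is precisely the input that the definition of $J_\lambda$ consumes, so the remaining work is purely about turning a vector inequality into a scalar one.

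Next I would rewrite both penalty values as linear functionals of these sorted vectors. By definition $J_\lambda(a) = \sum_{i=1}^p \lambda_i\, |a|_{(i)} = \lambda^\mathsf{T} |a|_{(\cdot)}$, and likewise $J_\lambda(b) = \lambda^\mathsf{T} |b|_{(\cdot)}$. The final step is to note that the map $x \mapsto \lambda^\mathsf{T} x$ is monotone with respect to $\preceq$ because every coordinate of $\lambda$ is nonnegative: the standing assumption $\lambda_1 \ge \cdots \ge \lambda_p \ge 0$ gives $\lambda_i \ge 0$ for each $i$, so the termwise bound $|a|_{(i)} \le |b|_{(i)}$ multiplies up to $\lambda_i |a|_{(i)} \le \lambda_i |b|_{(i)}$, and summing over $i$ produces $J_\lambda(a) \le J_\lambda(b)$, as claimed.

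There is essentially no obstacle here; the entire substantive content is delivered by Proposition~\ref{Props1}, and what remains is the routine monotonicity of a nonnegatively weighted sum. The single point I would flag explicitly, so the argument is not silently circular, is that the nonnegativity of the weights $\lambda_i$—part of the standing hypothesis on the sequence $\lambda$—is exactly what allows the componentwise order to pass through the summation. Were some $\lambda_i$ negative, the inequality could reverse on that term and the conclusion would fail, so I would make sure this hypothesis is named at the decisive step.
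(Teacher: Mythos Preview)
Your proposal is correct and follows essentially the same approach as the paper: invoke Proposition~\ref{Props1} to obtain $|a|_{(\cdot)} \preceq |b|_{(\cdot)}$, then use the nonnegativity of the $\lambda_i$ to pass the componentwise inequality through the inner product $\lambda^\mathsf{T}(\cdot)$. The paper compresses this into a single line within the corollary statement itself, while you have spelled out the role of $\lambda_i \ge 0$ explicitly, but the argument is identical.
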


\begin{proposition} \label{Props 3}
For fixed sequence $\lambda_1\geq\ldots\geq\lambda_p\geq0$, let $b\in \mathbb{R}^p$ be such that $b \succeq 0$ and $b_j > b_l$ for some ${j,l\in \{1,\ldots,p\}}$. For $0 < \varepsilon \leq (b_j-b_l)/2$, define $b_\varepsilon\in\mathbb{R}^p$ by conditions $(b_\varepsilon)_l:=b_l+\varepsilon$, $(b_\varepsilon)_j:=b_j-\varepsilon$ and $(b_\varepsilon)_i:=b_i$ for $i\notin\{j,l\}$. Then $J_\lambda(b_\varepsilon) \leq J_\lambda(b)$.
\end{proposition}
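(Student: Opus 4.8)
The plan is to reduce the claim to a statement about the partial sums of the sorted coordinates (the sums of the largest entries), and then settle those by a short exchange argument.

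First I would exploit nonnegativity. Since $b \succeq 0$ we have $|b|_{(i)} = b_{(i)}$, the entries of $b$ arranged in nonincreasing order, so that $J_\lambda(b) = \sum_{i=1}^p \lambda_i\, b_{(i)}$; the same representation holds for $b_\varepsilon$, which is again nonnegative because $b_j - \varepsilon \ge b_l + \varepsilon \ge 0$. Writing $S_k(v) := \sum_{i=1}^k v_{(i)}$ for the sum of the $k$ largest coordinates of $v$ and setting $\lambda_{p+1} := 0$, Abel summation gives
\begin{equation*}
J_\lambda(v) = \sum_{k=1}^p (\lambda_k - \lambda_{k+1})\, S_k(v).
\end{equation*}
Since $\lambda$ is nonincreasing, every coefficient $\lambda_k - \lambda_{k+1}$ is nonnegative, so it suffices to prove $S_k(b_\varepsilon) \le S_k(b)$ for each $k \in \{1,\ldots,p\}$.

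To this end I would use the variational description $S_k(v) = \max_{|T| = k} \sum_{i \in T} v_i$, the maximum being over index sets $T$ of cardinality $k$. Fix $k$ and let $T$ attain this maximum for $b_\varepsilon$, so that $S_k(b_\varepsilon) = \sum_{i \in T} (b_\varepsilon)_i$. I then split into cases according to whether the two modified indices $j$ and $l$ lie in $T$. If $T$ contains both or neither of $j,l$, the transfer conserves the total mass of $\{j,l\}$ and $\sum_{i \in T}(b_\varepsilon)_i = \sum_{i \in T} b_i$; if $j \in T$ and $l \notin T$, then $\sum_{i\in T}(b_\varepsilon)_i = \sum_{i \in T} b_i - \varepsilon$. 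In all of these cases $S_k(b_\varepsilon) = \sum_{i\in T}(b_\varepsilon)_i \le \sum_{i \in T} b_i \le S_k(b)$.

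The only delicate case, and the place where the hypothesis on $\varepsilon$ enters, is $l \in T$ and $j \notin T$, where the transfer raises a selected value and $\sum_{i\in T}(b_\varepsilon)_i = \sum_{i \in T} b_i + \varepsilon$. Here I would swap $l$ for $j$: the set $T' := (T \setminus \{l\}) \cup \{j\}$ again has cardinality $k$, and $\sum_{i \in T'} b_i = \sum_{i \in T} b_i + (b_j - b_l)$. Since $\varepsilon \le (b_j - b_l)/2 \le b_j - b_l$, this gives $\sum_{i \in T}(b_\varepsilon)_i = \sum_{i\in T}b_i + \varepsilon \le \sum_{i\in T}b_i + (b_j-b_l) = \sum_{i\in T'} b_i \le S_k(b)$, closing the remaining case. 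Hence $S_k(b_\varepsilon) \le S_k(b)$ for every $k$, and the Abel identity yields $J_\lambda(b_\varepsilon) \le J_\lambda(b)$. I expect this last exchange step to be the only genuine obstacle; conceptually it reflects that $b_\varepsilon$ is a Pigou--Dalton transfer of $b$ and is therefore majorized by it, with $J_\lambda$ order-preserving for majorization, but the self-contained top-$k$ argument above avoids invoking that machinery.
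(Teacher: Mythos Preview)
Your argument is correct. The Abel-summation identity $J_\lambda(v)=\sum_{k=1}^p(\lambda_k-\lambda_{k+1})S_k(v)$ with $\lambda_{p+1}:=0$ is valid, the variational formula $S_k(v)=\max_{|T|=k}\sum_{i\in T}v_i$ holds for any real vector, and the four-case exchange argument is clean; in particular the swap $T'=(T\setminus\{l\})\cup\{j\}$ closes the only nontrivial case using just $\varepsilon\le b_j-b_l$, which is weaker than the stated hypothesis.

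The paper proceeds differently, by a direct application of the rearrangement inequality. It fixes a permutation $\pi$ that sorts $b_\varepsilon$ and satisfies $\lambda_{\pi(j)}\ge\lambda_{\pi(l)}$ (possible because $(b_\varepsilon)_j\ge(b_\varepsilon)_l$, which is exactly where the bound $\varepsilon\le(b_j-b_l)/2$ is used), invokes rearrangement to get $\sum_i\lambda_i b_{(i)}\ge\sum_i\lambda_{\pi(i)}b_i$, and then computes $\sum_i\lambda_{\pi(i)}b_i-\sum_i\lambda_{\pi(i)}(b_\varepsilon)_i=\varepsilon(\lambda_{\pi(j)}-\lambda_{\pi(l)})\ge0$. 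That route is shorter once one is willing to cite rearrangement as a black box; yours is more self-contained, makes the majorization structure explicit (you have essentially shown that $b_\varepsilon$ is majorized by $b$ and that $J_\lambda$ is Schur-convex on the nonnegative orthant), and in fact only needs the weaker constraint $\varepsilon\le b_j-b_l$.
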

\begin{proof}
Let $\pi:\{1,\ldots,p\}\longrightarrow\{1,\ldots,p\}$ be permutation such as $\sum_{i=1}^p\lambda_i (b_\varepsilon)_{(i)}=\sum_{i=1}^p\lambda_{\pi(i)} (b_\varepsilon)_i$ for each $i$ in $\{1,\ldots,p\}$ and $\lambda_{\pi(j)}\geq\lambda_{\pi(l)}$. From the rearrangement inequality (Theorem 368 in \cite{RearIneq}),
\begin{equation}
\begin{split}
J_{\lambda}(b)-J_{\lambda}(b_\varepsilon)&=\ \sum_{i=1}^p\lambda_ib_{(i)}-\sum_{i=1}^p\lambda_i(b_\varepsilon)_{(i)}=\sum_{i=1}^p\lambda_ib_{(i)}-\sum_{i=1}^p\lambda_{\pi(i)} (b_\varepsilon)_i\\
&\geq\ \sum_{i=1}^p\lambda_{\pi(i)}b_i-\sum_{i=1}^p\lambda_{\pi(i)} (b_\varepsilon)_i=\varepsilon\big(\lambda_{\pi(j)}-\lambda_{\pi(l)}\big)\geq0.
\end{split}
\end{equation}
\end{proof}


\section{Numerical algorithm}\label{31071347}
In this section we will discuss the convexity of the objective function and the algorithm for computing the solution to gSLOPE problem (\ref{gSLOPE}). Our optimization method is based on the fast algorithm for evaluation of the proximity operator (prox) for sorted $\ell_1$ norm, which was derived in \cite{SLOPE}.
\subsection{Convexity of the objective function}
\noindent To show that the objectives in problems \eqref{gSLOPE} and (\ref{24111029}) are convex functions, we will prove the following propositions
\begin{proposition}
Function $J_{\lambda, W, \II}(b):=J_{\lambda}\Big( W\iI{b}\Big)$ is a norm for any nonnegative, nonincreasing sequence $\{\lambda_i\}_{i=1}^m$ containing at least one nonzero element, partition $\II$ of the set $\{1,\ldots,\widetilde{p}\}$ and diagonal matrix $W$ with positive elements on diagonal.
\end{proposition}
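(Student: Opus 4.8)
The plan is to verify directly the three defining properties of a norm for $J_{\lambda,W,\II}$, exploiting that $J_\lambda$ is already a norm on $\mathbb{R}^m$ (cf. \cite{SLOPE}) and that the group-norm map interacts well with the coordinatewise order. Set $\phi(b) := W\iI{b} = \big(w_1\|b_{\II_1}\|_2,\ldots,w_m\|b_{\II_m}\|_2\big)^\mathsf{T}$, so that $J_{\lambda,W,\II}(b) = J_\lambda(\phi(b))$ and note that $\phi(b)$ always has nonnegative entries. I would first record that the hypothesis ``at least one nonzero element'' together with $\lambda_1\geq\ldots\geq\lambda_m\geq 0$ forces $\lambda_1>0$; since the largest absolute coordinate of any nonzero vector is paired with $\lambda_1$ in the sorted sum, $J_\lambda$ is a genuine norm (not merely a seminorm).

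For positive definiteness and homogeneity the argument is routine. Because $J_\lambda$ is a norm, $J_\lambda(\phi(b))\geq 0$, and equality forces $\phi(b)=0$; since each $w_i>0$ this gives $\|b_{\II_i}\|_2=0$ for every $i$, and as the $\II_i$ partition $\{1,\ldots,\widetilde{p}\}$ we conclude $b=0$. For absolute homogeneity, $\|(\alpha b)_{\II_i}\|_2=|\alpha|\,\|b_{\II_i}\|_2$ yields $\phi(\alpha b)=|\alpha|\phi(b)$, whence $J_\lambda(\phi(\alpha b))=|\alpha|\,J_\lambda(\phi(b))$ by homogeneity of $J_\lambda$.

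The main obstacle is the triangle inequality, precisely because $\phi$ is not linear, so I cannot simply pull the sum through $J_\lambda$. The key observation is that the group-norm map is coordinatewise subadditive: for each block the Euclidean triangle inequality gives $\|a_{\II_i}+b_{\II_i}\|_2\leq\|a_{\II_i}\|_2+\|b_{\II_i}\|_2$, and multiplying by $w_i>0$ shows $\phi(a+b)\preceq\phi(a)+\phi(b)$ entrywise, all three vectors being nonnegative. I would then invoke the monotonicity of $J_\lambda$ recorded in Corollary \ref{17021046}: since $|\phi(a+b)|=\phi(a+b)\preceq\phi(a)+\phi(b)=|\phi(a)+\phi(b)|$, we obtain $J_\lambda(\phi(a+b))\leq J_\lambda(\phi(a)+\phi(b))$. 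Finally, applying the ordinary triangle inequality for the norm $J_\lambda$ to the sum $\phi(a)+\phi(b)$ gives $J_\lambda(\phi(a)+\phi(b))\leq J_\lambda(\phi(a))+J_\lambda(\phi(b))$. Chaining the two bounds delivers $J_{\lambda,W,\II}(a+b)\leq J_{\lambda,W,\II}(a)+J_{\lambda,W,\II}(b)$, completing the verification that $J_{\lambda,W,\II}$ is a norm.
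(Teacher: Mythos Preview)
Your proof is correct and follows essentially the same approach as the paper: verify definiteness and homogeneity directly, then for the triangle inequality use blockwise subadditivity $W\iI{a+b}\preceq W\iI{a}+W\iI{b}$, apply the monotonicity of $J_\lambda$ from Corollary~\ref{17021046}, and finish with the triangle inequality for the norm $J_\lambda$. Your write-up is in fact more detailed than the paper's (which dismisses definiteness and homogeneity as ``easy to see''), but the structure and the key lemma invoked are identical.
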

\begin{proof}
It is easy to see that $J_{\lambda, W, \II}(c) = 0$ if and only if $c=0$ and that for any scalar $\alpha\in \mathbb{R}$ it occurs $J_{\lambda, W, \II}(\alpha c) = |\alpha|J_{\lambda, W, \II}(c)$. We will show that $J_{\lambda, W, \II}$ satisfies the triangle inequality. Let $b, c$ be any vectors from $\mathbb{R}^{\widetilde{p}}$. From the positivity of $w_i$'s we have $W\iI{a+b}\preceq W\iI{a} + W\iI{b}$. Therefore, Corollary \ref{17021046} yields
\begin{equation}
\begin{split}
J_{\lambda, W, \II}\big(a+ b\big)& = \  J_{\lambda}\Big( W\iI{a+b}\Big)\leq\  J_{\lambda}\Big( W\iI{a} + W\iI{b}\Big) \\ &\leq\ J_{\lambda}\Big( W\iI{a}\Big) + J_{\lambda}\Big(W\iI{b}\Big) = J_{\lambda, W, \II}(a) + J_{\lambda, W, \II}(b)\;\;,
\end{split}
\end{equation}
since $J_{\lambda}$ is a norm.
\end{proof}

\begin{proposition}
Function $J_{\lambda}\Big( W\XI{b}\Big)$ is a seminorm for any nonnegative, nonincreasing sequence $\{\lambda_i\}_{i=1}^m$, partition $I$ of the set $\{1,\ldots,p\}$, design matrix $X\in M(n,p)$ and diagonal matrix $W$ with positive elements on diagonal.
\end{proposition}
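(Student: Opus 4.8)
The plan is to reduce the statement to the preceding proposition via the standardizing change of variables, and then to observe that the only norm axiom that can fail is definiteness. Recall that each submatrix factors as $X_{I_i} = U_i R_i$ with $U_i$ having orthonormal columns and $R_i$ of full row rank, so that $\|X_{I_i} b_{I_i}\|_2 = \|R_i b_{I_i}\|_2$. Collecting the blocks $R_i$ into the block-diagonal linear map $b \mapsto Rb$ defined by $(Rb)_{\II_i} := R_i b_{I_i}$, I get $\big(\XI{b}\big)_i = \|R_i b_{I_i}\|_2 = \|(Rb)_{\II_i}\|_2 = \big(\iI{Rb}\big)_i$ for every $i$, hence $\XI{b} = \iI{Rb}$ and therefore $J_{\lambda}\big(W\XI{b}\big) = J_{\lambda}\big(W\iI{Rb}\big) = J_{\lambda, W, \II}(Rb)$.

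Since $J_{\lambda, W, \II}$ is a norm by the preceding proposition and $R$ is linear, the composition $b \mapsto J_{\lambda, W, \II}(Rb)$ is automatically a seminorm: nonnegativity is inherited from the norm; absolute homogeneity follows from $R(\alpha b) = \alpha\, Rb$ together with homogeneity of the norm; and the triangle inequality follows from $R(a+b) = Ra + Rb$ and subadditivity of the norm. This already establishes every seminorm axiom.

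As an alternative I could argue directly, without introducing $R$, mirroring the proof of the preceding proposition. Homogeneity is immediate from $\XI{\alpha b} = |\alpha|\,\XI{b}$. For subadditivity, the triangle inequality for the Euclidean norm gives $\|X_{I_i}(a_{I_i}+b_{I_i})\|_2 \leq \|X_{I_i} a_{I_i}\|_2 + \|X_{I_i} b_{I_i}\|_2$ blockwise, so $W\XI{a+b} \preceq W\XI{a} + W\XI{b}$ by positivity of the weights; applying Corollary \ref{17021046} and the triangle inequality for $J_{\lambda}$ then yields $J_{\lambda}\big(W\XI{a+b}\big) \leq J_{\lambda}\big(W\XI{a}\big) + J_{\lambda}\big(W\XI{b}\big)$.

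The only point requiring care — and the reason the statement asserts a seminorm rather than a norm — is definiteness. When the columns of some $X_{I_i}$ are linearly dependent (equivalently, when $R$ has a nontrivial kernel, which the standing assumptions explicitly permit), there exist nonzero $b$ with $X_{I_i} b_{I_i} = 0$ for all $i$, so that $J_{\lambda}\big(W\XI{b}\big) = 0$ without $b = 0$. I would therefore explicitly flag that definiteness fails in general, which is exactly why the conclusion is \emph{seminorm} and not \emph{norm}; all remaining axioms, verified above, genuinely hold. I do not anticipate any real obstacle here: the content is entirely the reduction $\XI{b} = \iI{Rb}$ plus the observation about the kernel.
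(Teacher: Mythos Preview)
Your proposal is correct. Your alternative direct argument---verify homogeneity from $\XI{\alpha b}=|\alpha|\,\XI{b}$, then use the blockwise triangle inequality to get $W\XI{a+b}\preceq W\XI{a}+W\XI{b}$ and apply Corollary~\ref{17021046} together with subadditivity of $J_\lambda$---is exactly the paper's proof.

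Your primary route, writing $\XI{b}=\iI{Rb}$ and invoking that a norm precomposed with a linear map is a seminorm, is a slightly different packaging. It is arguably cleaner conceptually, since it makes the ``seminorm, not norm'' conclusion transparent via $\ker R$, and it reuses the previous proposition wholesale rather than rerunning its argument. One small caveat: the preceding proposition assumes $\lambda$ has at least one nonzero entry, whereas the present statement does not; your composition argument still goes through because $J_{\lambda,W,\II}$ is in any case a seminorm (identically zero when $\lambda=0$), but strictly speaking you are using that fact, not the norm conclusion. The paper's direct verification sidesteps this technicality entirely. Your added remark explaining why definiteness can fail is a nice clarification that the paper leaves implicit.
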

\begin{proof}
Clearly, $J_{\lambda}\Big( W\XI{\alpha b}\Big)= |\alpha|J_{\lambda}\Big( W\XI{b}\Big)$, for any scalar $\alpha\in \mathbb{R}$. Moreover, for any $a,b\in \mathbb{R}^p$, it holds $W\XI{a+b}\preceq W\XI{a} + W\XI{b}$, and the triangle inequality could be proved similarly as in the previous proposition.
\end{proof}


\subsection{Proximal gradient method}
Consider unconstrained optimization problem of form
\begin{equation}
\label{07030001}
\minimize{b} \ f(b) = g(b) + h(b),
\end{equation}
where $g$ and $h$ are convex functions and $g$ is differentiable (for example LASSO and SLOPE are of such form). There exist efficient methods, namely {\itshape proximal gradient algorithms}, which could be applied to find numerical solution for such objective functions. To design efficient algorithms, however, $h$ must be prox-capable, meaning that there is known fast algorithm for computing the proximal operator for $h$, 
\begin{equation}
prox_{th}(y):=\argmin{b}\left\{\frac1{2t}\|y-b\|_2^2+h(b)\right\},
\end{equation}
for each $y\in\mathbb{R}^p$ and $t>0$.
The iterative algorithm works as follows. Suppose that in $k$ step $b^{(k)}$ is the current guess. Then, guess $b^{(k+1)}$ is given by
\begin{equation}
\label{17021824}
b^{(k+1)}:=\argmin{b}\left\{g\Big(b^{(k)}\Big)+\left\langle\nabla g\Big(b^{(k)}\Big),b-b^{(k)}\right\rangle+\frac1{2t}\|b-b^{(k)}\|_2^2+h(b)\right\}.
\end{equation} 
The two first terms in objective function in (\ref{17021824}) are Taylor approximation of $g$, third addend is a proximity term which is responsible for searching an update reasonably close and $t$ can be treated as a step size.

Problem (\ref{17021824}) could be reformulated to  
\begin{equation}
b^{(k+1)}:=\argmin{b}\left\{\frac12\left\|b^{(k)}-t\nabla g\big(b^{(k)}\big)-b\right\|_2^2+th(b)\right\},
\end{equation}
hence $b^{(k+1)}=prox_{th}\Big(b^{(k)}-t\nabla g\big(b^{(k)}\big)\Big)$, which justifies the need for existence of a fast algorithm computing values of the proximal operator. In each step the value of $t$ could be changed raising the sequence $\{t_i\}_{i=1}^{\infty}$. In situation when $g(b)=\frac12\|y-Xb\|_2^2$, we get the following algorithm.
\begin{algorithm}
  \caption{Proximal gradient algorithm}
	\label{24021312}
  \begin{algorithmic}
		\State \textbf{input:} $b^{[0]}\in \mathbb{R}^p$,\ \ k=0
    \While{ ( Stopping criteria are not satisfied) }
		\State 1. $b^{[k+1]}=prox_{t_kh_{\lambda}}\Big(b^{[k]}-t_kX^\mathsf{T}\big(Xb^{[k]}-y\big)\Big)$;
		\State 2. $k\gets k+1$.
    \EndWhile
  \end{algorithmic}
\end{algorithm}

\noindent It is known that $t_i$'s could be selected in different ways to ensure that $f(b^{(k)})$ converges to the optimal value \cite{FISTA}, \cite{FISTA2}.
\subsection{Proximal operator for gSLOPE}
Let $I=\{I_1,\ldots,I_m\}$, $l_i$ be rank of submatrix $X_{I_i}$ for $i=1,\ldots,m$ and $\lambda=(\lambda_1,\ldots,\lambda_m)^\mathsf{T}$ be a vector satisfying $\lambda_1 \geq\ldots\geq\lambda_m\geq0$. We will now employ the proximal gradient method to find the numerical solution to $\eqref{gSLOPE}$. As stated in subsection \ref{subsec:gs2711944}, we can focus on the equivalent optimization problem \eqref{24111029}, namely we aim to solve problem
\begin{equation}
\label{27111021}
b^*:=\argmin {b}\ \ \bigg\{\frac12\big\|y-\widetilde{X}b\big\|_2^2+\sigma J_{\lambda}\Big( W\iI{b}\Big)\bigg\},
\end{equation}
with $\II = \{\II_1,\ldots,\II_m\}$ being a partition of the set $\{1,\ldots,\widetilde{p}\}$ where $\widetilde{p}=l_1+\ldots+l_m$.

Without loss of generality we assume that $\sigma=1$. Since considered objective is of form (\ref{07030001}), we can apply proximal gradient algorithm, provided that norm $J_{\lambda, \II, W}$ is prox-capable. To compute the proximal operator for $J_{\lambda, \II, W}$ we we must be able to minimize $\frac1{2t}\|y-b\|_2^2+J_{\lambda, \II, W}(b),$ for any $y\in\mathbb{R}^{\widetilde{p}}$ and $t>0$. Multiplying objective by positive number, $t$, does not change the solution. Such operation leads to new objective function, $\frac12\|y-b\|_2^2+J_{t\lambda, \II, W}(b)$. This shows that it is enough to derive a fast algorithm for finding the numerical solution to the problem
\begin{equation}
\label{07030029}
prox_J(y):=\argmin{b}\left\{\frac12\|y-b\|_2^2+J_{\lambda, \II, W}(b)\right\},
\end{equation}
which could be applicable to arbitrary sequence $\lambda_1\geq\ldots\geq\lambda_m\geq 0.$

We will start with situation when $W$ is identity matrix. Simply, then $prox_J(y)$ is proximal operator for function $J_{\lambda,\II}(b):=J_{\lambda}(\I{b})$. In such a case computing (\ref{07030029}) could be immediately reduced to finding prox for $J_{\lambda}$ norm, since thanks to (\ref{17022353}) we have
\begin{equation}
\left\{
\begin{array}{l}
c^*=\argmin{c}\left\{\frac12\big\|\iI{y}-c\big\|_2^2+J_{\lambda}(c)\right\}\\
\big(prox_J(y)\big)_{\II_i}=c^*_i \big(\|y_{\II_i}\|_2\big)^{-1}y_{\II_i} ,\quad i=1,\ldots,m
\end{array}.
\right.
\end{equation}
Consequently, $prox_J(y)$ could be obtained by applying two steps procedure: find $c^*$ by using fast prox algorithm for $J_{\lambda}$ for vector $\iI{y}$, and compute $prox_J(y)$ by applying simple calculus to $c^*$.

Consider now general situation with fixed positive numbers $w_1,\ldots, w_m$ and define diagonal matrix $M$ by conditions ${M_{\II_i,\II_i}:=w_i^{-1}\mathbf{I}_{l_i}}$, for $i=1,\ldots,m$. Then
\begin{equation}
J_{\lambda,\II,W}(b)=J_{\lambda}\big(W\iI{b}\big) = J_{\lambda}\big(\iI{M^{-1}b}\big)=J_{\lambda,\II}\big(M^{-1}b\big).
\end{equation}
Since $M$ is nonsingular, we can substitute $\eta:=M^{-1}b$ and consider equivalent formulation of \eqref{27111021}
\begin{equation}
\label{07031414}
\left\{
\begin{array}{l}
\eta^*:=\argmin{\eta}\left\{\frac12\|y-\X M\eta\|_2^2+J_{\sigma\lambda,\II}\big(\eta\big)\right\},\\
b^* = M\eta^*
\end{array}.
\right.
\end{equation}
Therefore, after modifying the design matrix, gSLOPE can be always recast as problem with unit weights. Since $J_{\lambda,\II}$ is prox-capable, applying proximal gradient method to (\ref{07031414}) is straightforward. To implement the method introduced in this article, we have used a modified version of Procedure \ref{24021312}, the accelerated proximal gradient method known as FISTA \cite{FISTA}. In particular FISTA gives a precise procedure for choosing steps sizes, to achieve a fast convergence rate. To derive proper stopping criteria, we have considered dual problem to gSLOPE, described in the following section, and employed the strong duality property.  


\subsection{Dual norm and conjugate of grouped sorted $\ell_1$ norm}
\label{subsec:app26112110}
Let $f:\mathbb{R}^p\rightarrow \mathbb{R}$ be a norm. We will use notation $f^D$ to refer to the dual norm to $f$, i.e function defined as $f^D(x):=\underset{b}\max\big\{x^\mathsf{T}b: f(b)\leq1\big\}$. It could be shown (see \cite{SLOPE2}), that the set $C_{\lambda}$, defined as
$C_{\lambda}:=\Big\{x\in\mathbb{R}^p:\ \sum_{i=1}^k|x|_{(i)}\leq\sum_{i=1}^k\lambda_i,\ k=1,\ldots,p\Big\}$, is unit ball of the dual norm to $J_{\lambda}$ for any nonnegative, nonincreasing sequence $\{\lambda_i\}_{i=1}^p$ with at least one nonzero element. We will now consider the dual norm to $J_{\lambda,I,W}(b)=J_{\lambda}\big(W\I{b}\big)$. It holds
\begin{equation}
\begin{split}
J_{\lambda,I,W}^D(x)=\ &\underset{b}\max\big\{x^\mathsf{T}b:\ J_{\lambda,I,W}(b)\leq 1\big\}=\underset{b}\max\big\{x^\mathsf{T}b:\ J_{\lambda}(W\I{b})\leq1\big\}=\\
&\underset{b,c}\max\big\{x^\mathsf{T}b:\ J_{\lambda}(c)\leq1,\ c=W\I{b}\big\}=\underset{c}\max\big\{x^\mathsf{T}b^c:\ J_{\lambda}(c)\leq1,\ c\succeq0\big\},
\end{split}
\end{equation}
where $b^c$ is defined as $b^c:=\argmax{b}\left\{x^\mathsf{T}b:\ c=W\I{b}\right\}.$ This problem is separable and for each $i$ we have $b^c_{I_i}=\argmax{}\big\{x_{I_i}^\mathsf{T}b_{I_i}:\ c^2_i=w_i^2\|b_{I_i}\|^2_2\big\}$. Applying the Lagrange multiplier method quickly yields $x_{I_i}^\mathsf{T}b_{I_i}^c=c_iw_i^{-1}\|x_{I_i}\|_2$. Consequently,
\begin{equation}
\begin{split}
J_{\lambda,I,W}^D(x) =&\ \underset{c}\max\big\{(W^{-1}\I{x})^\mathsf{T}c:\ J_{\lambda}(c)\leq1,\ c\succeq0\big\} =\\ &\ \underset{c}\max\big\{(W^{-1}\I{x})^\mathsf{T}c:\ J_{\lambda}(c)\leq1\big\} =
\ J_{\lambda}^D\big(W^{-1}\I{x}\big).
\end{split}
\end{equation}
Therefore, $\big\{x:\ J_{\lambda,I, W}^D(x)\leq 1\big\}=\big\{x:\ J_{\lambda}^D(W^{-1}\I{x})\leq 1\big\}=\big\{x:\ W^{-1}\I{x}\in C_{\lambda}\big\}.$ Since the conjugate of norm is equal to zero for arguments from unit ball of dual norm, and equal to infinity otherwise, we immediately get
\begin{corollary}
\label{07071055}
The conjugate function for $J_{\lambda,I,W}$ is the function $J^*_{\lambda,I,W}$ defined as
\begin{equation}
J^*_{\lambda,I,W}(x)=\left\{\begin{array}{cl}
0,&W^{-1}\I{x}\in C_{\lambda}\\
\infty,&\textrm{otherwise}
\end{array}
\right..
\end{equation}
\end{corollary}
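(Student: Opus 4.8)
The plan is to deduce the formula for $J^*_{\lambda,I,W}$ from a single general principle of convex analysis together with the dual-norm computation already performed above. The principle is that for \emph{any} norm $f$ on $\mathbb{R}^p$ its Fenchel conjugate $f^*(x):=\sup_b\{x^\mathsf{T}b-f(b)\}$ is exactly the indicator of the closed unit ball of the dual norm $f^D$: it vanishes when $f^D(x)\leq 1$ and equals $+\infty$ otherwise. Once this is in hand, I would simply set $f:=J_{\lambda,I,W}$ --- which is a norm by the Proposition of the previous subsection --- and invoke the already established equivalence $J_{\lambda,I,W}^D(x)\leq 1 \iff W^{-1}\I{x}\in C_{\lambda}$ to read off the stated two-case formula.

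To establish the general principle I would argue in two cases. If $f^D(x)\leq 1$, the defining duality inequality $x^\mathsf{T}b\leq f^D(x)\,f(b)\leq f(b)$ gives $x^\mathsf{T}b-f(b)\leq 0$ for every $b$, while the choice $b=0$ attains the value $0$; hence the supremum is exactly $0$. If instead $f^D(x)>1$, then by the definition of the dual norm there is a vector $b_0$ with $f(b_0)\leq 1$ and $x^\mathsf{T}b_0>1\geq f(b_0)$. Exploiting positive homogeneity, I would evaluate the objective along the ray $t b_0$: since $x^\mathsf{T}(tb_0)-f(tb_0)=t\big(x^\mathsf{T}b_0-f(b_0)\big)$ and the bracket is strictly positive, letting $t\to\infty$ forces $f^*(x)=+\infty$.

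Combining the two steps yields $J^*_{\lambda,I,W}(x)=0$ precisely when $J_{\lambda,I,W}^D(x)\leq 1$, i.e. when $W^{-1}\I{x}\in C_{\lambda}$, and $+\infty$ otherwise, which is the claim. I do not anticipate any genuine obstacle here: the substantive work --- identifying the dual norm as $J_\lambda^D(W^{-1}\I{\cdot})$ and recognizing $C_{\lambda}$ as the dual unit ball of $J_\lambda$ --- is already completed in the preceding paragraphs. The only point demanding minor care is the divergence in the second case, where one must use homogeneity rather than a single test vector to guarantee the supremum is truly unbounded; and, strictly speaking, one should first note that $J_{\lambda,I,W}$ is a bona fide norm so that the dual-norm machinery applies, but this is exactly the content of the Proposition established earlier.
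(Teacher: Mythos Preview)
Your proposal is correct and follows exactly the paper's approach: the paper computes $J_{\lambda,I,W}^D(x)=J_\lambda^D(W^{-1}\I{x})$, identifies its unit ball as $\{x:W^{-1}\I{x}\in C_\lambda\}$, and then simply invokes the general fact that the conjugate of a norm is the indicator of the dual unit ball. The only difference is that you spell out a proof of this general fact, whereas the paper takes it as known.
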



\subsection{Stopping criteria for numerical algorithm}
Without loss of generality assume that $\sigma=1$. We will start with optimization problem in (\ref{07031414}), namely
\begin{equation}
\label{24022120}
\minimize{\eta}\ \ f(\eta)=\frac12\|y-\X M\eta\|_2^2+J_{\lambda}\big(\iI{\eta}\big)
\end{equation}
for $\iI{\eta}=\Big(\|\eta_{\II_1}\|_2,\ldots,\|\eta_{\II_m}\|_2\Big)^\mathsf{T}$ and ${M_{\II_i,\II_i}=\frac1{w_i}\mathbf{I}_{l_i}}$, $i=1,\ldots,m$. This problem could be written in equivalent form
\begin{equation}
\label{07071412}
\begin{array}{cl}
\minimize{\eta, r, c}&\frac12\|r\|^2_2+c\\[.1cm]
\textrm{s.t.}&\left\{\begin{array}{l}J_{\lambda,\II}(\eta)-c\leq0\\y-r-\X M\eta=0\end{array}\right.
\end{array}
\end{equation}
$\big($notice that for $(\eta^*, r^*, c^*)$ being solution, it must occurs $c^*=J_{\lambda,\II}(\eta^*)\big)$. Since ($\ref{07071412}$) is convex and $(\eta_0, r_0, c_0)$, for $\eta_0=0$, $r_0=y$ and $c_0=1$, is strictly feasible, the strong duality holds. Lagrange dual function for this problem is given by
\begin{equation}
\begin{split}
g(\mu,\nu)=&\inf_{\eta,r,c}\ \bigg\{\frac12\|r\|_2^2+c+\mu^\mathsf{T}\big(y-r-\X M\eta\big)+\nu\big(J_{\lambda,\II}(\eta)-c\big)\bigg\}=\\
&\mu^\mathsf{T}y+\inf_r\ \bigg\{\frac12\|r\|^2_2-\mu^\mathsf{T}r\bigg\}+\inf_c\ \big\{c-\nu c\big\}+\inf_{\eta}\ \big\{-\mu^\mathsf{T}\X M\eta+\nu J_{\lambda,\II}(\eta)\big\}.
\end{split}
\end{equation}
Now, since the minimum of $\frac12\|r\|^2_2-\mu^\mathsf{T}r$ is taken for $r=\mu$, we have
\begin{equation}
g(\mu,\nu)=\mu^\mathsf{T}y-\frac12\|\mu\|^2_2+\inf_c\ \big\{c-\nu c\big\}-J_{\nu\lambda,\II}^*\big((\X M)^\mathsf{T}\mu\big).
\end{equation} 
Then $\nu^*=1$ and from Corollary \ref{07071055}, the dual problem to (\ref{07071412}) is equivalent to
\begin{equation}
\label{07071448}
\begin{array}{cl}
\maximize{\mu}&\mu^\mathsf{T}y-\frac12\|\mu\|_2^2\\[.1cm]
\textrm{s.t.}&\iI{M\X^\mathsf{T}\mu}\in C_{\lambda}
\end{array}.
\end{equation}
Let $(\eta^*,r^*,c^*)$ be primal and $(\mu^*, 1)$ be dual solution to (\ref{07071412}). Obviously, $\mu^*=r^*=y-\X M\eta^*$ and $c^*=J_{\lambda,\II}(\eta^*)$. Furthermore, from strong duality we have
\begin{equation}
\frac12\|y-\X M\eta^*\|^2_2+J_{\lambda,\II}(\eta^*) = (y-\X M\eta^*)^\mathsf{T}y-\frac12\|y-\X M\eta^*\|_2^2,
\end{equation}
which gives $(\X M\eta^*)^\mathsf{T}(y-\X M\eta^*)=J_{\lambda,\II}\big(\eta^*\big)$. Now, for current approximate $\eta^{[k]}$ of solution to (\ref{24022120}), achieved after applying proximal gradient method, we define the current duality gap for $k$ step as
\begin{equation}
\rho(\eta^{[k]})=(\X M\eta^{[k]})^\mathsf{T}(y-\X M\eta^{[k]})-J_{\lambda,\II}\big(\eta^{[k]}\big)
\end{equation} 
and we will determine the infeasibility of $\mu^{[k]}:=y-\X M\eta^{[k]}$ by using the measure
\begin{equation}
\textrm{infeas}\big(\mu^{[k]}\big): = \max\left\{J^D_{\lambda,\II}\big(M \X^\mathsf{T}\mu^{[k]}\big)-1,0\right\}
\end{equation}
To define the stopping criteria we have applied the widely used procedure: treat $\rho(\eta^{[k]})$ as indicator telling how far $\eta^{[k]}$ is from true solution and terminate the algorithm when this difference and infeasibility measure are sufficiently small. Summarizing, we have derived algorithm according to scheme
\begin{algorithm}
  \caption{group SLOPE}
  \begin{algorithmic}[!]
    \State \textbf{input:} infeas.tol:  {\itshape positive number determining the tolerance for infeasibility};
		\State \ \ \ \ \ \ \ \ \ \ dual.tol:\ \ \ {\itshape positive number determining the tolerance for duality gap};
		\State \ \ \ \ \ \ \ \ \ \ \ \ $k:=0$,\ \  $\eta^{[0]}$,\ \ $\mu^{[0]}:=\mu(\eta^{[0]})$,\ \ $\textrm{infeas}^{[0]}:=\textrm{infeas}\big(\mu^{[0]}\big)$,\ \ $\rho^{[0]}:=\rho(\eta^{[0]})$;
    \While{ ( $\textrm{infeas}^{[k]}>\textrm{infeas.tol}$\ \  or\ \  $\rho^{[k]}>\textrm{dual.tol}$) }
		\State 1. $k\gets k+1$;
		\State 2. get $\eta^{[k]}$ from Procedure \ref{24021312};
		\State 3. $\mu^{[k]}:=\mu(\eta^{[k]})$;
		\State 4. $\textrm{infeas}^{[k]}:=\textrm{infeas}\big(\mu^{[k]}\big)$, $\rho^{[k]}:=\rho(\eta^{[k]})$; 
    \EndWhile
		\State $\beta_{gS}: = M\eta^{[k]}$.
  \end{algorithmic}
\end{algorithm}


\section{Alternative representation in the orthogonal case}\label{Sec:alt_rep}

\label{subsec:06232149}
Suppose that the experiment matrix is orthogonal at group level, i.e. it holds $X_{I_i}^\mathsf{T}X_{I_j} = \mathbf{0}$, for every $i,j\in \{1,\ldots,m\}$, $i\neq j$. In such a case, $\widetilde{X}$ in problem \eqref{24111029} is orthogonal matrix, i.e. $\widetilde{X}^\mathsf{T}\widetilde{X}=\mathbf{I}_{\widetilde{p}}$. If $n=\widetilde{p}$, i.e. $\widetilde{X}$ is a square and orthogonal matrix, we also have $\widetilde{X}\widetilde{X}^\mathsf{T} = \mathbf{I}_{\widetilde{p}}$ and it obeys $\|\widetilde{X}^\mathsf{T}b\|^2_2=b^\mathsf{T}\widetilde{X}\widetilde{X}^\mathsf{T}b=\|b\|^2_2$ for $b\in\mathbb{R}^{\widetilde{p}}$. For the general case with $n\geq \widetilde{p}$, we can extend $\widetilde{X}$ to a square matrix by adding new orthonormal columns and defining $\widetilde{X}_C:=\big [ \widetilde{X}\ C \big ]$, where $C$ is composed of vectors (columns) being some complement to orthogonal basis of $\mathbb{R}^{\widetilde{p}}$. For $y\in\mathbb{R}^n$ and $b \in \mathbb{R}^{\widetilde{p}}$ we get:
\begin{equation}\label{ortog} \Big\|y-\widetilde{X}b\Big\|^2_2=\Big\|\widetilde{X}_C^\mathsf{T}\left (y-\widetilde{X}b \right)\Big\|^2_2=\left\| \left[\begin{BMAT}(b,8pt,10pt){c}{cc}\widetilde{X}^\mathsf{T}\\ C^\mathsf{T}\end{BMAT}\right]y - \left[\begin{BMAT}(b,8pt,15pt){c}{cc} b\\ \mathbf{0}\end{BMAT}\right] \right\|^2_2=\Big \|\widetilde{X}^\mathsf{T}y-b\Big \|^2_2+const,\end{equation} 
which implies that under orthogonal situation the optimization problem in $(\ref{24111029})$ could be recast as 
\begin{equation}
\label{gSLOPE_ort}
\argmin b\ \ \left\{\frac 12\big\|\widetilde{y}-b\big\|_2^2+\sigma J_{\lambda}\big( W\iI{b}\big)\right\},
\end{equation}
for $\widetilde{y}: = \widetilde{X}^\mathsf{T}y$. After introducing new variable to problem (\ref{gSLOPE_ort}), namely $c\in\mathbb{R}^m$, we get the equivalent formulation
\begin{equation}
\label{07010812}
\argmin{b,c}\ \ \left\{\frac 12\big\|\widetilde{y}-b\big\|_2^2+\sigma J_{\lambda}(c):\ c=W\iI{b}\right\}.
\end{equation}
\begin{proposition}
\label{07061804}
Let $f(b,c):\mathbb{R}^p\times\mathbb{R}^m\longrightarrow\mathbb{R}$ be any function and consider optimization problem ${\argmin{b,c}\big\{f(b,c):\ (b,c)\in\mathcal{D}\big\}}$ with unique solution $(b^*,c^*)$ and feasible set $\mathcal{D}\subset\mathbb{R}^p\times\mathbb{R}^m$. Define $\mathcal{D}^c:=\big\{c\in\mathbb{R}^m|\ \exists b\in\mathbb{R}^p: (b,c)\in \mathcal{D}\big\}$. Suppose that for any $c\in\mathcal{D}^c$, there exists unique solution, $b^c$, to problem ${\argmin{b}\big\{f(b,c):\ (b,c)\in\mathcal{D}\big\}}$. Moreover, assume that the solution to ${\argmin{c}\big\{f(b^c,c):\ c\in\mathcal{D}^c\big\}}$ is unique. Then, it occurs
\begin{equation}
\label{07061532}
\left\{
\begin{array}{l}
c^*= \argmin{c}\big\{f(b^c,c):\ c\in\mathcal{D}^c\big\}\\
b^* = b^{c^*}
\end{array}
\right..
\end{equation}
\end{proposition}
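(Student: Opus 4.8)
The plan is to treat this as the standard principle of \emph{partial} (nested) minimization: minimizing $f$ jointly over $(b,c)\in\mathcal{D}$ should coincide with first minimizing over $b$ for each fixed $c$, and then minimizing the resulting value over $c$. First I would introduce the value function $g(c):=f(b^c,c)=\min\{f(b,c):(b,c)\in\mathcal{D}\}$, which is well defined on $\mathcal{D}^c$ precisely because the inner problem is assumed to have the unique minimizer $b^c$ for every $c\in\mathcal{D}^c$. The target identity \eqref{07061532} then amounts to saying that $c^*$ minimizes $g$ and that $b^*=b^{c^*}$.

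The first substantive step is to factorize the joint minimum through $\mathcal{D}^c$. Since $\mathcal{D}^c$ is by definition the projection of $\mathcal{D}$ onto its $c$-coordinate, every feasible pair $(b,c)\in\mathcal{D}$ has $c\in\mathcal{D}^c$, and conversely each $c\in\mathcal{D}^c$ is the second coordinate of at least one feasible pair. Grouping the feasible pairs by their $c$-coordinate therefore yields
\[
\min_{(b,c)\in\mathcal{D}} f(b,c)=\min_{c\in\mathcal{D}^c}\ \min_{b:(b,c)\in\mathcal{D}} f(b,c)=\min_{c\in\mathcal{D}^c} g(c).
\]
This factorization is the heart of the argument; it uses no convexity, only that both the inner and the outer minima are attained, which the hypotheses guarantee.

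Next I would set $\tilde c:=\argmin{c}\{g(c):c\in\mathcal{D}^c\}$, unique by assumption, and check that the pair $(b^{\tilde c},\tilde c)$ is both feasible and optimal. Feasibility is immediate: $\tilde c\in\mathcal{D}^c$ forces $(b^{\tilde c},\tilde c)\in\mathcal{D}$ by the very definition of $b^{\tilde c}$. Optimality follows from the factorization, since $f(b^{\tilde c},\tilde c)=g(\tilde c)=\min_{c\in\mathcal{D}^c}g(c)=\min_{(b,c)\in\mathcal{D}}f(b,c)$. Hence $(b^{\tilde c},\tilde c)$ attains the joint minimum value and is therefore a joint minimizer of $f$ over $\mathcal{D}$.

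Finally I would invoke the assumed uniqueness of the joint minimizer: because $(b^*,c^*)$ is the unique minimizer of $f$ over $\mathcal{D}$ and $(b^{\tilde c},\tilde c)$ is also such a minimizer, the two must coincide, giving $c^*=\tilde c=\argmin{c}\{f(b^c,c):c\in\mathcal{D}^c\}$ and $b^*=b^{\tilde c}=b^{c^*}$, which is exactly \eqref{07061532}. I do not expect any serious obstacle here; the only points demanding a little care are verifying the feasibility of $(b^{\tilde c},\tilde c)$ and justifying the interchange of the two minimizations, both of which reduce to the single observation that $\mathcal{D}^c$ is the $c$-projection of $\mathcal{D}$.
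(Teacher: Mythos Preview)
Your proof is correct and is essentially the same argument as the paper's: both show that the pair $(b^{\tilde c},\tilde c)$ obtained from the nested minimization is a joint minimizer, and then conclude by uniqueness. The only cosmetic difference is that the paper phrases the key step as a short contradiction---assuming some $(b^0,c^0)\in\mathcal{D}$ beats $(b^{\tilde c},\tilde c)$ and observing $f(b^{c^0},c^0)\le f(b^0,c^0)<f(b^{\tilde c},\tilde c)$, which contradicts the outer optimality of $\tilde c$---whereas you spell out the factorization $\min_{(b,c)\in\mathcal{D}}f=\min_{c\in\mathcal{D}^c}g(c)$ directly.
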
 
\begin{proof}
Suppose that there exists $(b^0,c^0)\in\mathcal{D}$, such that $f(b^0,c^0)<f(b^*,c^*)$, where $b^*$ and $c^*$ are defined as in (\ref{07061532}). We have
\begin{equation}
f(b^{c^0},c^0)\leq f(b^0,c^0)<f(b^*,c^*)=f(b^{c^*},c^*),
\end{equation}
which leads to the contradiction with definition of $c^*$.
\end{proof}
We will apply the above proposition to (\ref{07010812}). Let $(b^*,c^*)$ be solution to (\ref{07010812}). Then $b^*$ is also solution to convex problem (\ref{gSLOPE_ort}) with strictly convex objective function and therefore is unique. Since $c^*=W\iI{b^*}$, $c^*$ is unique as well. In considered situation ${\mathcal{D}^c=\big\{c:\ c\succeq0\big\}}$. We will start with solving the problem $b^c=\argmin{b}\ \left\{\frac 12\big\|\widetilde{y}-b\big\|_2^2+\sigma J_{\lambda}(c):\ c=W\iI{b}\right\}.$ The additive constant in the objective could be omitted. Moreover, for each $i\in\{1,\ldots,m\}$ we have
\begin{equation}
\label{07011714}
b^c_{\II_i}=\argmin{b_{\II_i}}\left\{\big\|\widetilde{y}_{\II_i}-b_{\II_i}\big\|^2_2:\ w^2_i\|b_{\II_i}\|^2_2-c^2_i=0\right\}.
\end{equation}
The Lagrange Multipliers method quickly yields $b^c_{\II_i}=(w_i\|\widetilde{y}_{\II_i}\|_2)^{-1}c_i\widetilde{y}_{\II_i}$ and, consequently, it holds $\left\|\widetilde{y}_{\II_i}-b^c_{\II_i}\right\|_2^2=\left(\|\widetilde{y}_{\II_i}\|_2-w_i^{-1}c_i\right)^2.$
From Proposition \ref{07061804}, we get the following procedure for solution, $b^*$, to problem (\ref{gSLOPE_ort})
\begin{equation}
\left\{
\begin{array}{l}
c^*=\argmin{c}\left\{\frac12\sum_{i=1}^m\big(\|\widetilde{y}_{\II_i}\|_2-w^{-1}_ic_i\big)^2+J_{\sigma\lambda}(c)\right\}\\
b^*_{\II_i}=c^*_i \big(w_i\|\widetilde{y}_{\II_i}\|_2\big)^{-1}\widetilde{y}_{\II_i} ,\quad i=1,\ldots,m
\end{array}
\right.
\end{equation}
(notice that we applied Proposition \ref{cnonneg} to omit the constraints $c\succeq0$ and that the objective function in definition of $c^*$ is strictly feasible, which guarantees the unique solution. The above procedure yields conclusion, that indices of groups estimated by gSLOPE as relevant coincide with the support of solution to SLOPE problem with diagonal matrix having inverses of weights $w_1,\ldots,w_m$ on diagonal. Moreover, after defining $\widetilde{\beta}\in\mathbb{R}^{\widetilde{p}}$ by conditions $\widetilde{\beta}_{\II_i}:=R_i\beta_{I_i}$, $i=1,\ldots,m$, we simply have $\iI{\widetilde{\beta}} = \XI{\beta}$ and
\begin{equation}
\widetilde{y}=\widetilde{X}^Ty=\widetilde{X}^T\bigg(\sum_{i=1}^mU_iR_i\beta_{I_i}+z\bigg) = \widetilde{X}^T\big(\widetilde{X}\widetilde{\beta}+z\big) = \widetilde{\beta}+\widetilde{X}^Tz,\quad \textrm{hence}\ \widetilde{y}\sim \mathcal{N}\big(\widetilde{\beta},\ \sigma^2 \mathbf{I}_{\widetilde{p}}\big).
\end{equation}

Summarizing, if the assumption about the orthogonality at groups level is in use, one can consider the statistically equivalent model $\widetilde{y}\sim \mathcal{N}\big(\widetilde{\beta},\ \sigma^2 \mathbf{I}_{\widetilde{p}}\big)$, define truly relevant groups via the support of $\iI{\widetilde{\beta}}$ and  treat the vector $\iI{b^*}=\big(\frac{c^*_1}{w_1},\ldots, \frac{c^*_m}{w_m}\big)$ as an gSLOPE estimate of group effect sizes, where $b^*$ and $c^*$ are defined in \eqref{17022353}, i.e. it holds $\iI{b^*}=\XI{\beta^\ES{gS}}$ for any solution $\beta^\ES{gS}$ to problem $\eqref{gSLOPE}$.


\section{SLOPE with diagonal experiment matrix}\label{ap:diagonal}

Let $y\in\mathbb{R}^p$ be fixed vector and $d_1,\ldots,d_p$ be positive numbers. We will use notation $diag(d_1,\ldots,d_p)$ to define the diagonal matrix $D$ such as $D_{i,i}=d_i$ for $i=1,\ldots, p$. Denote $d:=(d_1,\ldots, d_p)^\mathsf{T}$ and let $b^*$ be the solution to SLOPE optimization problem with diagonal experiment matrix, i.e. the solution to
\begin{equation}
\label{diagSLOPE}
\minimize{b}\,f(b): = \frac 12\big\|y-Db\big\|_2^2+J_{\lambda}\big( b\big).
\end{equation} 
Since $f$ is strictly convex function, the solution to (\ref{diagSLOPE}) is unique. It is easy to observe, that changing sign of $y_i$ corresponds to changing sign at $i$th coefficient of solution as well as permuting coefficients of $y$ together with $d_i's$ permutes coefficients of $b^*$. We will summarize this observations below without proofs.
\begin{proposition}
\label{PropPrem}
Let $\pi:\{1,\ldots,p\}\longrightarrow\{1,\ldots,p\}$ be given permutation with $P_{\pi}$ as corresponding matrix. Then:
\newline i) $P_{\pi}DP_{\pi}^\mathsf{T}=diag(d_{\pi(1)},\ldots, d_{\pi(p)})$;
\newline ii) $b_{\pi}:=P_{\pi}b^*$ is solution to 
$\minimize{b}\,f_{\pi}(b): = \frac12\Big\|P_{\pi}y-P_{\pi}DP_{\pi}^\mathsf{T}b\Big\|_2^2+ J_\lambda(b);$
\newline iii) $b_S:=Sb^*$ is solution to
$\minimize{b}\,f_S(b): = \frac12\Big\|Sy-Db\Big\|_2^2+ J_\lambda(b),$
\newline where $S$ is diagonal matrix with entries on diagonal coming from set $\{-1,1\}$.
\end{proposition}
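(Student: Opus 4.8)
The plan is to prove (i) by a direct entrywise computation and then to deduce (ii) and (iii) from a single change-of-variables argument that exploits the invariance properties of the sorted $\ell_1$ norm $J_\lambda$. For (i) I would fix the convention that $P_\pi$ acts by $(P_\pi b)_i = b_{\pi(i)}$, so that $(P_\pi)_{ik} = \delta_{k,\pi(i)}$, and compute the $(i,j)$ entry of $P_\pi D P_\pi^\mathsf{T}$. Since $D$ is diagonal, $(P_\pi D P_\pi^\mathsf{T})_{ij} = \sum_k (P_\pi)_{ik} d_k (P_\pi)_{jk} = \sum_k \delta_{k,\pi(i)} d_k \delta_{k,\pi(j)}$, which vanishes unless $\pi(i)=\pi(j)$, i.e. $i=j$, and equals $d_{\pi(i)}$ on the diagonal. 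This is exactly $\mathrm{diag}(d_{\pi(1)},\dots,d_{\pi(p)})$, giving (i).

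The genuine content of (ii) and (iii) is that $J_\lambda$ depends only on the multiset of absolute values of the coordinates, by its very definition through the sorted vector $|b|_{(1)} \ge \cdots \ge |b|_{(p)}$. Consequently $J_\lambda(P_\pi b) = J_\lambda(b)$ for every permutation matrix $P_\pi$ and $J_\lambda(Sb) = J_\lambda(b)$ for every $\pm 1$ diagonal matrix $S$, since reordering or reflecting coordinates leaves their absolute values untouched. I would combine this with the orthogonality identities $\|P_\pi v\|_2 = \|v\|_2$, $\|Sv\|_2 = \|v\|_2$, $P_\pi^\mathsf{T} P_\pi = I$ and $S^2 = I$, together with the fact that the sign matrix $S$ commutes with the diagonal matrix $D$, i.e. $DS = SD$.

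Concretely for (ii), substituting $b = P_\pi b'$ and using $P_\pi^\mathsf{T} P_\pi = I$ to simplify $P_\pi D P_\pi^\mathsf{T} P_\pi = P_\pi D$ yields $f_\pi(P_\pi b') = \frac12\|P_\pi(y - Db')\|_2^2 + J_\lambda(P_\pi b') = \frac12\|y-Db'\|_2^2 + J_\lambda(b') = f(b')$. Likewise for (iii), substituting $b = Sb'$ and using $DS = SD$ gives $f_S(Sb') = \frac12\|S(y-Db')\|_2^2 + J_\lambda(Sb') = f(b')$. In both cases the new objective equals the original $f$ precomposed with an invertible linear bijection, so its minimizer is obtained by applying that bijection to $b^*$, producing $b_\pi = P_\pi b^*$ and $b_S = Sb^*$. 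Strict convexity of $f$, already noted below \eqref{diagSLOPE}, transfers through the linear change of variables, so each of these minimizers is the unique solution of the respective problem.

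I expect no serious obstacle here: once the invariances are in place, the argument is essentially bookkeeping. The only points requiring care are fixing the permutation-matrix convention consistently so that (i) reads as stated, and invoking the permutation and sign invariance of $J_\lambda$ explicitly, since this is precisely what makes the penalty term transform correctly under the change of variables and constitutes the real substance of the statement.
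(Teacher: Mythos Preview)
Your argument is correct. The paper itself does not prove this proposition; it is introduced with the remark ``We will summarize this observations below without proofs'' and treated as an elementary bookkeeping fact. Your change-of-variables argument, using that $J_\lambda$ depends only on the multiset of absolute values together with the orthogonality of $P_\pi$ and $S$ and the commutativity $DS=SD$, is exactly the natural way to fill in the omitted details, and nothing more is needed.
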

\begin{proposition}
\label{cnonneg}
If $y\succeq 0$, then $b^*\succeq 0$.
\end{proposition}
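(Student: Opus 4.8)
The plan is to argue by a reflection (sign-flip) argument that exploits the strict convexity—hence uniqueness—of the minimizer together with the sign-invariance of $J_\lambda$. The idea is to show that replacing $b^*$ by its coordinatewise absolute value never increases the objective, so uniqueness forces the two to coincide.

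Concretely, I would introduce the candidate $\tilde b := |b^*|$, the coordinatewise absolute value of the unique solution $b^*$, and aim to prove $f(\tilde b) \le f(b^*)$. For the penalty term this is immediate: since $J_\lambda$ depends only on the sorted absolute values of its argument, $J_\lambda(\tilde b) = J_\lambda(|b^*|) = J_\lambda(b^*)$, so that contribution is unchanged. The whole content of the proof is therefore confined to the quadratic data-fidelity term.

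Because $D = \mathrm{diag}(d_1,\ldots,d_p)$ is diagonal, $\|y - Db\|_2^2 = \sum_{j=1}^p (y_j - d_j b_j)^2$ splits coordinatewise, and it suffices to compare $(y_j - d_j |b^*_j|)^2$ with $(y_j - d_j b^*_j)^2$ for each $j$. When $b^*_j \ge 0$ the two agree. When $b^*_j < 0$, writing $c := d_j|b^*_j| > 0$ and using $y_j \ge 0$, I would invoke $(y_j + c)^2 - (y_j - c)^2 = 4 y_j c \ge 0$ to conclude $(y_j - d_j|b^*_j|)^2 \le (y_j - d_j b^*_j)^2$. Summing over $j$ gives $\|y - D\tilde b\|_2^2 \le \|y - Db^*\|_2^2$, and hence $f(\tilde b) \le f(b^*)$.

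Finally, since $b^*$ is the \emph{unique} minimizer of the strictly convex $f$, the inequality $f(\tilde b) \le f(b^*)$ can only hold with equality, which forces $\tilde b = b^*$ and therefore $b^* = |b^*| \succeq 0$. There is no genuine obstacle here; the only point requiring care is recognizing that the two hypotheses—$y \succeq 0$ and $d_j > 0$—are exactly what make the coordinatewise comparison go the right way, and that both the sign-invariance of $J_\lambda$ and the uniqueness of $b^*$ must be used. As an alternative route one could instead apply Proposition \ref{PropPrem}(iii) with a sign-flip matrix $S$ that flips precisely the coordinates on which $b^*$ is negative, but the direct coordinatewise estimate above is the most transparent.
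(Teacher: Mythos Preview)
Your proof is correct and follows essentially the same reflection/sign-flip idea as the paper: both exploit that $J_\lambda$ is invariant under coordinatewise sign changes and that the quadratic term can only decrease when a negative coordinate is flipped, given $y_j\ge 0$ and $d_j>0$. The only cosmetic difference is that the paper splits into the cases $y_r=0$ (where it zeros out the offending coordinate to get strict decrease) and $y_r>0$ (where it flips the sign), whereas you handle both at once by taking $\tilde b=|b^*|$ and invoking uniqueness.
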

\begin{proof}
Suppose that for some $r$ it occurs $b_r<0$ for any $b\in \mathbb{R}^p$. If $y_r=0$, then taking $\widehat{b}$ defined as $\widehat{b}_i:=\left\{\begin{array}{ll}0,&i=r\\ b_i,&\textrm{otherwise}\end{array}\right.$, we get $|\widehat{b}|\preceq |b|$ and Corollary \ref{17021046} gives $J_{\lambda}(\widehat{b})\leq J_{\lambda}(b)$. Consequently,
\begin{equation*}
f(b)-f(\widehat{b})\geq\frac 12\big\|y-Db\big\|_2^2 - \frac 12\big\|y-D\widehat{b}\big\|_2^2= \frac12(y_r-d_rb_r)^2-\frac12(y_r+d_r\widehat{b}_r)^2=\frac12d_r^2b_r^2>0.
\end{equation*}
Hence $b$ could not be the solution. Now consider case when $y_r>0$ and define $\widehat{b}$ by putting $\widehat{b}_i:=\left\{\begin{array}{ll}-b_r,&i=r\\ b_i,&\textrm{otherwise}\end{array}\right..$
Then we have $J_{\lambda}(b)=J_{\lambda}(\widehat{b})$ and
\begin{equation*}
f(b)-f(\widehat{b})=\frac12(y_r-d_rb_r)^2-\frac12(y_r+d_rb_r)^2=-2y_rd_rb_r>0.
\end{equation*}
and, as before, $b$ could not be optimal.
\end{proof}
\begin{proposition}
\label{06021048}
Let $b^*$ be the solution to problem (\ref{diagSLOPE}), $\{y_i\}_{i=1}^p$ be nonnegative sequence, $\{d_i\}_{i=1}^p$ be the sequence of positive numbers and assume that
\begin{equation}
d_1y_1\geq\ldots\geq d_py_p.
\end{equation}
If $b^*$ has exactly $r$ nonzero entries for $r>0$, then the set $\{1,\ldots,r\}$ corresponds to the support of $b^*$. 
\end{proposition}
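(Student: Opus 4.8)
The plan is to argue by contradiction, reducing everything to a comparison of the one–sided directional derivatives of the objective $f$ at $b^*$ along two well chosen coordinate directions. First I would record that, since $y\succeq 0$, Proposition \ref{cnonneg} gives $b^*\succeq 0$, so on a neighbourhood of $b^*$ the penalty $J_{\lambda}$ may be treated as a sorted sum of nonnegative entries. Assume the conclusion fails. Because $b^*$ has exactly $r$ nonzero entries and its support $S$ differs from $\{1,\dots,r\}$, a counting argument produces indices $j\le r$ with $b^*_j=0$ and $k\ge r+1$ with $b^*_k>0$; since $j<k$, the hypothesis $d_1y_1\ge\cdots\ge d_py_p$ yields $d_jy_j\ge d_ky_k$.

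Next I would extract two inequalities from first–order optimality. Perturbing the inactive coordinate upward, set $b'=b^*+\varepsilon e_j$ for small $\varepsilon>0$. For $\varepsilon$ below the smallest positive entry of $b^*$, the new entry occupies rank $r+1$ among the sorted magnitudes, so $J_{\lambda}(b')=J_{\lambda}(b^*)+\lambda_{r+1}\varepsilon$ exactly, while the quadratic term changes by $-d_jy_j\,\varepsilon+\tfrac12 d_j^2\varepsilon^2$. Optimality of $b^*$ forces the leading coefficient to be nonnegative, i.e. $d_jy_j\le\lambda_{r+1}$. Perturbing the active coordinate downward, replace $b^*_k$ by $b^*_k-\delta$ for small $\delta>0$: the quadratic term changes by $(d_ky_k-d_k^2b^*_k)\delta$ and the penalty decreases at the rate $\lambda_\rho$, where $\rho\le r$ is the (bottom–of–tie) rank of $b^*_k$ among the positive entries; hence optimality gives $d_ky_k-d_k^2b^*_k\ge\lambda_\rho\ge\lambda_r$, and since $b^*_k>0$ this is strict, $d_ky_k>\lambda_r$.

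Combining the two bounds with the ordering $d_jy_j\ge d_ky_k$ and the monotonicity $\lambda_r\ge\lambda_{r+1}$ of the tuning sequence will yield
\begin{equation*}
\lambda_{r+1}\ \ge\ d_jy_j\ \ge\ d_ky_k\ >\ \lambda_r\ \ge\ \lambda_{r+1},
\end{equation*}
a contradiction. Therefore $S=\{1,\dots,r\}$, as claimed.

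I expect the only delicate point to be the exact evaluation of the marginal penalty rates of $J_{\lambda}$ — that activating a previously zero coordinate costs $\lambda_{r+1}$ per unit and that shrinking the active coordinate $b^*_k$ saves exactly $\lambda_\rho$ with $\rho\le r$ — which amounts to correctly identifying the rank positions in the sorted magnitude vector and, in the presence of ties among the entries of $b^*$, selecting the appropriate one–sided slope (the bottom of the tie block when decreasing, and rank $r+1$ when introducing a new small entry). Once these rates are pinned down from the sorting, the remainder is the routine chain of inequalities above. Alternatively, one may phrase the same two inequalities through the subgradient characterization $D(y-Db^*)\in\partial J_{\lambda}(b^*)$ together with the dual ball $C_{\lambda}$, but the direct directional–derivative computation keeps the tie analysis most transparent.
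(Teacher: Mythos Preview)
Your proof is correct, and it takes a genuinely different route from the paper's. The paper argues inductively that $b^*_j>0\Rightarrow b^*_{j-1}>0$ via a \emph{single} ``mass--transfer'' perturbation: given adjacent indices with $b^*_{j-1}=0$ and $b^*_j>0$, it moves $\varepsilon$ from coordinate $j$ to coordinate $j-1$ simultaneously, invokes Proposition~\ref{Props 3} to conclude that $J_\lambda$ does not increase under this transfer, and then checks that the quadratic part strictly decreases because $d_{j-1}y_{j-1}\ge d_jy_j$. Your argument instead performs two \emph{separate} one--coordinate perturbations at possibly distant indices $j\le r$ (inactive) and $k\ge r+1$ (active), reads off the directional derivatives of $J_\lambda$ explicitly to obtain $d_jy_j\le\lambda_{r+1}$ and $d_ky_k>\lambda_r$, and closes with the chain $\lambda_{r+1}\ge d_jy_j\ge d_ky_k>\lambda_r\ge\lambda_{r+1}$.

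What each approach buys: the paper's mass--transfer avoids any direct rank/tie bookkeeping for $J_\lambda$ by outsourcing it to Proposition~\ref{Props 3}, and never needs to identify a specific $\lambda_i$. Your approach is more self--contained (Proposition~\ref{Props 3} is not used), makes the role of the particular thresholds $\lambda_r$ and $\lambda_{r+1}$ transparent, and is essentially the one--coordinate KKT argument; the price is the tie analysis you flag, which you handle correctly (activating a zero coordinate costs $\lambda_{r+1}$; decreasing an active coordinate saves $\lambda_\rho$ with $\rho$ the bottom--of--tie rank, hence $\rho\le r$). Both arguments use the ordering hypothesis $d_1y_1\ge\cdots\ge d_py_p$ at exactly one comparison, just placed differently.
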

\begin{proof}
It is enough to show that $$\big(j\in \{2,\ldots,m\},\ \ b^*_j\neq 0\big)\ \Longrightarrow\ b^*_{j-1} \neq 0.$$ Suppose that this is not true. From Proposition \ref{cnonneg} we know that $b^*$ is nonnegative, hence we can find $i$ from $\{2,\ldots,m\}$ such as $b_j^*>0$ and $b_{j-1}^*=0$. For $\varepsilon \in \big(0,b_j^*/2]$ define vector $b_{\varepsilon}$ by putting $(b_\varepsilon)_{j-1}:=\varepsilon$, $(b_\varepsilon)_j:=b_j^*-\varepsilon$ and $(b_\varepsilon)_i:=b^*_i$ for $i\notin\{j,l\}$. From Proposition \ref{Props 3} we have that $J_{\lambda}(b_{\varepsilon})\leq J_{\lambda}(b^*)$, which gives
\begin{equation}
\begin{split}
f(b^*)-f(b_{\varepsilon})&\geq \frac12\big(y_{j-1}-d_{j-1}b^*_{j-1}\big)^2 + \frac12\big(y_j-d_jb^*_j\big)^2 \\
&- \frac12\big(y_{j-1}-d_{j-1}b_{\varepsilon}(j-1)\big)^2 - \frac12\big(y_j-d_jb_{\varepsilon}(j)\big)^2=\\
&\varepsilon\left(A - \frac{d_{j-1}^2+d_j^2}2\cdot\varepsilon\right),\\
&\textrm{for }A:=(y_{j-1}d_{j-1}-y_jd_j)+d_j^2b_j^*>0.
\end{split}
\end{equation}
Therefore, we can find $\varepsilon>0$ such as $f(b^*)>f(b_{\varepsilon})$, which contradicts the optimality of $b^*$.
\end{proof}
Consider now problem (\ref{diagSLOPE}) with arbitrary sequence $\{y_i\}_{i=1}^p$. Suppose that $b^*$ has exactly $r>0$ nonzero coefficients and that $\pi:\{1,\ldots,p\}\longrightarrow\{1,\ldots,p\}$ is permutation which gives the order of magnitudes for $Dy$, i.e. $d_{\pi(1)}|y|_{\pi(1)}\geq\ldots\geq d_{\pi(p)}|y|_{\pi(p)}$. Basing on our previous observations, we get important
\begin{corollary}
\label{06021230}
If $b^*$ is the solution to (\ref{diagSLOPE}) having exactly $r>0$ nonzero coefficients and $\pi$ is permutation which places components of $D|y|$ in a nonincreasing order, i.e. $d_{\pi(i)}|y|_{\pi(i)}=|Dy|_{(i)}$ for $i=1,\ldots, p$, then the support of $b^*$ is composed of the set $\{\pi(1),\ldots,\pi(r)\}$.
\end{corollary}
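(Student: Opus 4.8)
The plan is to reduce the general situation to the special case already settled in Proposition \ref{06021048}, which requires the response to be nonnegative and the products $d_iy_i$ to be arranged in nonincreasing order. Since Proposition \ref{PropPrem} records exactly how the unique minimizer of the diagonal SLOPE problem (\ref{diagSLOPE}) transforms under coordinate sign changes and under permutations, the whole argument becomes a bookkeeping exercise: compose these two transformations, apply Proposition \ref{06021048}, and track what happens to the support under each step.

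First I would strip off the signs. Let $S=\operatorname{diag}(s_1,\ldots,s_p)$ with $s_i=1$ when $y_i\geq 0$ and $s_i=-1$ otherwise, so that $Sy=|y|\succeq 0$. Part iii) of Proposition \ref{PropPrem} then yields that $\widetilde b:=Sb^*$ is the solution to the diagonal SLOPE problem with the same matrix $D$ but response $|y|$. Because $S$ merely flips signs, $|\widetilde b|=|b^*|$, so $\widetilde b$ still has exactly $r$ nonzero entries and $\operatorname{supp}(\widetilde b)=\operatorname{supp}(b^*)$. It therefore suffices to locate the support of $\widetilde b$.

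Next I would sort. Using the permutation $\pi$ from the statement, which orders the now nonnegative products as $d_{\pi(1)}|y|_{\pi(1)}\geq\cdots\geq d_{\pi(p)}|y|_{\pi(p)}$, part ii) of Proposition \ref{PropPrem} shows that $P_{\pi}\widetilde b$ solves the problem with diagonal matrix $P_{\pi}DP_{\pi}^\mathsf{T}=\operatorname{diag}(d_{\pi(1)},\ldots,d_{\pi(p)})$ and response $P_{\pi}|y|$, whose $i$-th coordinate is $|y|_{\pi(i)}$. This transformed problem has a nonnegative response, a positive diagonal, and products in nonincreasing order, so it satisfies all hypotheses of Proposition \ref{06021048}. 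Since a permutation preserves the number of nonzeros, $P_{\pi}\widetilde b$ again has exactly $r$ nonzero coefficients, and Proposition \ref{06021048} identifies its support as $\{1,\ldots,r\}$.

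Finally I would translate back. With the convention behind Proposition \ref{PropPrem} one has $(P_{\pi}\widetilde b)_i=\widetilde b_{\pi(i)}$, so the support $\{1,\ldots,r\}$ of $P_{\pi}\widetilde b$ corresponds to $\widetilde b_{\pi(i)}\neq 0$ exactly for $i\leq r$; that is, $\operatorname{supp}(\widetilde b)=\{\pi(1),\ldots,\pi(r)\}$, which we already identified with $\operatorname{supp}(b^*)$. I expect the only delicate point to be keeping the permutation-versus-inverse-permutation conventions consistent between Proposition \ref{PropPrem} and the defining relation $d_{\pi(i)}|y|_{\pi(i)}=|Dy|_{(i)}$; once those are pinned down the conclusion is immediate, and no new estimates are required beyond the two reduction steps and Proposition \ref{06021048}.
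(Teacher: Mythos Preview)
Your proposal is correct and is precisely the argument the paper intends: the paper states the corollary immediately after Proposition~\ref{06021048} with only the remark ``Basing on our previous observations,'' and the observations it has in mind are exactly the sign-change and permutation covariance of Proposition~\ref{PropPrem} together with Proposition~\ref{06021048}. You have simply written out the two reduction steps the paper leaves implicit, and your tracking of the permutation convention $(P_\pi b)_i=b_{\pi(i)}$ is consistent with Proposition~\ref{PropPrem}~i).
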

The next three lemmas were proven in \cite{SLOPE} in situation when $d_1=\ldots=d_p=1$. We will follow the reasoning from this paper to prove the generalized claims. The main difference is that in general case the solution to considered problem (\ref{diagSLOPE}) does not have to be nonincreasingly ordered, under assumption that $d_1y_1\geq\ldots\geq d_py_p\geq0$ (which is the case for $d_1=\ldots=d_p=1$). This makes that generalizations of proofs presented in \cite{SLOPE} are not straightforward. 
\begin{lemma}
\label{lemma1}
Consider nonnegative sequence $\{y_i\}_{i=1}^p$ and sequence of positive numbers $\{d_i\}_{i=1}^p$ such as $d_1y_1\geq\ldots\geq d_py_p.$ If $b^*$ is solution to problem (\ref{diagSLOPE}) having exactly $r$ nonzero entries, then for every $j\leq r$ it holds that
\begin{equation}
\label{06291805}
\sum_{i=j}^r(d_iy_i-\lambda_i)>0
\end{equation}
and for every $j\geq r+1$
\begin{equation}
\label{06021637}
\sum_{i=r+1}^j(d_iy_i-\lambda_i)\leq0.
\end{equation}
\end{lemma}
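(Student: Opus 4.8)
The plan is to prove both inequalities by comparing $f(b^*)$ with the value of $f$ at explicit one-parameter perturbations of $b^*$, and then extracting a sign condition from optimality by letting the perturbation size vanish. First I would pin down the support: by Proposition \ref{cnonneg} we have $b^* \succeq 0$, and since $y \succeq 0$ together with $d_1 y_1 \geq \ldots \geq d_p y_p$ means the identity permutation already orders $D|y|$, Corollary \ref{06021230} gives that the support of $b^*$ is exactly $\{1,\ldots,r\}$; that is, $b^*_i > 0$ for $i \leq r$ and $b^*_i = 0$ for $i > r$. These two facts are what make the perturbations below admissible.

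For \eqref{06021637}, fix $j \geq r+1$ and for small $t>0$ define $b_t$ by $(b_t)_i := t$ for $r+1 \leq i \leq j$ and $(b_t)_i := b^*_i$ otherwise. Since the newly activated values all equal $t$ and, for $t$ small, lie strictly below the fixed positive values $b^*_1,\ldots,b^*_r$, they occupy the sorted positions $r+1,\ldots,j$, so $J_\lambda(b_t) - J_\lambda(b^*) = t\sum_{i=r+1}^{j}\lambda_i$ exactly. Expanding the quadratic term and using $b^*_i = 0$ there gives $\tfrac12\|y-Db_t\|_2^2 - \tfrac12\|y-Db^*\|_2^2 = -t\sum_{i=r+1}^{j} d_i y_i + O(t^2)$. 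Optimality forces $f(b_t) - f(b^*) \geq 0$; dividing by $t$ and letting $t\to 0^+$ yields $\sum_{i=r+1}^{j}(d_i y_i - \lambda_i) \leq 0$.

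For \eqref{06291805}, fix $j \leq r$ and set $(b_t)_i := b^*_i - t$ for $j \leq i \leq r$ and $(b_t)_i := b^*_i$ otherwise, which stays in the support for $t$ small. The smooth part contributes $\tfrac12\|y-Db_t\|_2^2 - \tfrac12\|y-Db^*\|_2^2 = t\sum_{i=j}^{r}(d_i y_i - d_i^2 b^*_i) + O(t^2)$. The delicate point — and where the generalization genuinely departs from \cite{SLOPE} — is that $b^*$ need not be monotonically ordered, so coordinates $j,\ldots,r$ need not occupy the bottom sorted positions and the change in $J_\lambda$ cannot be read off directly. I would handle this with the rearrangement inequality (Theorem 368 in \cite{RearIneq}, as used in Proposition \ref{Props 3}): writing $\pi$ for a permutation sorting $b_t$, one has $J_\lambda(b^*) \geq \sum_m \lambda_{\pi^{-1}(m)} b^*_m$, whence $J_\lambda(b^*) - J_\lambda(b_t) \geq t\sum_{m=j}^{r}\lambda_{\pi^{-1}(m)}$. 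Because $b_t$ has exactly $r$ positive coordinates, the ranks $\{\pi^{-1}(m): j \leq m \leq r\}$ are $r-j+1$ distinct elements of $\{1,\ldots,r\}$, and since $\lambda$ is nonincreasing the sum of $\lambda$ over any such set is minimized by the largest indices, so $\sum_{m=j}^{r}\lambda_{\pi^{-1}(m)} \geq \sum_{i=j}^{r}\lambda_i$. Combining the two pieces, $f(b_t) - f(b^*) \leq t\big(\sum_{i=j}^{r}(d_i y_i - d_i^2 b^*_i) - \sum_{i=j}^{r}\lambda_i\big) + O(t^2)$; optimality and $t\to 0^+$ then give $\sum_{i=j}^{r}(d_i y_i - \lambda_i) \geq \sum_{i=j}^{r} d_i^2 b^*_i$, and the right-hand side is strictly positive because $b^*_i > 0$ and $d_i > 0$ for $i \leq r$, which is exactly \eqref{06291805}.

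The main obstacle is precisely the loss of monotonicity of $b^*$ in the first inequality: the rearrangement bound is what converts the exact-but-unknown penalty change into the usable lower bound $\sum_{i=j}^{r}\lambda_i$, and the strictness then comes for free from the positivity of the active coordinates. The second inequality needs no such care, because the activated coordinates are introduced equal and small, which pins their sorted positions exactly.
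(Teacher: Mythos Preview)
Your proof is correct and follows essentially the same approach as the paper: the same two perturbations (subtracting $t$ on $\{j,\ldots,r\}$, activating at level $t$ on $\{r+1,\ldots,j\}$), the same use of the rearrangement inequality to lower-bound $J_\lambda(b^*)-J_\lambda(b_t)$ by $t\sum_{i=j}^r\lambda_i$ via the observation that the perturbed ranks lie in $\{1,\ldots,r\}$, and the same extraction of strictness from $\sum_{i=j}^r d_i^2 b_i^*>0$. The only cosmetic differences are that the paper invokes Proposition~\ref{06021048} rather than Corollary~\ref{06021230} for the support, defines its sorting permutation directly on $\{1,\ldots,r\}$, and writes out the quadratic remainder explicitly instead of using $O(t^2)$.
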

\begin{proof}
From Proposition \ref{06021048} we know that $b_i^*>0$ for $i\in\{1,\ldots,r\}$. Let us define $$\widetilde{b}_i:=\left\{\begin{array}{rl}b_i^*-h,&i\in\{j,\ldots,r\}\\b^*_i,&\textrm{otherwise}.\end{array}\right.,$$ where we restrict only to sufficiently small values of $h$, so as to the condition $\widetilde{b}_i>0$ is met for all $i$ from $\{j,\ldots,r\}$. For such $h$ we have $b^*_{(r+1)}=\ldots=b^*_{(p)}=\widetilde{b}_{(r+1)}=\ldots=\widetilde{b}_{(p)}=0$. Therefore there exists permutation $\pi:\{1,\ldots,r\}\longrightarrow\{1,\ldots,r\}$ such as $\sum_{i=1}^r\lambda_i\widetilde{b}_{(i)}=\sum_{i=1}^r\lambda_{\pi(i)}\widetilde{b}_i$. For such permutation we have
\begin{equation}
\label{06021616}
\begin{split}
J_{\lambda}(b^*)-J_{\lambda}(\widetilde{b})=&\ \sum_{i=1}^r\lambda_ib^*_{(i)}-\sum_{i=1}^r\lambda_i\widetilde{b}_{(i)}=\sum_{i=1}^r\lambda_ib^*_{(i)}-\sum_{i=1}^r\lambda_{\pi(i)}\widetilde{b}_i\\
\geq&\ \sum_{i=1}^r\lambda_{\pi(i)}b^*_i-\sum_{i=1}^r\lambda_{\pi(i)}\widetilde{b}_i=h\sum_{i=j}^r\lambda_{\pi(i)}\geq h\sum_{i=j}^r\lambda_i,
\end{split}
\end{equation}
where the first inequality follows from the rearrangement inequality and second is the consequence of monotonicity of $\{\lambda_i\}_{i=1}^p$. We also have
\begin{equation}
\label{06021617}
\begin{split}
\|y-Db^*\|_2^2-\|y-D\widetilde{b}\|_2^2=&\ \sum_{i=j}^r(y_i-d_ib_i^*)^2-\sum_{i=j}^r(y_i-d_ib_i^*+d_ih)^2\\
=&\ 2h\sum_{i=j}^r(d_i^2b_i^*-d_iy_i)-h^2\sum_{i=j}^rd_i^2.
\end{split}
\end{equation}
Optimality of $b^*$, (\ref{06021616}) and (\ref{06021617}) yield
\begin{equation}
0\geq f(b^*)-f(\widetilde{b})\geq h\sum_{i=j}^r(d_i^2b_i^*-d_iy_i+\lambda_i)-\frac12h^2\sum_{i=j}^rd_i^2,
\end{equation}
for each $h$ from the interval $[0,\varepsilon]$, where $\varepsilon>0$ is some (sufficiently small) value. This gives ${\sum_{i=j}^r(d_i^2b_i^*-d_iy_i+\lambda_i)\leq 0}$ and consequently
\begin{equation}
\sum_{i=j}^r(d_iy_i-\lambda_i)\geq \sum_{i=j}^rd_i^2b_i^*>0.
\end{equation}
To prove claim (\ref{06021637}), consider a new sequence defined as $\widetilde{b}_i:=\left\{\begin{array}{rl}h,&i\in\{r+1,\ldots,j\}\\b^*_i,&\textrm{otherwise}.\end{array}\right..$ We will restrict our attention only to $0<h<\min\{b_i^*:\ i\leq	 r\},$ so as to $b^*_{(\cdot)}$ and $\widetilde{b}_{(\cdot)}$ are given by applying the same permutation to $b^*$ and $\widetilde{b}$, respectively. Moreover, for each $i$ from $\{r+1,\ldots,j\}$ it holds $\widetilde{b}_{(i)}=\widetilde{b}_i=h$. From optimality of $b^*$
\begin{equation*}
0\geq f(b^*)-f(\widetilde{b})=\frac12\sum_{i=r+1}^j\left(y_i^2-(y_i-d_ih)^2\right)-\sum_{i=r+1}^j\lambda_ih=h\sum_{i=r+1}^j(d_iy_i-\lambda_i)-\frac12h^2\sum_{i=r+1}^jd_i^2,
\end{equation*}
for all considered $h$, which leads to (\ref{06021637}). 
\end{proof}
\begin{lemma}
\label{lemma2}
Let $b^*$ be solution to problem (\ref{diagSLOPE}) with nonnegative, nonincreasing sequence $\{\lambda_i\}_{i=1}^p$. Let $R(b^*)$ be number of all nonzeros in $b^*$ and $r\geq1$. Then, for any $i\in\{1,\ldots,p\}$
$$\big\{y:\ b^*_i\neq 0\textrm{ and }R(b^*)=r\big\}=\big\{y:\ d_i|y_i|>\lambda_r\textrm{ and }R(b^*)=r\big\}.$$
\end{lemma}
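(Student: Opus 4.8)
The plan is to reduce, for each fixed $y$, to the special situation in which $y\succeq 0$ and the products $d_iy_i$ are sorted in nonincreasing order, and then to read off the threshold $\lambda_r$ directly from Lemma \ref{lemma1}. Since both displayed sets are cut out by conditions on the very same solution $b^*=b^*(y)$, it is enough to show that for every fixed $y$ with $R(b^*)=r$ one has the equivalence $b^*_i\neq 0 \Longleftrightarrow d_i|y_i|>\lambda_r$.

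First I would remove signs. By Proposition \ref{PropPrem}(iii), the solution associated with the data vector $|y|$ equals $Sb^*$ for a suitable diagonal sign matrix $S$; this vector has exactly the same support and the same value of $R$ as $b^*$, while $d_i|y_i|$ is untouched. Hence we may assume $y\succeq 0$. Next I would permute: by Proposition \ref{PropPrem}(ii) we may apply the permutation $\pi$ that sorts the products so that $d_{\pi(1)}y_{\pi(1)}\geq\cdots\geq d_{\pi(p)}y_{\pi(p)}$. This carries the solution to $P_\pi b^*$, preserves $R=r$, and moves the coordinate originally indexed by $i$ to position $\pi^{-1}(i)$; the two conditions $b^*_i\neq 0$ and $d_iy_i>\lambda_r$ are each equivalent to the corresponding condition at that position. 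Thus it suffices to treat the case $y\succeq 0$ and $d_1y_1\geq\cdots\geq d_py_p\geq 0$, for an arbitrary index which I continue to call $i$.

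In this sorted situation Proposition \ref{06021048} pins down the support of $b^*$ to be exactly $\{1,\dots,r\}$, so $b^*_i\neq 0$ is equivalent to $i\leq r$, and it remains only to match this with the inequality $d_iy_i>\lambda_r$. Here I would invoke Lemma \ref{lemma1}. Taking $j=r$ in the first bound (\ref{06291805}) gives $d_ry_r-\lambda_r>0$, whence for every $i\leq r$ the ordering yields $d_iy_i\geq d_ry_r>\lambda_r$. Conversely, taking $j=r+1$ in the second bound (\ref{06021637}) gives $d_{r+1}y_{r+1}\leq\lambda_{r+1}\leq\lambda_r$, so for every $i\geq r+1$ the ordering together with the monotonicity of $\lambda$ forces $d_iy_i\leq d_{r+1}y_{r+1}\leq\lambda_r$. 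Consequently $i\leq r$ holds if and only if $d_iy_i>\lambda_r$, which is precisely the desired equivalence (the case $r=p$ being vacuously covered on the second side).

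The only delicate points are bookkeeping ones: checking that the sign reduction and the sorting permutation transport both the distinguished index $i$ and the count $R$ correctly, and observing that the strict inequality in (\ref{06291805}) lines up with the strict ``$>\lambda_r$'' in the claim while the non-strict inequality in (\ref{06021637}) lines up with its negation, so that the threshold falls cleanly at $\lambda_r$ with no boundary ambiguity.
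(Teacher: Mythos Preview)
Your proof is correct and follows essentially the same approach as the paper: both reduce to the sorted nonnegative case (you do this explicitly via Propositions~\ref{PropPrem} and~\ref{06021048}, the paper does it via Corollary~\ref{06021230}) and then apply Lemma~\ref{lemma1} with $j=r$ and $j=r+1$ to extract the threshold $\lambda_r$. The only cosmetic difference is that the paper keeps the permutation $\pi$ in the notation throughout, while you first pass to sorted coordinates and work there; the key inequalities $d_{\pi(r)}|y|_{\pi(r)}>\lambda_r$ and $d_{\pi(r+1)}|y|_{\pi(r+1)}\leq\lambda_{r+1}$ are identical to yours.
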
 
\begin{proof} 
Suppose that $b^*$ has $r>0$ nonzero coefficients and let $\pi$ be permutation which places components of $D|y|$ in a nonincreasing order. From Corollary \ref{06021230} it holds that $\{i:\ b^*_i\neq0\} = \{\pi(1),\ldots, \pi(r)\}$. Define $\widetilde{y}:=P_{\pi}Sy$ and $\widetilde{D}:=P_{\pi}DP_{\pi}\T{T}$, for $S$ being the diagonal matrix such as $S_{i,i}=sgn(y_i)$. Then $P_{\pi}Sb^*$ is solution to problem
\begin{equation}
\label{06291803}
\argmin{b} \frac12\left\|\widetilde{y}-\widetilde{D}b\right\|^2_2+ J_{\lambda}(b),
\end{equation}
which satisfies the assumptions of Lemma \ref{lemma1}. Taking $j=r$ in (\ref{06291805}) and $j=r+1$ in $(\ref{06021637})$ we immediately get
\begin{equation}
\label{06291817}
d_{\pi(r)}|y|_{\pi(r)}>\lambda_r\ \ \textrm{and}\ \ d_{\pi(r+1)}|y|_{\pi(r+1)}\leq \lambda_{r+1}.
\end{equation}
We will now show that $\big\{y:\ b^*_i\neq 0\textrm{ and }R(b^*)=r\big\}\subset\big\{y:\ d_i|y_i|>\lambda_r\textrm{ and }R(b^*)=r\big\}.$ Fix $i\in\{1,\ldots,p\}$ and suppose that $b^*_i$ is nonzero coefficient. Then $i\in\{\pi(1),\ldots,\pi(r)\}$ and therefore $d_i|y_i|\geq d_{\pi(r)}|y|_{\pi(r)}>\lambda_r$, thanks to first inequality from (\ref{06291817}).
To show the second inclusion assume that $d_i|y_i|>\lambda_r$. Then, from the second inequality in (\ref{06291817}), $d_i|y_i|>\lambda_{r+1}\geq d_{\pi(r+1)}|y|_{\pi(r+1)}$, which gives $i\in\{\pi(1),\ldots,\pi(r)\}.$
\end{proof}
\begin{lemma}
\label{lemma3}
For given sequence $\{y_i\}_{i=1}^p$, sequence of positive numbers $\{d_i\}_{i=1}^p$, nonincreasing, nonnegative sequence $\{\lambda_i\}_{i=1}^p$ and fixed $j\in \{1,\ldots,p\}$, consider a following procedure
\begin{itemize}
\item define ${\widetilde{y}:=(y_1,\ldots,y_{j-1},y_{j+1},\ldots,y_p)^\mathsf{T}}$, $\widetilde{D}:=diag(d_1,\ldots,d_{j-1},d_{j+1},\ldots,d_p)$, $\widetilde{d}_i:=\widetilde{D}_{i,i}$ for $i=1,\ldots,p-1$ and ${\widetilde{\lambda}:=(\lambda_2,\ldots,\lambda_p)^\mathsf{T}}$;
\item find $\widetilde{b}^*:=\argmin{b\in\mathbb{R}^{p-1}}\frac{1}{2}\big\|\widetilde{y}-\widetilde{D}b\big\|_2^2+J_{\widetilde{\lambda}}(b);$
\item define $\widetilde{R}^j(\widetilde{b}^*):=|\{i:\ \widetilde{b}^*_i\neq 0\}|.$
\end{itemize}
Then for $r\geq1$ it holds $\big\{y:\ d_j|y_j|>\lambda_r\textrm{ and }R(b^*)=r\big\}\subset\big\{y:\ d_j|y_j|>\lambda_r\textrm{ and }\widetilde{R}^j(\widetilde{b}^*)=r-1\big\}.$
\end{lemma}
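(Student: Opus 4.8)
The plan is to reduce to a sorted, nonnegative problem, locate the removed coordinate inside the active set, and then show that deleting that coordinate together with dropping the largest penalty $\lambda_1$ lowers the support size by exactly one. Throughout I may assume $y \succeq 0$: by Proposition \ref{PropPrem}(iii) flipping the sign of any $y_i$ only flips the sign of the corresponding coordinate of the solution, so none of $R(b^*)$, $\widetilde{R}^j(\widetilde{b}^*)$, or the hypothesis $d_j|y_j| > \lambda_r$ is affected. Fix $y$ in the left-hand set, so that $R(b^*) = r$ and $d_j y_j > \lambda_r$; Lemma \ref{lemma2} then forces $b_j^* \neq 0$, i.e.\ the $j$th coordinate is active. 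Relabelling as permitted by Proposition \ref{PropPrem}(ii) so that $u_1 := d_1 y_1 \ge \cdots \ge u_p := d_p y_p$, Corollary \ref{06021230} identifies the support of $b^*$ with $\{1,\ldots,r\}$, and the removed index now sits at some sorted position $k \in \{1,\ldots,r\}$.

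Since nonzero counts are permutation invariant (Proposition \ref{PropPrem}(ii)), I may also sort the reduced problem: its $dy$-products are the values $v_i$ obtained from $(u_i)$ by deleting $u_k$, namely $v_i = u_i$ for $i < k$ and $v_i = u_{i+1}$ for $i \ge k$, penalized by $\widetilde{\lambda} = (\lambda_2,\ldots,\lambda_p)$, so that position $i$ carries weight $\lambda_{i+1}$. By Lemma \ref{lemma1} the true support size $s = \widetilde{R}^j$ of the reduced solution satisfies
\begin{equation*}
\sum_{i=t}^{s}\bigl(v_i - \lambda_{i+1}\bigr) > 0 \quad (t \le s), \qquad \sum_{i=s+1}^{t}\bigl(v_i - \lambda_{i+1}\bigr) \le 0 \quad (t \ge s+1).
\end{equation*}
A short contradiction argument shows that at most one integer $s$ can satisfy both families: if two values $s_1 < s_2$ did, the suffix condition at $s_1$ evaluated up to $s_2$ and the prefix condition at $s_2$ evaluated down to $s_1+1$ would force the same sum to be simultaneously $\le 0$ and $> 0$. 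Hence it suffices to verify that $s = r-1$ meets both families, whence $\widetilde{R}^j = r-1$ and $y$ lies in the right-hand set.

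To check $s = r-1$ I translate the reduced partial sums back to the full problem, where Lemma \ref{lemma1} gives $\sum_{i=t}^{r}(u_i - \lambda_i) > 0$ for $t \le r$ and $\sum_{i=r+1}^{t}(u_i - \lambda_i) \le 0$ for $t \ge r+1$. For the suffix inequalities with $t \le r-1$: when $t > k$ every index exceeds $k$, so $v_i = u_{i+1}$ and a shift of the summation index yields exactly $\sum_{i=t+1}^{r}(u_i - \lambda_i) > 0$; when $t \le k$ the same manipulation produces $u_t - u_k + \sum_{i=t+1}^{r}(u_i - \lambda_i)$, which is positive since $u_t \ge u_k$ by monotonicity and the remaining sum is positive. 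For the prefix inequalities with $t \ge r$, all indices are $\ge r \ge k$, so again $v_i = u_{i+1}$ and the sum equals $\sum_{i=r+1}^{t+1}(u_i - \lambda_i) \le 0$. Both families therefore hold at $s = r-1$.

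I expect the delicate point to be the bookkeeping in the last paragraph: the deletion of $u_k$ together with the downward shift of the penalty vector must be tracked carefully, and the borderline case $t \le k$ is precisely where the monotonicity gap $u_t - u_k \ge 0$ is needed to absorb the term that the index shift leaves behind. A secondary subtlety is that Lemma \ref{lemma1} is stated only as a necessary condition, so the uniqueness argument that singles out $r-1$ as the sole admissible support size is essential rather than cosmetic.
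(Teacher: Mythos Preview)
Your proof is correct, and it takes a genuinely different route from the paper's argument.

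The paper proves the claim by two separate contradiction arguments applied directly to the reduced optimization problem: assuming the reduced solution $\widetilde{b}^*$ has $k-1 < r-1$ (respectively $k \ge r$) nonzeros, it constructs an explicit perturbation $\widehat{b}$ of $\widetilde{b}^*$ and shows, using the partial-sum inequalities of Lemma~\ref{lemma1} for the \emph{full} problem, that $F(\widehat{b}) < F(\widetilde{b}^*)$, contradicting optimality. Only after this is done in the sorted case does the paper invoke a permutation to handle arbitrary $y$.

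You instead observe that the conclusions of Lemma~\ref{lemma1} pin down the support size uniquely: if two candidate values $s_1<s_2$ both satisfied the prefix and suffix families, the sum $\sum_{i=s_1+1}^{s_2}(v_i-\lambda_{i+1})$ would have to be simultaneously $\le 0$ and $>0$. Since Lemma~\ref{lemma1} guarantees the true $\widetilde{R}^j$ satisfies both families, it therefore suffices to verify that $r-1$ does too, which you do by an index shift back to the full problem. This is cleaner in that it avoids building perturbations and working with the objective $F$ at all; the price is that it depends on recognising (and proving) that Lemma~\ref{lemma1}'s necessary conditions are in fact characterising, which you handle correctly. The paper's approach is more self-contained in the sense that it re-uses exactly the same perturbation technique that proved Lemma~\ref{lemma1}, so a reader who has just read that proof sees the same mechanism again; your approach factors that mechanism out entirely. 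Both rely on the same key ingredients: monotonicity of $(u_i)$ to absorb the $u_t-u_k$ term when $t\le k$, and the index alignment $\widetilde{\lambda}_i=\lambda_{i+1}$, $v_i=u_{i+1}$ for $i\ge k$ that makes the reduced partial sums match the full ones shifted by one.
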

\begin{proof}
We have to show that solution $\widetilde{b}^*$ to problem
\begin{equation}
\label{07022239}
\minimize{b}\ F(b): = \frac{1}{2}\sum_{i=1}^{p-1}\left(\widetilde{y}_i-\widetilde{d}_ib_i\right)^2+\sum_{i=1}^{p-1}\widetilde{\lambda}_ib_{(i)}
\end{equation}
has exactly $r-1$ nonzero coefficients. From Proposition \ref{PropPrem} we know that the change of signs of $y_i$'s does not affect the support, hence without loss of generality we can assume that $\widetilde{y}\succeq 0$, and $\widetilde{b}^*\succeq 0$ as a result (from Proposition \ref{cnonneg}). We will start with situation when $d_1y_1\geq\ldots\geq d_py_p$ and consequently $\widetilde{d}_1\widetilde{y}_1\geq\ldots\geq \widetilde{d}_{p-1}\widetilde{y}_p$. If $j$ is fixed index such as $d_j|y_j|>\lambda_r$ and $R(b^*)=r$, this gives
\begin{equation}
\label{08020010}
j\in\{1,\ldots,r\}.
\end{equation}
To show that solution to (\ref{07022239}) has at least $r-1$ nonzero entries, suppose by contradiction that $\widetilde{b}^*$ has exactly $k-1$ nonzero entries with $k<r$. Let us define $\widehat{b}\in\mathbb{R}^{p-1}$ as
$$\widehat{b}_i:=\left\{\begin{array}{ll}h,&i\in\{k,\ldots,r-1\}\\ \widetilde{b}^*_i,&\textrm{otherwise}\end{array}\right.,$$
where $0<h<\min\{\widetilde{b}_1^*,\ldots,\widetilde{b}_{k-1}^*\}$. Then 
\begin{equation}
F(\widetilde{b}^*)-F(\widehat{b})=h\sum_{i=k}^{r-1}(\widetilde{d}_i\widetilde{y}_i-\widetilde{\lambda}_i)-h^2\sum_{i=k}^{r-1}\frac12\widetilde{d}_i^2.
\end{equation}
Now
\begin{equation}
\sum_{i=k}^{r-1}(\widetilde{d}_i\widetilde{y}_i-\widetilde{\lambda}_i)=\sum_{i=k+1}^r(\widetilde{d}_{i-1}\widetilde{y}_{i-1}-\lambda_i)\geq \sum_{i=k+1}^r(d_iy_i-\lambda_i)>0,
\end{equation}
where the first equality follows from $\widetilde{\lambda}_i=\lambda_{i+1}$, the first inequality from $\widetilde{d}_{i-1}\widetilde{y}_{i-1}\geq d_iy_i$ and the second from Lemma \ref{lemma1}. If $h$ is small enough, we get $F(\widehat{b})<F(\widetilde{b}^*)$ which leads to contradiction.

Suppose now by contradiction that $\widetilde{b}^*$ has $k$ nonzero entries with $k\geq r$ and define
$$\widehat{b}_i:=\left\{\begin{array}{ll}\widetilde{b}^*_i-h,&i\in\{r,\ldots,k\}\\ \widetilde{b}^*_i,&\textrm{otherwise}\end{array}\right..$$
Analogously to (\ref{06021616}), we get
$J_{\widetilde{\lambda}}(\widetilde{b}^*)-J_{\widetilde{\lambda}}(\widehat{b})\geq h\sum_{i=r}^k\widetilde{\lambda}_i$ and consequently
\begin{equation}
F(\widetilde{b}^*)-F(\widehat{b})\geq h\left[\sum_{i=r}^k(\widetilde{\lambda}_i-\widetilde{d}_i\widetilde{y}_i)+\sum_{i=r}^k\widetilde{d}_i^2\widetilde{b}^*_i\right]-\frac12h^2\sum_{i=r}^k\widetilde{d}^2_i.
\end{equation}
Now
\begin{equation}
\sum_{i=r}^k(\widetilde{\lambda}_i-\widetilde{d}_i\widetilde{y}_i)=\sum_{i=r+1}^{k+1}(\lambda_i-d_iy_i)\geq0,
\end{equation}
where the first equality follows from definition of $\widetilde{\lambda}$ and ($\ref{08020010}$), while the inequality follows from Lemma \ref{lemma1}. If $h$ is small enough, we get $F(\widehat{b})<F(\widetilde{b}^*)$, which contradicts the optimality of $\widetilde{b}^*$.

Consider now general situation, i.e. without assumption concerning the order of $D|y|$. Suppose that $\pi$, with corresponding matrix $P_{\pi}$, is permutation which orders $D|y|$. Define $y_{\pi}: = P_{\pi}y$ and $D_{\pi}: =P_{\pi}DP_{\pi}^\mathsf{T}$. Applying the procedure described in the statement of Lemma simultaneously to $(y, D, \lambda)$ for $j$, and to $(y_{\pi}, D_{\pi}, \lambda)$ for $\pi(j)$ we end with $\big(\widetilde{y},\widetilde{D},\widetilde{\lambda},\widetilde{R}_1^j(\widetilde{b}^*)\big)$ and $\big(\widetilde{y_{\pi}},\widetilde{D_{\pi}},\widetilde{\lambda},\widetilde{R}_2^{\pi(j)}(\widetilde{b}_{\pi}^*)\big)$. It is straightforward to see, that there exists permutation $\widetilde{\pi}:\{1,\ldots,p-1\}\longrightarrow \{1,\ldots,p-1\}$ such that $\widetilde{y_{\pi}}=P_{\widetilde{\pi}}\widetilde{y}$ and $\widetilde{D_{\pi}} = P_{\widetilde{\pi}}\widetilde{D}P_{\widetilde{\pi}}^\mathsf{T}$. From Proposition \ref{PropPrem} we have that $\widetilde{b}_{\pi}^*=P_{\widetilde{\pi}}\widetilde{b}^*$ and $\widetilde{R}_1^j(\widetilde{b}^*)=\widetilde{R}_2^{\pi(j)}(\widetilde{b}_{\pi}^*)$. Moreover, from the first part of proof $\widetilde{R}_2^{\pi(j)}(\widetilde{b}_{\pi}^*)=r-1,$ which gives the claim. 
\end{proof}

\newcommand{\E}{\mathbb{E}}
\renewcommand{\P}{\mathbb{P}}
\newcommand{\goto}{\rightarrow}
\newcommand{\var}{\operatorname{Var}}
\newcommand{\iid}{i.i.d. }
\newcommand{\floor}[1]{\lfloor #1 \rfloor}
\renewcommand{\d}{\mathrm{d}}
\section{Minimax estimation of gSLOPE}
\label{sec:minim-estim-gslope}
\begin{proof}[Proof of Theorem~\ref{minimax}]
Once again we will employ the equivalent formulation of gSLOPE under assumption about orthogonality at groups level, i.e. problem \eqref{17022353}, and we will consider statistically equivalent model $\widetilde{y}\sim \mathcal{N}\big(\widetilde{\beta},\ \sigma^2 \mathbf{I}_{\widetilde{p}}\big)$, with $\widetilde{\beta}_{\II_i}=R_i\beta_{I_i}$, $i=1,\ldots,m$. Then $\XI{\beta}=\iI{\Beta}$ and for solution $b^*$ to \eqref{17022353} it holds $\iI{b^*}=\XI{\beta^\ES{gS}}$ for any solution $\beta^\ES{gS}$ to problem $\eqref{gSLOPE}$. Without loss of generality, assume $\sigma = 1$. Note that $\|\y_{\II_i}\|_2^2$ is distributed as the noncentral $\chi^2_{l_i}(\|\Beta_{\II_i}\|_2^2)$, where $\|\Beta_{\II_i}\|_2^2$ is the noncentrality. 

The lower bound of the minimax risk can be obtained as follows. For each $\II_{i}$, only $\Beta_j$ with the smallest index $j \in \II_i$ is \textit{possibly} nonzero and the rest $l_i-1$ components of $\Beta_{\II_i}$ are fixed to be zero. Then, this is reduced to a simple Gaussian sequence model with length $m$ and sparsity at most $k$. Given the condition $k/m \goto 0$, this classical sequence model has minimax risk $(1+o(1)) 2k\log(m/k)$ (see e.g. \cite{donoho1994minimax}).

Our next step is to evaluate the worst risk of gSLOPE over the nearly black object. We would completes the proof if we show this worst risk is bounded above by $(1+o(1)) 2k\log(m/k)$. For simplicity, assume that $\|\Beta_{\II_i}\|_2 = 0$ for all $i \ge k+1$ and write $\mu_i = \|\Beta_{\II_i}\|_2, \zeta_i = \|\y_{\II_i}\|_2 \sim \chi_{l_i}(\mu_i^2)$. Denote by $\widehat\zeta$ the SLOPE solution. Then, the risk is
\[
\E\|\widehat\zeta - \mu\|_2^2 = \E \sum_{i=1}^k (\widehat\zeta_i - \mu_i)_2^2 + \E\sum_{i=k+1}^m \widehat\zeta_i^2.
\]
Then, it suffices to show
\begin{equation}\label{eq:on_risk}
\E \left[ \sum_{i=1}^k (\widehat\zeta_i - \mu_i)^2 \right] \le (1+o(1)) 2k\log(m/k)
\end{equation}
and
\begin{equation}\label{eq:off_risk}
\E \left[ \sum_{i=k+1}^m \widehat\zeta_i^2 \right] = o(1)2k\log(m/k).
\end{equation}
Below, Lemmas~\ref{lm:off_supp1}, \ref{lm:off_supp2}, and \ref{lm:off_supp3} together give \eqref{eq:off_risk}. The remaining part of this proof serves to validate \eqref{eq:on_risk}. To start with, we employ the representation $\zeta_i^2 = (\xi_{i1} + \mu_i)^2 + \xi_{i2}^2 + \cdots + \xi_{i l_i}^2$ for \iid $\xi_{ij} \sim \mathcal{N}(0,1)$ (we can assume this representation without loss of generality, since the distribution of $(\xi_{i1} + a_1)^2 + (\xi_{i2}+a_2)^2 + \cdots + (\xi_{i l_i}+a_{l_i})^2$ depends only on the non-centrality $a_1^2 + \cdots + a_{l_i}^2$). As in the proof of Lemma 3.2 in \cite{WE}, we get
\begin{equation}\label{eq:tri_risk}
\begin{aligned}
\sum_{i=1}^k (\widehat\zeta_i - \mu_i)^2  &\le  \left( \|\widehat\zeta_{[1:k]} - \zeta_{[1:k]}\|_2 + \|\zeta_{[1:k]} - \mu_{[1:k]}\|_2 \right)^2\\
& \le \left( \|\lambda_{[1:k]}\|_2 + \|\zeta_{[1:k]} - \mu_{[1:k]}\|_2 \right)^2.
\end{aligned}
\end{equation}
As $l$ is fixed and $k/m \goto 0$, \cite{chisquareQ} gives $\lambda_i \sim \sqrt{2\log\frac{m}{qi}}$ for all $i \le k$. From this we know
\begin{equation}\label{eq:lambda_sum}
\|\lambda_{[1:k]}\|_2^2 = \sum_{i=1}^k \lambda_i^2 \sim 2k\log\frac{m}{k}.
\end{equation}
Next, we see
\begin{multline}\nonumber
\left| \sqrt{(\xi_{i1} + \mu_i)^2 + \xi_{i2}^2 + \cdots + \xi_{i l_i}^2} - \mu_i \right| \le \sqrt{\xi_{i2}^2 + \cdots + \xi_{i l_i}^2} + |\xi_{i1}| \\
\le 2\sqrt{\xi_{i1}^2 + \xi_{i2}^2 + \cdots + \xi_{i l_i}^2} \equiv 2\|\xi_i\|_2,
\end{multline}
which yields
\begin{equation}\label{eq:chi_sum}
\|\zeta_{[1:k]} - \mu_{[1:k]}\|_2^2 \le 4\sum_{i=1}^k \|\xi_i\|_2^2
\end{equation}
Note that $\sum_{i=1}^k \|\xi_i\|_2^2$ is distributed as the chi-square with $l_1 + \cdots + l_k \le lk$ degrees of freedom. Taking \eqref{eq:lambda_sum} and \eqref{eq:chi_sum} together, from \eqref{eq:tri_risk} we get
\begin{align*}
\E \left[ \sum_{i=1}^k (\widehat\zeta_i - \mu_i)^2 \right]  &\le \|\lambda_{[1:k]}\|_2^2 +\E \|\zeta_{[1:k]} - \mu_{[1:k]}\|_2^2 + 2\|\lambda_{[1:k]}\|_2 \E \|\zeta_{[1:k]} - \mu_{[1:k]}\|_2\\
&\le (1+o(1))2k\log\frac{m}{k} + 4lk + 2\sqrt{(1+o(1))2k\log\frac{m}{k}} \cdot \sqrt{4lk}\\
& \sim (1+o(1))2k\log\frac{m}{k},
\end{align*}
where the last step makes use of $m/k \goto \infty$. This establishes \eqref{eq:on_risk} and consequently completes the proof.

\end{proof}

The following three lemmas aim to prove \eqref{eq:off_risk}. Denote by $\zeta_{(1)} \ge \cdots \ge \zeta_{(m-k)}$ the order statistics of $\zeta_{k+1}, \ldots, \zeta_m$. Recall that $\zeta_i \sim \chi_{l_i}$ for $i \ge k+1$. As in the proof of Lemma 3.3 in \cite{WE}, we have
\[
\sum_{i=k+1}^m \widehat\zeta_i^2 \le \sum_{i=1}^{m-k} (\zeta_{(i)} - \lambda_{k+i})_+^2,
\]
where $x_+ = \max\{x, 0\}$. For a sufficiently large constant $A > 0$ and sufficiently small constant $\alpha > 0$ both to be specified later, we partition the sum into three parts:
\[
\sum_{i=1}^{m-k} \E (\zeta_{(i)} - \lambda_{k+i})_+^2 = \sum_{i=1}^{\floor{Ak}} \E (\zeta_{(i)} - \lambda_{k+i})_+^2 + \sum_{i=\lceil Ak \rceil }^{\floor{\alpha m}} \E (\zeta_{(i)} - \lambda_{k+i})_+^2 + \sum_{i=\lceil \alpha m \rceil }^{m-k} \E (\zeta_{(i)} - \lambda_{k+i})_+^2
\]
The three lemmas, respectively, show that each part is negligible compared with $2k\log(m/k)$. We indeed prove a stronger version in which the order statistics $\zeta_{(1)} \ge \cdots \ge \zeta_{(m-k)} \ge \zeta_{(m-k+1)} \ge \cdots \ge \zeta_{(m)}$ come from $m$ \iid $\chi_l$. Let $U_1, \ldots, U_m$ be \iid uniform random variables on $(0, 1)$, and $U_{(1)} \le U_{(2)} \le \cdots \le U_{(m)}$ be the \textit{increasing} order statistics. So we have the representation $\zeta_{(i)} = F^{-1}_{\chi_l}(1 - U_{(i)})$

\begin{lemma}\label{lm:off_supp1}
Under the preceding conditions, for any $A > 0$ we have
\[
\frac1{2k\log(m/k)} \sum_{i=1}^{\lfloor Ak \rfloor} \E(\zeta_{(i)} - \lambda_{k+i})_+^2 \rightarrow 0.
\]
\end{lemma}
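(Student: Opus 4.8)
Throughout I will use the quantile representation $\zeta_{(i)} = g(U_{(i)})$ with $g(u) := F^{-1}_{\chi_l}(1-u)$ and $U_{(1)} \le \cdots \le U_{(m)}$ the increasing order statistics of $m$ i.i.d.\ uniforms, and I abbreviate $t_i := q(k+i)/m$, so that $\lambda_{k+i} = g(t_i)$ and
\[
(\zeta_{(i)} - \lambda_{k+i})_+ = \big(g(U_{(i)}) - g(t_i)\big)_+ ,
\]
which is nonzero precisely on the event $\{U_{(i)} < t_i\}$. The estimate then splits into two tasks: a deterministic increment bound for $g(u)-g(t_i)$ when $u<t_i$, and a moment bound for the uniform order statistic $U_{(i)}$ in its lower range.

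For the increment I would use that $g$ is decreasing with $-g'(u) = 1/\phi_{\chi_l}(g(u))$, together with the Mills-type tail inequality $\phi_{\chi_l}(x) \gtrsim x\,\P(\chi_l>x)$ valid for all large $x$; since $g(s)\goto\infty$ as $s\to 0$ and $g(s) = \sqrt{2\log(1/s)}\,(1+o(1))$ (the chi-quantile asymptotics from \cite{chisquareQ} already invoked in the main proof), this gives $-g'(s) \lesssim 1/\big(s\sqrt{\log(1/s)}\big)$ on the whole range $s\le t_i$. Integrating and using monotonicity of $\log(1/s)$,
\[
g(u) - g(t_i) \;\lesssim\; \frac{1}{\sqrt{\log(1/t_i)}}\int_u^{t_i}\frac{\d s}{s} \;=\; \frac{\log(t_i/u)}{\sqrt{\log(1/t_i)}} .
\]
Because $t_i \asymp k/m$ gives $\log(1/t_i) = (1+o(1))\log(m/k)$ uniformly over $i \le \floor{Ak}$, squaring and taking expectations yields the clean bound
\[
\E(\zeta_{(i)} - \lambda_{k+i})_+^2 \;\lesssim\; \frac{1}{\log(m/k)}\,\E\Big[\big(\log(t_i/U_{(i)})\big)^2\Big] .
\]

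The second moment is then handled exactly through the $\mathrm{Beta}(i,m-i+1)$ law of $U_{(i)}$: one has $\E\log U_{(i)} = \psi(i)-\psi(m+1)$ and $\var(\log U_{(i)}) = \psi'(i)-\psi'(m+1)$, so that $\E\log(t_i/U_{(i)}) = \log\big(q(k+i)/i\big)+o(1)$ and $\var\big(\log(t_i/U_{(i)})\big) \le \psi'(i) \asymp 1/i$. Hence $\E[(\log(t_i/U_{(i)}))^2] = \big(\log(q(k+i)/i)\big)^2 + O(1/i) = O\big((\log(k/i))^2\big)+O(1/i)$, the deep lower tail of $U_{(i)}$ being automatically absorbed into these finite digamma/trigamma moments. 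The decisive summation is $\sum_{i=1}^{\floor{Ak}}(\log(k/i))^2 = \Theta(k)$, which follows from $\frac1k\sum_{i=1}^{k}(\log(k/i))^2 \goto \int_0^1(\log(1/t))^2\,\d t = 2$ (the remainder $\sum 1/i = O(\log k)$ being negligible). Combining,
\[
\sum_{i=1}^{\floor{Ak}}\E(\zeta_{(i)}-\lambda_{k+i})_+^2 \;=\; O\!\left(\frac{k}{\log(m/k)}\right),
\]
so dividing by $2k\log(m/k)$ leaves a factor of order $1/\log^2(m/k) \goto 0$, which is exactly the assertion.

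I expect the main obstacle to be the increment bound rather than the bookkeeping: the inequality $g(u)-g(t_i) \lesssim \log(t_i/u)/\sqrt{\log(1/t_i)}$ must hold as a genuine (not merely asymptotic) bound uniformly over the entire interval $u\in(0,t_i)$, i.e.\ over all the extreme quantiles $U_{(i)}$ can reach. This forces one to pin down the chi hazard rate $\phi_{\chi_l}(x)/\P(\chi_l>x)$ from below by a constant multiple of $x$ for every fixed $l$ and all sufficiently large $x$, and to make the quantile asymptotic $g(s)\sim\sqrt{2\log(1/s)}$ uniform down to $s$ of order $1/m$; once these are in hand, the exactness of the Beta log-moments makes the remaining large-deviation concerns disappear and the sum closes as above.
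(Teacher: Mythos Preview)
Your proposal is correct and follows essentially the same route as the paper: both arguments reduce to the bound $\E(\zeta_{(i)}-\lambda_{k+i})_+^2 \lesssim \E\big[\log^2(t_i/U_{(i)})\big]/\log(1/t_i)$, evaluate the Beta log-moments via $\psi,\psi'$, and sum $\sum_{i\le Ak}\log^2(k/i)=\Theta(k)$. The only cosmetic difference is how the quantile increment is controlled---you integrate $-g'(s)\lesssim 1/(s\sqrt{\log(1/s)})$ from the chi hazard rate, while the paper invokes the difference asymptotic $F^{-1}_{\chi_l}(1-q_1)-F^{-1}_{\chi_l}(1-q_2)\sim\sqrt{2\log(1/q_1)}-\sqrt{2\log(1/q_2)}$ and then $\sqrt{a}-\sqrt{b}\le(a-b)/\sqrt{b}$---but both land on the same inequality, and your hazard-rate derivation is arguably the cleaner way to make the bound genuinely uniform over $u\in(0,t_i)$.
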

\begin{proof}[Proof of Lemma~\ref{lm:off_supp1}]
Recognizing that $l$ is fixed, from \cite{chisquareQ} it follows that
\[
F^{-1}_{\chi_l}(1 - q_1) - F^{-1}_{\chi_l}(1 - q_2) \sim \sqrt{2\log\frac1{q_1}} - \sqrt{2\log\frac1{q_2}}
\]
for $q_1, q_2 \goto 0$. We also know that $\zeta_i$ is distributed as $F^{-1}_{\chi_l}(1 - U_{(i)})$. Making use of these facts, we get
\begin{equation}\nonumber
\begin{aligned}
\E (\zeta_{(i)} - \lambda_{k+i})_+^2 &= \E (F^{-1}_{\chi_l}(1 - U_{(i)}) - F^{-1}_{\chi_l}(1 - q(k+i)/m))_+^2 \\
&\sim \E \left( \sqrt{2\log\frac1{U_{(i)}}} - \sqrt{2\log\frac{m}{q(k+i)}} \right)_+^2\\
&\le \E \left( \sqrt{2\log\frac1{U_{(i)}}} - \sqrt{2\log\frac{m}{q(k+i)}} \right)^2\\
&\lesssim \E \left( \frac{\log^2(q(k+i)/m U_{(i)})}{\log(m/q(k+i))} \right).
\end{aligned}
\end{equation}
Now, we proceed to evaluate 
\[
\E \left[ \log^2\frac{q(k+i)}{m U_{(i)}} \right]= \log^2\frac{q(k+i)}{m} + \E \log^2 U_{(i)} - 2\log\frac{q(k+i)}{m} \E \log U_{(i)}.
\]
Observing that $U_{(i)}$ follows $\mathrm{Beta}(i, m+i-i)$, we get (see e.g. \cite{abramowitz1964})
\[
\begin{aligned}
&\E \log U_{(i)} = -\log\frac{m+1}{i} + \delta_1,\\
&\E \log^2 U_{(i)} = \left( \log\frac{m+1}{i} - \delta_1 \right)^2 + \frac1i - \frac1{m+1} + \delta_2
\end{aligned}
\]
for some $\delta_1 = O(1/i)$ and $\delta_2 = O(1/i^2)$. Thus we can evaluate $\E\log^2\frac{q(k+i)}{mU_{(i)}}$ as
\begin{multline}\nonumber
\E\log^2\frac{q(k+i)}{mU_{(i)}} = \log^2\frac{q(k+i)}{m} - 2\log\frac{q(k+i)}{m}\E \log U_{(i)} + \E\log^2U_{(i)}\\
= \log^2\frac{q(k+i)}{m} + 2\log\frac{q(k+i)}{m}\left(\log\frac{m+1}{i} - \delta_1\right) + \left(\log\frac{m+1}{i} - \delta_1\right)^2 + \frac{1}{i} - \frac{1}{m+1} + \delta_2\\
 = \log^2\frac{q(k+i)(m+1)}{im} - 2\delta_1\log\frac{q(k+i)(m+1)}{im} + \frac1i  - \frac1{m+1} + \delta_1^2 + \delta_2.
\end{multline}
Hence, we get
\begin{align*}
&\sum_{i=1}^{\floor{Ak}}\E(\zeta_{(i)} - \lambda_{k+i})_+^2 \\
&\lesssim \frac1{\log\frac{m}{q(A+1)k}}\left(\underbrace{\sum_{i=1}^{\floor{Ak}}\log^2\frac{q(k+i)(m+1)}{im}}_{\mbox{I}} - \underbrace{\sum_{i=1}^{\floor{Ak}}2\delta_1\log\frac{q(k+i)(m+1)}{im}}_{\mbox{II}} + \underbrace{\sum_{i=1}^{\floor{Ak}}\big(\frac1{i} - \frac1{m+1} + \delta_1^2 + \delta_2\big)}_{\mbox{III}} \right)\\
&= \frac1{\log\frac{m}{q(A+1)k}} (\mbox{I} + |\mbox{II}| + |\mbox{III}|).
\end{align*}
Since $\frac{m}{q(A+1)k} \goto \infty$. The proof would be completed once we show $\mbox{I}, |\mbox{II}|$, and $|\mbox{III}|$ are bounded. To this end, first note that
\begin{equation}\nonumber
\begin{aligned}
\mbox{I} &= \sum_{i=1}^{\floor{Ak}} \log^2\frac{q(k+i)(m+1)}{im} \\&\le \sum_{i=1}^{\floor{Ak}}\max\left\{k\int^{i/k}_{(i-1)/k}\log^2\frac{q(m+1)(1+x)}{mx}\d x, k\int^{(i+1)/k}_{i/k}\log^2\frac{q(m+1)(1+x)}{mx}\d x \right\}\\
& \le 2k\int^{A+1}_0 \log^2\frac{q(m+1)(1+x)}{mx}\d x \asymp k = o\left(2k\log\frac{m}{k}\right).
\end{aligned}
\end{equation}
The second term $\mbox{II}$ obeys
\begin{equation}\nonumber
\begin{aligned}
|\mbox{II}| &\le \sum_{i=1}^{\floor{Ak}}2\Big|\delta_1 \log\frac{q(k+i)(m+1)}{im}\Big| \lesssim \sum_{i=1}^{\floor{Ak}}\frac1i \Big|\log\frac{q(k+i)(m+1)}{im}\Big| \\
&\le \sum_{i=1}^{\floor{Ak}}\max\left\{ k\int^{i/k}_{(i-1)/k}\Big|\log\frac{q(m+1)(1+x)}{mx}\Big|\d x, k\int^{(i+1)/k}_{i/k}\Big|\log\frac{q(m+1)(1+x)}{mx}\Big|\d x \right\}\\
&\le 2k\int^{A+1}_{0}\Big|\log\frac{q(m+1)(1+x)}{mx}\Big|\d x \asymp k = o\left(2k\log\frac{m}{k} \right),
\end{aligned}  
\end{equation}
where we use the fact that $\int^{A+1}_{0}\Big|\log\frac{q(m+1)(1+x)}{mx}\Big|\d x$ is bounded by some constant.
The last term is simply bounded as
\begin{equation}\nonumber
|\mbox{III}| \le \sum_{i=1}^{\floor{Ak}}\Big|\frac1i  - \frac1{m+1} + \delta_1^2 + \delta_2\Big| \lesssim \sum_{i=1}^{\floor{Ak}}\frac{1}{i} \lesssim \log (Ak) = o\left(2k\log\frac{m}{k}\right).
\end{equation}
Combining these established bounds on $\mbox{I}, \mbox{II}$, and $\mbox{III}$ finishes proof.
\end{proof}


\begin{lemma}\label{lm:off_supp2}
Under the preceding conditions, let $A$ be any constant satisfying $q(1+A)/A < 1$ and $\alpha$ be sufficiently small such that $l/\lambda_{k + \floor{\alpha m}} < 1/2$. Then, 
\[
\frac1{2k\log(m/k)} \sum_{i = \lceil Ak \rceil}^{\lfloor \alpha m \rfloor} \E(\zeta_{(i)} - \lambda_{k+i})_+^2 \rightarrow 0.
\]  
\end{lemma}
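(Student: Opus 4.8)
The plan is to follow the strategy of Lemma~\ref{lm:off_supp1}, but now to exploit the indicator of the (rare) exceedance event, which forces the overshoot to be of lower order. Write $p_i := q(k+i)/m$, so that $\lambda_{k+i} = F^{-1}_{\chi_l}(1-p_i)$ and, through the representation $\zeta_{(i)} = F^{-1}_{\chi_l}(1-U_{(i)})$, the quantity $(\zeta_{(i)}-\lambda_{k+i})_+$ is nonzero precisely on $\{U_{(i)}<p_i\}$. The first observation is that the hypothesis $q(1+A)/A<1$ gives, uniformly for $i\ge Ak$,
\[
\frac{p_i}{i/m} = \frac{q(k+i)}{i} = q\Big(1+\frac ki\Big)\le q\Big(1+\frac1A\Big)=:c<1,
\]
so the threshold sits strictly above the bulk location $F^{-1}_{\chi_l}(1-i/m)$ of $\zeta_{(i)}$ and exceedance is a large-deviation event. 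The second hypothesis $\lambda_{k+\floor{\alpha m}}>2l$ is equivalent to $p_{\floor{\alpha m}}<\bar F_{\chi_l}(2l)$, i.e.\ it keeps every $p_i$ with $i\le\alpha m$ below a fixed small constant, which places us in the tail regime where, as in Lemma~\ref{lm:off_supp1} and \cite{chisquareQ}, $F^{-1}_{\chi_l}(1-u)\sim\sqrt{2\log(1/u)}$. On $\{U_{(i)}<p_i\}$, combining this asymptotic with $\sqrt a-\sqrt b\le(a-b)/(2\sqrt b)$ for $a=2\log(1/U_{(i)})>b=2\log(1/p_i)$ yields
\[
\E(\zeta_{(i)}-\lambda_{k+i})_+^2 \;\lesssim\; \frac{1}{2\log(1/p_i)}\,\E\!\left[\log^2\frac{p_i}{U_{(i)}}\,\mathds{1}_{\{U_{(i)}<p_i\}}\right].
\]

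To control the numerator I would use the layer-cake identity $\E[\log^2(p_i/U_{(i)})\mathds{1}_{\{U_{(i)}<p_i\}}]=\int_0^\infty 2t\,\P(U_{(i)}<p_ie^{-t})\,\d t$, the order-statistic identity $\P(U_{(i)}<u)=\P(\mathrm{Bin}(m,u)\ge i)$, and the Chernoff bound $\P(\mathrm{Bin}(m,u)\ge i)\le e^{-m\,d(i/m\,\|\,u)}$ with $d(a\|b)=a\log\frac ab+(1-a)\log\frac{1-a}{1-b}$. The key elementary inequality, provable from $\log y\le y-1$, is $d(a\|ca)\ge a\,\psi(c)$ where $\psi(c):=c-1-\log c>0$ for $c\in(0,1)$; since $p_i\le c\,i/m$ and $d$ is decreasing in its second argument, this gives $\P(U_{(i)}<p_i)\le e^{-\psi(c)i}$ at $t=0$. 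Tracking the increment in the second argument then shows $\P(U_{(i)}<p_ie^{-t})\le e^{-\psi(c)i}\,e^{-i(t-C)_+}$ for an absolute constant $C$, so the $t$-integral is bounded by a constant times $e^{-\psi(c)i}$. As $\log(1/p_i)$ is bounded below by a positive constant on the range, this delivers the clean per-term estimate $\E(\zeta_{(i)}-\lambda_{k+i})_+^2\lesssim e^{-\psi(c)i}$.

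Summation then finishes the proof: $\sum_{i=\lceil Ak\rceil}^{\floor{\alpha m}}\E(\zeta_{(i)}-\lambda_{k+i})_+^2\lesssim\sum_{i\ge Ak}e^{-\psi(c)i}=O(e^{-\psi(c)Ak})=O(1)$, and dividing by $2k\log(m/k)\goto\infty$ yields the stated limit. I expect the main obstacle to be the large-deviation control, for two reasons. First, the crude union bound $\binom{m}{i}p_i^i\le(ec)^i$ decays only when $c<1/e$, whereas we have merely $c<1$; one must retain the $(1-u)^{m-i}$ factors and use the sharp rate $\psi(c)=c-1-\log c$, which is positive on all of $(0,1)$. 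Second, and more delicately, one must resist bounding the magnitude crudely by $\lambda_{k+i}^2\,\P(U_{(i)}<p_i)\approx\log(m/k)\,e^{-\psi(c)i}$: this is far too lossy and in fact fails to be $o(2k\log(m/k))$ when $k$ stays bounded while $m/k\goto\infty$ arbitrarily fast. The $\sqrt a-\sqrt b$ linearization is precisely what captures that, conditional on exceedance, the overshoot is typically of order $\log(\cdot)/\sqrt{\log(1/p_i)}$ rather than of order $\lambda_{k+i}$, removing the spurious $\log(m/k)$ factor and making the per-term bound summable.
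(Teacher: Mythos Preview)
Your argument is correct and more self-contained than the paper's, though the two routes ultimately converge. The paper establishes a single inequality for the marginal $\chi_l$ tail, namely $\alpha_u:=\P(\chi_l>\lambda_{k+i}+u)\le e^{-\lambda_{k+i}u/2}\alpha_0$, obtained by a change of variables in the $\chi^2_l$ density together with $(1+u/\lambda_{k+i})^l\le e^{lu/\lambda_{k+i}}$; this is exactly where the hypothesis $l/\lambda_{k+\floor{\alpha m}}<1/2$ is spent. Everything else---the layer-cake representation $\E(\zeta_{(i)}-\lambda_{k+i})_+^2=\int_0^\infty 2u\,\P(\mathrm{Bin}(m,\alpha_u)\ge i)\,\d u$ and the ensuing Chernoff bound---is deferred to Lemma~A.4 of \cite{WE}. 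You instead pass to uniform order statistics, linearize the quantile via $F^{-1}_{\chi_l}(1-u)\sim\sqrt{2\log(1/u)}$ and $\sqrt a-\sqrt b\le(a-b)/(2\sqrt b)$, and carry out the large-deviation step explicitly using the sharp lower bound $d(a\|ca)\ge a(c-1-\log c)$. The two parametrizations are related by $t\approx\lambda_{k+i}u$, so the exponential rates agree. What your approach buys is a complete argument that does not appeal to \cite{WE} and makes transparent why $c=q(1+1/A)<1$ suffices (whereas a crude union bound $\binom{m}{i}p_i^i\le(ec)^i$ would need $c<1/e$). What the paper's approach buys is that the $\chi_l$ tail ratio is controlled exactly rather than through the asymptotic replacement; to make your linearization fully rigorous you should record that $F^{-1}_{\chi_l}(1-u)-\sqrt{2\log(1/u)}$ stays bounded for small $u$ (this is standard, and the paper uses the same approximation freely in Lemma~\ref{lm:off_supp1}), so the additive error contributes only an extra $O(1)\cdot\P(U_{(i)}<p_i)\le O(e^{-\psi(c)i})$ term, which is absorbed into your final sum.
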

\begin{proof}[Proof of Lemma~\ref{lm:off_supp2}]
Note that $\lambda_{k + \floor{\alpha m}} \sim \sqrt{2\log\frac{m}{q(k + \floor{\alpha m})}} \sim \sqrt{2\log\frac1{q\alpha}}$. So it is clear that such $\alpha$ exists. Pick any fixed $i$ between $\lceil Ak \rceil$ and$\lfloor \alpha m \rfloor$. As in the proof of Lemma A.4 in \cite{WE}, denote by $\alpha_u = \P(\chi_l > \lambda_{k+i} + u)$. Note that
\begin{equation}\nonumber
\begin{aligned}
\alpha_u &= \P(\chi_l > \lambda_{k+i} + u) = \int^{\infty}_{(\lambda_{k+i} + u)^2} \frac1{\mathrm{e}^{l/2}\Gamma(l/2)} x^{l/2-1} \mathrm{e}^{-x/2} \mathrm{d} x\\
&= \int^{\infty}_{\lambda_{k+i}^2} \frac1{\mathrm{e}^{l/2}\Gamma(l/2)} \left(\frac{(\lambda_{k+i}+u)^2}{\lambda_{k+i}^2} y \right)^{l/2-1} \exp\left(- \frac{(\lambda_{k+i}+u)^2}{2\lambda^2_{k+i}}y \right) \mathrm{d} \frac{(\lambda_{k+i}+u)^2}{\lambda_{k+i}^2} y\\
&= \left(1 + \frac{u}{\lambda_{k+i}}\right)^l \int^{\infty}_{\lambda_{k+i}^2} \frac1{\mathrm{e}^{l/2}\Gamma(l/2)} y^{l/2-1} \exp\left(- \frac{(\lambda_{k+i}+u)^2}{2\lambda^2_{k+i}}y \right) \mathrm{d} y\\
&\le \left(1 + \frac{u}{\lambda_{k+i}}\right)^l \mathrm{e}^{-\lambda_{k+i} u} \int^{\infty}_{\lambda_{k+i}^2} \frac1{e^{l/2}\Gamma(l/2)} y^{l/2-1} \mathrm{e}^{-y/2}\mathrm{d} y\\
&= \left(1 + \frac{u}{\lambda_{k+i}}\right)^l \mathrm{e}^{-\lambda_{k+i} u} \alpha_0\\
&\le \exp \left( \frac{l}{\lambda_{k+i}} u-\lambda_{k+i}u \right) \alpha_0.
\end{aligned}
\end{equation}
With the proviso that $l/\lambda_{k+\lfloor \alpha m \rfloor} < 1/2 < \lambda_{k+\lfloor \alpha m \rfloor}/2$, it follows that
\[
\alpha_u \le \mathrm{e}^{-\lambda_{k+i}u/2 }\alpha_0.
\]
The remaining proof follows from exactly the same reasoning as that of Lemma A.4 in \cite{WE}.

\end{proof}


\begin{lemma}\label{lm:off_supp3}
Under the preceding conditions, for any constant $\alpha > 0$ we have
\[
\frac1{2k\log(m/k)} \sum_{i = \lceil \alpha m \rceil}^{m - k} \E(\zeta_{(i)} - \lambda_{k+i})_+^2 \rightarrow 0.
\]
\end{lemma}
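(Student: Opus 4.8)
The plan is to prove the stronger statement that the \emph{numerator} itself tends to zero, namely $\sum_{i=\lceil\alpha m\rceil}^{m-k}\E(\zeta_{(i)}-\lambda_{k+i})_+^2\to 0$; since $m/k\to\infty$ forces $2k\log(m/k)\ge 2\log(m/k)\to\infty$, the stated ratio then vanishes automatically. Throughout I would use the representation $\zeta_{(i)}=g(U_{(i)})$ with $g(u):=F^{-1}_{\chi_l}(1-u)$ a positive decreasing function and $U_{(1)}\le\cdots\le U_{(m)}$ the increasing order statistics of $m$ \iid uniforms. Writing $p_i:=q(k+i)/m$ so that $\lambda_{k+i}=g(p_i)$, the fact that $g$ is decreasing means $(\zeta_{(i)}-\lambda_{k+i})_+$ is nonzero exactly on the event $\{U_{(i)}<p_i\}$, and on that event $0<g(U_{(i)})-g(p_i)<g(U_{(i)})$, so
\[
\E(\zeta_{(i)}-\lambda_{k+i})_+^2\le \int_0^{p_i}g(u)^2 f_{U_{(i)}}(u)\,\d u,
\]
where $f_{U_{(i)}}$ is the $\mathrm{Beta}(i,m-i+1)$ density.

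The central observation is that over the whole range $\lceil\alpha m\rceil\le i\le m-k$ the event $\{U_{(i)}<p_i\}$ is a large deviation: $U_{(i)}$ concentrates near $i/m$, while $p_i\sim qi/m$ with $q<1$ fixed, so $p_i$ sits a constant factor below the mean (equivalently $\{U_{(i)}<p_i\}=\{\mathrm{Bin}(m,p_i)\ge i\}$, an upper tail since $\E\,\mathrm{Bin}(m,p_i)=q(k+i)<i$). The Beta density is unimodal with mode $(i-1)/(m-1)$, which exceeds $p_i$ for $m$ large (as $q<1$ and $k/m\to0$), so $f_{U_{(i)}}$ is increasing on $(0,p_i)$ and is bounded there by $f_{U_{(i)}}(p_i)$. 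This lets me factor the integral and use $g(u)^2=F^{-1}_{\chi^2_l}(1-u)$:
\[
\int_0^{p_i}g(u)^2 f_{U_{(i)}}(u)\,\d u\le f_{U_{(i)}}(p_i)\int_0^{p_i}g(u)^2\,\d u\le f_{U_{(i)}}(p_i)\,\E[\chi^2_l]=l\,f_{U_{(i)}}(p_i),
\]
since $\int_0^1 g(u)^2\,\d u=\int_0^1 F^{-1}_{\chi^2_l}(v)\,\d v=l$.

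It then remains only to bound $f_{U_{(i)}}(p_i)$. Using $1/B(i,m-i+1)=m\binom{m-1}{i-1}$ gives the clean identity $f_{U_{(i)}}(p_i)=m\,\P\!\big(\mathrm{Bin}(m-1,p_i)=i-1\big)$, and because $i-1>(m-1)p_i$ this is an upper-tail point probability controlled by Chernoff's bound, $f_{U_{(i)}}(p_i)\le m\,e^{-(m-1)\,\mathrm{KL}((i-1)/(m-1)\,\|\,p_i)}$. The main technical obstacle I anticipate is checking that the Kullback--Leibler exponent is bounded below by a positive constant $c$ \emph{uniformly} in $i$, in particular at the endpoint $i=m-k$ where $(i-1)/(m-1)\to1$ and $p_i\to q$: there the exponent tends to $\mathrm{KL}(1\,\|\,q)=\log(1/q)>0$, while for interior $i$ it is a continuous strictly positive function of two separated arguments, so $c:=\inf_i\mathrm{KL}>0$. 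Granting this, each term is at most $l\,m\,e^{-cm}$, and summing the at most $m$ terms yields $\sum_{i=\lceil\alpha m\rceil}^{m-k}\E(\zeta_{(i)}-\lambda_{k+i})_+^2\le l\,m^2 e^{-cm}\to0$, which finishes the proof.
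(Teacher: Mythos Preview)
Your proof is correct and follows a genuinely different, more streamlined route than the paper's. Both arguments use the representation $\zeta_{(i)}=F^{-1}_{\chi_l}(1-U_{(i)})$ and both prove the stronger fact that the numerator itself tends to zero, but the mechanics differ. The paper splits each term according to whether $U_{(i)}\ge\alpha/3$ or $U_{(i)}<\alpha/3$: on the first event it exploits Lipschitz behaviour of $F^{-1}_{\chi_l}$ away from the tail to replace $(\zeta_{(i)}-\lambda_{k+i})_+$ by $(q(k+i)/m-U_{(i)})_+$, and on the second it uses the crude bound $(\zeta_{(i)}-\lambda_{k+i})_+\le\zeta_{(i)}$; it then controls the two resulting sums separately, one by a direct Beta integral computation and one by Chernoff with the lower bound $\mathrm{KL}(i/m\,\|\,q(k+i)/m)\ge(1+o(1))(i/m)(\log\tfrac1q-1+q)$. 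You bypass the case split entirely by two clean observations: (i) since $p_i$ lies below the Beta mode $(i-1)/(m-1)$, the density is monotone on $(0,p_i)$ and can be pulled out as $f_{U_{(i)}}(p_i)$; and (ii) $\int_0^1 g(u)^2\,\d u=\E\chi_l^2=l$ absorbs the singularity of $g$ near $0$ without any asymptotics on the quantile function. The identity $f_{U_{(i)}}(p_i)=m\,\P(\mathrm{Bin}(m-1,p_i)=i-1)$ then feeds directly into Chernoff. Your approach is shorter and avoids the $\chi_l$ tail asymptotics of \cite{chisquareQ}; the paper's approach is closer in spirit to the corresponding argument in \cite{WE} and makes the two sources of smallness (rarity of $U_{(i)}<p_i$ versus rarity of $U_{(i)}<\alpha/3$) visible separately. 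The one point you flag as a ``technical obstacle''---the uniform positive lower bound on $\mathrm{KL}\big((i-1)/(m-1)\,\|\,p_i\big)$ over $\lceil\alpha m\rceil\le i\le m-k$---is indeed routine: writing $t=i/m$, the pair $\big((i-1)/(m-1),\,p_i\big)$ converges uniformly to $(t,qt)$ with $t\in[\alpha,1]$, and $t\mapsto\mathrm{KL}(t\,\|\,qt)$ is continuous and strictly positive on this compact interval (with limit $\log(1/q)$ at $t=1$), so compactness and uniform continuity of $\mathrm{KL}$ on the relevant region give the constant $c>0$.
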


\begin{proof}[Proof of Lemma~\ref{lm:off_supp3}]
Recognizing that the value of the summation increases as $\alpha$ decreases, we only prove the lemma for sufficiently small $\alpha$. In the case of $U_{(i)} \ge \alpha/3$, we get
\begin{equation}\nonumber
\begin{aligned}
(\zeta_{(i)} - \lambda_{k+i})_+ &= \left( F^{-1}_{\chi_l}(1 - U_{(i)}) - F^{-1}_{\chi_l}(1 - q(k+i)/m) \right)_+\\
&\asymp \left(1 - U_{(i)} - (1 - q(k+i)/m)\right)_+\\
&= (q(k+i)/m - U_{(i)})_+,
\end{aligned}
\end{equation}
since both $U_{(i)}$ and $q(k+i)/m$ are bounded below away from zero. Otherwise, we use the trivial inequality $(\zeta_{(i)} - \lambda_{k+i})_+ \le \zeta_{(i)}$. In either case, we get
\[
\begin{aligned}
(\zeta_{(i)} - \lambda_{k+i})_+^2  &\lesssim \zeta_{(i)}^2 \bm{1}_{U_{(i)} < \frac{\alpha}{3}} + \left(\frac{q(k+i)}{m} - U_{(i)} \right)_+^2\\
&= \Big(F^{-1}_{\chi_l}(1 - U_{(i)}) \Big)^2 \bm{1}_{U_{(i)} < \frac{\alpha}{3}} + \left(\frac{q(k+i)}{m} - U_{(i)} \right)_+^2\\
&\asymp 2\log \left( \frac1{ U_{(i)}} \right) \bm{1}_{U_{(i)} < \frac{\alpha}{3}} + \left(\frac{q(k+i)}{m} - U_{(i)} \right)_+^2\\
&\lesssim \log \left( \frac1{ U_{(i)}} \right) \bm{1}_{U_{(i)} < \frac{\alpha}{3}} + \bm{1}_{U_{(i)} \le \frac{q(k+i)}{m}}.\\
\end{aligned}
\]
Hence,
\[
\sum_{i = \lceil \alpha m \rceil}^{m - k} \E(\zeta_{(i)} - \lambda_{k+i})_+^2 \lesssim \sum_{i = \lceil \alpha m \rceil}^{m - k} \E\left(\log \left( \frac1{ U_{(i)}} \right); U_{(i)} < \frac{\alpha}{3} \right) + \sum_{i = \lceil \alpha m \rceil}^{m - k} \P\left(U_{(i)} \le \frac{q(k+i)}{m}\right)
\]
In the remaining proof we aim to show
\begin{equation}\label{eq:third_alpha_sum}
\sum_{i = \lceil \alpha m \rceil}^{m - k} \E\left(\log \left( \frac1{ U_{(i)}} \right); U_{(i)} < \frac{\alpha}{3} \right) \goto 0
\end{equation}
and
\begin{equation}\label{eq:count_alpha}  
\sum_{i = \lceil \alpha m \rceil}^{m - k} \P\left(U_{(i)} \le \frac{q(k+i)}{m}\right) \goto 0.
\end{equation}
This is more than we need since $2k\log(m/k) \goto \infty$.

Each summand of \eqref{eq:third_alpha_sum} is bounded above by
\begin{equation}\nonumber
\begin{aligned}
\E\left(\log \left( \frac1{ U_{(\lceil \alpha m \rceil)}} \right); U_{(\lceil \alpha m \rceil)} < \frac{\alpha}{3} \right) &= \displaystyle\int^{\frac{\alpha}{3}}_0 \frac{x^{\lceil \alpha m \rceil-1}(1-x)^{m-\lceil \alpha m \rceil} \log \frac1x}{\operatorname{B}(\lceil \alpha m \rceil, m+1-\lceil \alpha m \rceil)} \d x\\
&\le \int^{\frac{\alpha}{3}}_0 \frac{x^{\lceil \alpha m \rceil-1}\log \frac1x}{\operatorname{B}(\lceil \alpha m \rceil, m+1-\lceil \alpha m \rceil)} \d x  \\
&= \frac1{\lceil \alpha m \rceil^2 \operatorname{B}(\lceil \alpha m \rceil, m+1-\lceil \alpha m \rceil)}\int^{(\frac{\alpha}{3})^{\lceil \alpha m \rceil}}_0 \log\frac1{y} ~ \d y\\
&\sim \frac{(\alpha/3)^{\lceil \alpha m \rceil} \log\frac3{\alpha}}{\lceil \alpha m \rceil \operatorname{B}(\lceil \alpha m \rceil, m+1-\lceil \alpha m \rceil)}.
\end{aligned}
\end{equation}
The last line obeys
\[
\begin{aligned}
\log\left[ \frac{(\alpha/3)^{\lceil \alpha m \rceil} }{\operatorname{B}(\lceil \alpha m \rceil, m+1-\lceil \alpha m \rceil)} \right] &\sim -\alpha m \log\frac3{\alpha} + \alpha m \log\frac1{\alpha}  + (1-\alpha)m \log \frac1{1-\alpha}\\
& = -\alpha m \log 3  + (1-\alpha)m \log \frac1{1-\alpha}.\\
\end{aligned}
\]
For small $\alpha$, we get $-\alpha \log 3  + (1-\alpha) \log \frac1{1-\alpha} = -\alpha\log 3 + (1+o(1))(1-\alpha)\alpha = -(\log 3  -1 + o(1))\alpha$. (Note that $\log 3 -1 = 0.0986\ldots > 0$.) This immediately yields
\[
\E\left(\log \left( \frac1{ U_{(\lceil \alpha m \rceil)}} \right); U_{(\lceil \alpha m \rceil)} < \frac{\alpha}{3} \right) \sim \mathrm{e}^{-(\log 3  -1 + o(1))\alpha m},
\]
which implies \eqref{eq:third_alpha_sum} since $m\mathrm{e}^{-(\log 3  -1 + o(1))\alpha m} \goto 0$.

Next, we turn to show \eqref{eq:count_alpha}. Note that $\P\left(U_{(i)} \le \frac{q(k+i)}{m}\right)$ actually is the tail probability of the binomial distribution with $m$ trials and success probability $\frac{q(k+i)}{m}$. Hence, by the Chernoff bound, this
probability is bounded as
\[
\P\left(U_{(i)} \le \frac{q(k+i)}{m}\right) \le \exp\left( -m \operatorname{KL}(i/m || q(k+i)/m) \right),
\]
where $\operatorname{KL}(a||b) := a \log\frac{a}{b} + (1-a)\log\frac{1-a}{1-b}$ is the Kullback-Leibler divergence. Thanks to $i \ge \lceil \alpha m \rceil \gg k$, simple analysis reveals that
\[
\operatorname{KL}(i/m || q(k+i)/m) \ge (1+o(1))i \left( \log\frac1q - 1 + q \right) / m.
\]
Combining the last two displays gives
\[
\P\left(U_{(i)} \le \frac{q(k+i)}{m}\right) \le \mathrm{e}^{-(1+o(1)) \left( \log\frac1q - 1 + q \right)i}.
\]
Plugging the above inequality into \eqref{eq:count_alpha} yields
\[
\sum_{i = \lceil \alpha m \rceil}^{m - k} \P\left(U_{(i)} \le \frac{q(k+i)}{m}\right) \le  \sum_{i = \lceil \alpha m \rceil}^{m - k} \mathrm{e}^{-(1+o(1)) \left( \log\frac1q - 1 + q \right)i}   \goto 0,
\]
where the last step follows from $\log\frac1q - 1 + q  > 0$ and $\lceil \alpha m \rceil \goto \infty$.
\end{proof}

\section{Strength of signals}
Consider the case when all submatrices $X_{I_i}$ have the same rank, $l>0$, $w>0$ is used as the universal weight and $X$ is orthogonal at groups level. From the interpretation of gSLOPE estimate coming from \eqref{17022353}, we see that the identification of the relevant groups could be summarized as follows: $\lambda$ decides on the number, $R$, of groups labeled as relevant, which correspond to indices of the $R$ largest values among $w^{-1}\|\y_{\II_1}\|_2,\ldots,w^{-1}\|\y_{\II_m}\|_2$. The random variables $w^{-1} \|\y_{\II_i}\|_2$ have a (possibly) non-central $\Chi$ distributions with $l$ degrees of freedom and noncentrality parameters given by the entries of $\iI{\Beta}$. Now, the nonzero $\|\Beta_{\II_i}\|_2$ could be perceived as a strong signal, if with the high probability the random variable having the noncentral $\Chi$ distribution with the noncentrality parameter $\|\Beta_{\II_i}\|_2$ is large comparing to the background composed of the independent random variables with the $\Chi_l$ distributions (then signal is likely to be identified by gSLOPE; otherwise, the signal could be easily covered by random disturbances and its identification has more in common with good luck than with the usage of particular method). The important quantity, which could be treated as a breaking point, is the expected value of the maximum of the background noise. Group effects being close to this value, could be perceived as medium under the orthogonal case and weak under the occurrence of correlations between groups. The above reasoning applied to the considered case, yields the issue of approximation of the expected value of the maximum of $m$ independent $\Chi_l$-distributed variables. Suppose that $\Psi_i\sim \Chi_l$ for $i=\{1,\ldots,m\}$. From Jensen's inequality we have 
$$\mathbb{E}\left(\max_{i=1,\ldots,m}\{\Psi_i\}\right) = \mathbb{E}\left(\sqrt{\max_{i=1,\ldots,m}\{\Psi_i^2\}}\right)\leq\sqrt{\mathbb{E}\left(\max_{i=1,\ldots,m}\{\Psi^2_i\}\right)},$$ 
hence we will replace the last problem by the problem of finding the reasonable upper bound on the expected value of the maximum of $m$ independent, $\Chi_l^2$-distributed variables.
\begin{theorem}
Let $\Psi_1,\ldots,\Psi_m$ be independent variables, $\Psi_i\sim\Chi_l^2$ for all $i$. Then
\begin{equation}
\label{05031231}
\mathbb{E}\left(\max_{i=1,\ldots,m}\{\Psi_i\}\right)\leq \frac{4\ln(m)}{1-m^{-\frac2l}}.
\end{equation}
\end{theorem}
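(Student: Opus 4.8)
The plan is to bound the expected maximum through the exponential (Chernoff-type) method, exploiting the closed-form moment generating function of the chi-square distribution together with Jensen's inequality. The entire argument reduces to one well-chosen value of a free parameter $t$, and there is no serious technical obstacle beyond reverse-engineering that value from the target right-hand side.

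First I would fix $t \in (0, 1/2)$ and apply Jensen's inequality to the convex map $x \mapsto \mathrm{e}^{tx}$, which gives $\exp\big(t\,\mathbb{E}[\max_i \Psi_i]\big) \le \mathbb{E}\big[\exp(t\max_i\Psi_i)\big]$. Since $\exp(t\max_i\Psi_i)=\max_i\exp(t\Psi_i)$ and each summand is nonnegative, I bound the maximum by the sum to obtain $\mathbb{E}\big[\max_i\exp(t\Psi_i)\big]\le\sum_{i=1}^m\mathbb{E}[\exp(t\Psi_i)]=m\,(1-2t)^{-l/2}$, where the last equality is the moment generating function of a $\Chi_l^2$ variable, finite precisely because $t<1/2$. (Note that independence is not even needed for this step, only the union-type bound.)

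Taking logarithms and dividing by $t$ yields $\mathbb{E}[\max_i\Psi_i]\le \tfrac1t\big(\ln m - \tfrac{l}{2}\ln(1-2t)\big)$ for every admissible $t$. The crucial and only nontrivial step is the choice of $t$: setting $t:=\tfrac12\big(1-m^{-2/l}\big)$ forces $1-2t=m^{-2/l}$, so that $\ln(1-2t)=-\tfrac{2}{l}\ln m$ and the two logarithmic contributions combine as $2\ln m + 2\ln m = 4\ln m$, producing exactly $\mathbb{E}[\max_i\Psi_i]\le \frac{4\ln m}{1-m^{-2/l}}$. This $t$ lies in $(0,1/2)$ because $m^{-2/l}\in(0,1)$ for $m\ge 2$, so the moment generating function is finite and the bound is legitimate.

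I expect no genuine difficulty here: the result is a standard maximal inequality, and the apparent challenge is merely to guess the parameter $t$ that matches the stated constant $4$, which the identity $1-2t=m^{-2/l}$ settles at once. One could instead optimize over $t$ exactly, obtaining a slightly sharper constant, but the explicit choice above is simpler and already attains the claimed bound.
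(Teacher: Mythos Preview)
Your argument is correct and essentially identical to the paper's proof: both apply Jensen's inequality to $e^{tM_m}$, bound the maximum by the sum, insert the $\Chi_l^2$ moment generating function $(1-2t)^{-l/2}$, and then plug in the same value $t=\tfrac12\big(1-m^{-2/l}\big)$ to obtain the stated bound. Your side remarks that independence is not actually needed and that one could optimize over $t$ for a sharper constant are accurate but go slightly beyond what the paper records.
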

\begin{proof}
Denote $M_m:=\max_{i=1,\ldots,m}\{\Psi_i\}$. From the Jensen's inequality applied to $e^{tM_m}$ we have
\begin{equation}
e^{t\mathbb{E}[M_m]}\leq \mathbb{E}\left[e^{tM_m}\right]=\mathbb{E}\left[\max_{i=1,\ldots,m}e^{t\Psi_i}\right]\leq\sum_{i=1}^m\mathbb{E}\left[e^{t\Psi_i}\right].
\end{equation}
We will consider only $t\in[0,\frac12)$. Since the moment generating function for $\Chi^2_l$ distribution is given by $MGF:=(1-2t)^{-\frac l2}$, for each $i$ it holds $\mathbb{E}\left[e^{t\Psi_i}\right]=(1-2t)^{-\frac l2}$ and we get $e^{t\mathbb{E}[M_m]}\leq m (1-2t)^{-\frac l2}$. Applying the natural logarithm to both sides yields
\begin{equation}
\label{05031230}
\mathbb{E}[M_m]\leq\frac{\ln(m)+\ln\left((1-2t)^{-\frac l2}\right)}{t},\ \ t\in\left[0, 1/2\right).
\end{equation}
Define $t_{m,l}: = \frac{1-m^{-\frac 2l}}2$. Then for all positive, natural numbers $l$ and $m$ we have $t_{m,l}\in[0,\frac12)$. Plugging $t_{m,l}$ to the right side of (\ref{05031230}) gives inequality (\ref{05031231}) and finishes the proof.
\end{proof}
\noindent The above theorem gives us the motivation to use the quantity $\sqrt{4\ln(m)/(1-m^{-2/l})}$ as the upper bound on the expected value of maximum over $m$ independent $\Chi_l$-distributed variables. In all simulations, which we have performed to investigate the performance of gSLOPE, we have generated the effects for truly relevant groups basing on these upper bounds. In particular, in experiments where $l_i$'s as well as weights were identical, we aimed at $\mathbb{E}\left(\|\y_{\II_i}\|_2\right)=\sqrt{4\ln(m)/(1-m^{-2/l})}$, for the truly relevant group $i$. Since $\mathbb{E}\left(\|\y_{\II_i}\|_2\right)\approx \sqrt{\|\Beta_{\II_i}\|_2^2+l}$, this yields the setting
\begin{equation}
\label{05031641}
\|\Beta_{\II_i}\|_2 = B(m,l), \qquad\textrm{for}\qquad B(m,l):=\sqrt{4\ln(m)/(1-m^{-2/l})-l}
\end{equation}
for groups chosen to be truly relevant.


\section{The sequence of tuning parameters when variables in different groups are independent}\label{subsec:06232149}
To model the situation when variables in different groups are stochastically independent we will assume that $n$ by $p$ design matrix is a realization of the random matrix with independent entries drawn from the normal distribution, $\mathcal{N}\big(0,\frac 1n\big)$, so as the expected value of $X_i\T{T}X_j$ is equal to $1$ for $i = j$, and equal to $0$ otherwise. The main objective is to derive the lambda sequence, which could be applied to achieve gFDR control under assumption that the $\XI{\beta}$ is sparse. At first we will confine ourselves only to the case $l_1=\ldots=l_m:=l$, $w_1=\ldots=w_m:=w$ and when the number of elements in each group is relatively small as compared to the number of observations ($l<<n$). For simplicity in this subsection we will fix $\sigma = 1$. In case when $\sigma \neq 1$, the proposed sequence lambda should be multiplied by $\sigma$, as in expression (\ref{gSLOPE}). In the heuristics presented in this subsection, we will use the notation $A \approx B$, in order to express that with large probability the differences between corresponding entries of matrices $A$ and $B$ are very small. 

In situation when entries of $X$ come from $\mathcal{N}\big(0,\frac 1n\big)$ distribution and sizes of groups are relatively  small, a very good approximation of $\beta^\ES{gS}$ could be obtained by $\hat{\beta}$, defined as
\begin{equation}
\label{gSLOPE_1}
\hat{\beta}: = \argmin b\ \ \bigg\{\frac 12\Big\|y-Xb\Big\|_2^2+\sigma J_{\lambda}\Big( W\I{b}\Big)\bigg\}.
\end{equation}
Assume for simplicity that $\|\beta_{I_1}\|_2>\ldots> \|\beta_{I_s}\|_2 >0$, $\|\beta_{I_j}\|_2=0$ for $j>s$, $\hat{\beta}$ satisfies the same conditions for some $\lambda$ and the true model is sparse. Divide $I$ into two families of sets $I^s:=\{I_1,\ldots,I_s\}$ and $I^c:=\{I_{s+1},\ldots,I_m\}$. To derive optimality condition for $\hat{\beta}$ we will prove the following
\begin{theorem}
\label{27111155}
Let $b\in\mathbb{R}^p$ be such that $\|b_{I_1}\|_2>\ldots>\|b_{I_s}\|_2>0$, $\|b_{I_j}\|_2=0$ for $j>s$ and denote $\lambda^c:=(\lambda_{s+1},\ldots,\lambda_m)\T{T}$. If $g\in \partial J_{\lambda}\big(w\I{b}\big)$, then it holds:
\begin{equation}
\left\{
\begin{array}{l}
g_{I_i}=w\lambda_i\frac{b_{I_i}}{\|b_{I_i}\|_2},\ i=1,\ldots, s\\
\llbracket g\rrbracket_{I^c}\in C_{w\lambda^c}
\end{array}
\right.,
\end{equation}
where the set $C_{\lambda}$ (here with $w\lambda^c$ instead of $\lambda$) is defined in appendix (\ref{subsec:app26112110}).
\end{theorem}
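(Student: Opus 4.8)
The plan is to work directly with the subgradient characterization of the gauge $\Phi(b) := J_\lambda\big(w\I{b}\big)$, which was shown earlier to be a (semi)norm. For such a function and $b\neq 0$, a vector $g$ lies in $\partial\Phi(b)$ precisely when it is dual-feasible and attains the duality bound, that is $\Phi^D(g)\le 1$ and $g\T{T}b=\Phi(b)$. From the dual-norm computation in subsection~\ref{subsec:app26112110} the feasibility condition reads $w^{-1}\I{g}\in C_\lambda$, equivalently $\I{g}\in C_{w\lambda}$ by positive homogeneity of the set $C_\lambda$. Thus the entire proof reduces to unpacking the single scalar equality $g\T{T}b=\Phi(b)$ under this feasibility constraint.

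First I would split the attainment equality into two layers. Writing $g\T{T}b=\sum_{i=1}^m g_{I_i}\T{T}b_{I_i}$ and applying the Cauchy--Schwarz inequality group by group gives $g\T{T}b\le \sum_i\|g_{I_i}\|_2\|b_{I_i}\|_2$. Setting $u:=w\I{b}$ and $v:=w^{-1}\I{g}\succeq0$, the right-hand side is exactly $v\T{T}u$, and the defining property of the dual norm gives $v\T{T}u\le J_\lambda^D(v)\,J_\lambda(u)\le J_\lambda(u)=\Phi(b)$, the last bound using $J_\lambda^D(v)\le1$. Since the two ends of the chain $g\T{T}b\le v\T{T}u\le J_\lambda(u)=\Phi(b)$ coincide, every intermediate inequality is an equality; in particular the group-wise Cauchy--Schwarz is tight and $v\T{T}u=J_\lambda(u)$.

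Next I would read off the conclusions. Tightness of the group-wise Cauchy--Schwarz forces $g_{I_i}$ to be a nonnegative multiple of $b_{I_i}$ whenever $b_{I_i}\neq0$, i.e. $g_{I_i}=\|g_{I_i}\|_2\,b_{I_i}/\|b_{I_i}\|_2$ for $i\le s$, so it remains only to pin down the magnitudes $\|g_{I_i}\|_2=w\lambda_i$. For this, note $u=(u_1,\dots,u_s,0,\dots,0)$ with $u_1>\dots>u_s>0$, and decompose the tight equality $v\T{T}u=J_\lambda(u)$ through $v\T{T}u\le\sum_i v_{(i)}u_{(i)}\le\sum_i\lambda_i u_{(i)}=J_\lambda(u)$, where the first step is the rearrangement inequality \cite{RearIneq} and the second follows by Abel summation from $v\in C_\lambda$ together with the monotonicity of $u_{(i)}$. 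Because the gaps $u_{(k)}-u_{(k+1)}$ are strictly positive for $k\le s$, tightness of the Abel step forces $\sum_{i\le k}v_{(i)}=\sum_{i\le k}\lambda_i$ for every $k\le s$, hence $v_{(i)}=\lambda_i$ for $i\le s$; and tightness of the rearrangement step, given the strict ordering $u_1>\dots>u_s>0=u_{s+1}$, forces the first $s$ coordinates of $v$ to be its largest entries in nonincreasing order, so in fact $v_i=\lambda_i$ for each $i\le s$. This is exactly $\|g_{I_i}\|_2=w\lambda_i$, yielding the first displayed conclusion.

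Finally, knowing that $v_1,\dots,v_s$ are the $s$ largest coordinates of $v$ and equal $\lambda_1,\dots,\lambda_s$, I would feed this back into the membership $v\in C_\lambda$: for $k>s$ the defining inequality $\sum_{i\le k}v_{(i)}\le\sum_{i\le k}\lambda_i$ collapses, after cancelling the common first-$s$ terms, to $\sum_{j\le k-s}(v^c)_{(j)}\le\sum_{j\le k-s}\lambda_{s+j}$ with $v^c=(v_{s+1},\dots,v_m)$, which is precisely $v^c\in C_{\lambda^c}$, equivalently $\llbracket g\rrbracket_{I^c}\in C_{w\lambda^c}$. The step I expect to be most delicate is the equality analysis of the rearrangement inequality in the presence of the ties $u_{s+1}=\dots=u_m=0$: one must argue not merely that the \emph{sorted} values $v_{(i)}$ equal $\lambda_i$ but that the \emph{unsorted} first-$s$ coordinates of $v$ already equal $\lambda_i$, which is exactly where the strict separations $u_1>\dots>u_s>0$ are indispensable.
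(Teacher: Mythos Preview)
Your argument is correct and complete; the equality analysis you flagged as delicate does go through, since from $v_{(i)}=\lambda_i$ for $i\le s$ together with $\sum_{i\le s}v_iu_i=\sum_{i\le s}v_{(i)}u_i$ and the strict positivity of all $u_1,\dots,u_s$, one checks that the top-$s$ order statistics of $v$ must come precisely from the first $s$ coordinates and in the given order.

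The route, however, differs from the paper's. The paper works directly from the subgradient inequality $J_{\lambda,I}(b+h)\ge J_{\lambda,I}(b)+g^{\mathsf T}h$, restricting $h$ first to perturbations supported on the null groups $I_{s+1},\dots,I_m$ (which reduces the problem to $g^c\in\partial J_{\lambda^c,\widetilde I}(0)$ and hence the $C_{w\lambda^c}$ membership via the conjugate), and then to perturbations supported on a single nonzero group $I_i$ (which reduces to $g_{I_i}\in\partial(w\lambda_i\|\cdot\|_2)(b_{I_i})$). The key technical device there is Proposition~\ref{06182058}, which says it suffices to verify the subgradient inequality on an open neighbourhood of zero. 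Your proof instead invokes the Fenchel--Young characterization $\partial\Phi(b)=\{g:\Phi^D(g)\le1,\ g^{\mathsf T}b=\Phi(b)\}$ and then dismantles the single scalar equality $g^{\mathsf T}b=\Phi(b)$ via Cauchy--Schwarz, rearrangement, and Abel summation. Your approach is more ``dual'' and exploits the explicit description of $J_{\lambda,I,W}^D$ already established in subsection~\ref{subsec:app26112110}, making the whole argument a clean equality-case analysis; the paper's approach is more ``primal'' and localizes the computation group by group without needing the rearrangement/Abel machinery. Both arrive at the same conclusions with comparable effort.
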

\begin{proof}
For $b\in \mathbb{R}^p$ define $J_{\lambda,I}(b):=J_{\lambda}\big(\I{b}\big)$ and put
$H:=\big\{h\in\mathbb{R}^p:\ \|(b+h)_{I_1}\|_2>\ldots>\|(b+h)_{I_s}\|_2,\ \|(b+h)_{I_s}\|_2>\|h_{I_j}\|_2,\ j>s\big\}.$
If $g\in\partial J_{\lambda,I}(b)$, then for all $h\in H$ from definition of subgradient it holds
\begin{equation}
\label{06191517}
\sum_{i=1}^s\lambda_i\|(b+h)_{I_i}\|_2+\sum_{i=s+1}^m\lambda_i\big(\I{b+h}\big)_{(i)}\geq\sum_{i=1}^s\lambda_i\|b_{I_i}\|_2 + \sum_{i=1}^sg_{I_i}\T{T}h_{I_i}+(g^c)\T{T}h^c,
\end{equation}
for $g^c:=(g_{I_{s+1}}\T{T},\ldots,g_{I_m}\T{T})\T{T}$ and $h^c:=(h_{I_{s+1}}\T{T},\ldots,h_{I_m}\T{T})\T{T}$. Define $\widetilde{I}:=\big\{\widetilde{I}_1,\ldots, \widetilde{I}_{m-s}\big\}$, with set $\widetilde{I}_i:=\big\{(i-1)\cdot l +1,\ldots, i\cdot l\big\}$. Then $\llbracket g^c\rrbracket_{\widetilde{I}}= \llbracket g\rrbracket_{I^c}$. Consider first case, when $h$ belongs to the set $H^c:=\{h\in H:\ h_{I_i}\equiv0,\ i\leq s\}$. This yields
\begin{equation}
\label{06191400}
\sum_{i=1}^{m-s}\lambda_{s+i}\big(\llbracket h^c\rrbracket_{\widetilde{I}}\big)_{(i)}\geq (g^c)\T{T}h^c.
\end{equation}
Since $\{h^c:\ h\in H^c\}$ is open in $\mathbb{R}^{l(m-s)}$ and contains zero, from Proposition \ref{06182058} we have that $g^c\in \partial J_{\lambda^c,\widetilde{I}}(0)$ and the inequality (\ref{06191400}) is true for any $h^c\in\mathbb{R}^{l(m-s)}$ yielding  
\begin{equation}
0\geq \sup_{h^c}\Big\{(g^c)\T{T}h^c -J_{\lambda^c,\widetilde{I}}(h^c)\Big\} = J_{\lambda^c,\widetilde{I}}^*(g^c) = \left\{\begin{array}{cl}
0,&\llbracket g^c\rrbracket_{\widetilde{I}}\in C_{\lambda^c}\\
\infty,&\textrm{otherwise}
\end{array}\right.,
\end{equation}
see Proposition \ref{07071055}. This result immediately gives condition $\llbracket g^c\rrbracket_{\widetilde{I}}\in C_{\lambda^c}$, which is equivalent with $\llbracket g\rrbracket_{I^c}\in C_{\lambda^c}$. To find conditions for $g_{I_i}$ with $i\leq s$, define sets $H_i:=\{h\in H:\ h_{I_j}\equiv0,\ j\neq i\}$. For $h\in H_i$, (\ref{06191517}) reduces to $\lambda_i\|b_{I_i}+h_{I_i}\|_2\geq \lambda_i\|b_{I_i}\|_2+g_{I_i}\T{T}h_{I_i}$. Since the set $\{h_{I_i}:\ h\in H_i\}$ is open in $\mathbb{R}^l$ and contains zero, from Proposition \ref{06182058} we have $g_{I_i}\in \partial f_i(b_{I_i})$ for $f_i:\mathbb{R}^l\longrightarrow\mathbb{R}$, $f_i(x):=\lambda_i\|x\|_2$. Since $f_i$ is convex and differentiable in $b_{I_i}$, it holds $g_{I_i}=\lambda_i\frac{b_{I_i}}{\|b_{I_i}\|_2}$, which finishes the proof.
\end{proof}
\noindent The above theorem allows to write the optimality condition for $\hat{\beta}$ in form 
\begin{equation}
\label{24031126}
\left\{
\begin{array}{l}
X_{I_i}\T{T}(y-X\hat{\beta})=w\lambda_i\frac{\hat{\beta}_{I_i}}{\|\hat{\beta}_{I_i}\|_2},\ i=1,\ldots, s\\
\llbracket X\T{T}(y-X\hat{\beta})\rrbracket_{I^c}\in C_{w\lambda^c}
\end{array}
\right..
\end{equation}
Since $X_{I_i}\T{T}X_{I_i}\approx\mathbf{I}_l$, for $i\leq s$ we get $X_{I_i}\T{T}\left(y - X_{\backslash I_i}\hat{\beta}_{\backslash I_i}\right) \approx \hat{\beta}_{I_i}\left(1+\frac{w\lambda_i}{\|\hat{\beta}_{I_i}\|_2}\right)$, where $X_{\backslash I_i}$ is matrix $X$ without columns from $I_i$ and $\hat{\beta}_{\backslash I_i}$ denotes vector $\hat{\beta}$ with removed coefficients indexed by $I_i$. This means that, for $i=1,\ldots, s$, vector $v_{I_i}:=X_{I_i}\T{T}\left(y - X_{\backslash I_i}\hat{\beta}_{\backslash I_i}\right)$ is approximately collinear with $\hat{\beta}_{I_i}$. Since $1+\frac{w\lambda_i}{\|\hat{\beta}_{I_i}\|_2}>0$, we have $\frac{v_{I_i}}{\|v_{I_i}\|_2} \approx \frac{\hat{\beta}_{I_i}}{\|\hat{\beta}_{I_i}\|_2}$.
This yields $\hat{\beta}_{I_i}\approx\left(1-\frac{w\lambda_i}{\|v_{I_i}\|_2}\right)v_{I_i}$ and consequently $\|\hat{\beta}_{I_i}\|_2\approx\Big|\|v_{I_i}\|_2-w\lambda_i\Big|$. Therefore (\ref{24031126}) can be written as 
\begin{equation}
\label{23061949}
\left\{
\begin{array}{l}
\Big|\|v_{I_i}\|_2-w\lambda_i\Big|\approx\|\hat{\beta_{I_i}}\|_2,\ i=1,\ldots, s\\
\llbracket v \rrbracket_{I^c}\in C_{w\lambda^c}
\end{array}
\right.,
\end{equation}
for $v:=(v_{I_1}\T{T},\ldots,v_{I_m}\T{T})\T{T}$.

The task now is to select $\lambda_i$'s such that condition $\llbracket v\rrbracket_{I^c}\in C_{w\lambda^c}$ regulates the rate of false discoveries. Denote $I_S:=\bigcup_{i=1}^sI_i$. Putting $y=X_{I_S}\beta_{I_S}+z$, we obtain
\begin{equation}
\label{24061549}
v_{I_i} = X_{I_i}\T{T}X_{I_S}(\beta_{I_S}- \hat{\beta}_{I_S}) + X_{I_i}\T{T}z,
\end{equation}
for $i>s$ (irrelevant groups). Under orthogonal design this expression reduces only to the term $X_{I_i}\T{T}z$, and in such situation $\|v_{I_i}\|_2$ has $\Chi$ distribution with $l$ degrees of freedom which was used in subsection \ref{subsec:06232149} to define the sequence $\lambda$. In the considered near-orthogonal situation, the term $X_{I_i}\T{T}X_{I_S}(\beta_{I_S}- \hat{\beta}_{I_S})$ should be also taken into account. The following two assumptions will be important to derive the appropriate approximation of $v_{I_i}$ distribution:
\begin{itemize}
\setlength\itemsep{1pt}
\item the distribution of $v_{I_i}$ could be well approximated by multivariate normal distribution, 
\item for relatively strong effects it occurs $\frac{\hat{\beta}_{I_i}}{\|\hat{\beta}_{I_i}\|_2}\approx\frac{\beta_{I_i}}{\|\beta_{I_i}\|_2}$ for $i=1,\ldots,s$.
\end{itemize}
The first assumption is justified when one works with large data scenario, based on the Central Limit Theorem. In discussion concerning the second assumption it is important to clarify the effect of penalty imposed on entire groups. The magnitudes of coefficients in $\hat{\beta}_{I_i}$, for truly relevant group $i$, are generally significantly smaller than in $\beta_{I_i}$. This, a so-called shrinking effect, is typical for penalized methods. It turns out, however, that under assumed conditions estimates of coefficients of nonzero $\beta_{I_i}$ are pulled to zero proportionally and after normalizing, $\hat{\beta}_{I_i}$ and $\beta_{I_i}$ are comparable. 

From the upper equation in (\ref{24031126}), we have that $X_{I_S}\T{T}(X_{I_S}\beta_{I_S}-X_{I_S}\hat{\beta}_{I_S}) + X_{I_S}\T{T}z \approx wH_{\lambda,\beta}$, for 
\begin{equation}
\label{27091918}
H_{\lambda,\beta}: = \Big(\lambda_1\frac{\beta_{I_1}\T{T}}{\|\beta_{I_1}\|_2},\ldots, \lambda_s\frac{\beta_{I_s}\T{T}}{\|\beta_{I_s}\|_2}\Big)\T{T}, 
\end{equation}
which gives $X_{I_i}\T{T}X_{I_S}(\beta_{I_S}-\hat{\beta}_{I_S}) \approx X_{I_i}\T{T}X_{I_S}(X_{I_S}\T{T}X_{I_S})^{-1}(wH_{\lambda,\beta} - X_{I_S}\T{T}z)$. Combining the last expression with (\ref{24061549}) yields
\begin{equation}
\label{03251704}
v_{I_i}\approx  X_{I_i}\T{T}X_{I_S}(X_{I_S}\T{T}X_{I_S})^{-1}\Big(wH_{\lambda,\beta}-X_{I_S}\T{T}z\Big)+X_{I_i}\T{T}z.
\end{equation}
To determine the parameters of multivariate normal distribution, which best describes the distribution of $v_{I_i}$, we will derive the exact values of the mean and the covariance matrix of the distribution of the right-hand side expression in (\ref{03251704}) for $i>s$. Since $I_i\cap I_S=\emptyset$ and entries of $X$ matrix are randomized independently with $\mathcal{N}\big(0,\frac 1n\big)$ distribution, the expected value of the random vector in (\ref{03251704}) is $0$ and its covariance matrix is provided by the following Lemma.
\begin{lemma}
\label{06241642}
The covariance matrix of $\hat{v}_{I_i}:=X_{I_i}\T{T}X_{I_S}(X_{I_S}\T{T}X_{I_S})^{-1}\Big(wH_{\lambda,\beta}-X_{I_S}\T{T}z\Big)+X_{I_i}\T{T}z$, for $i>s$, is given by the formula
$$Cov(\hat{v}_{I_i}) = \left(\frac{n-ls}n+w^2\frac{\|\lambda^S\|^2_2}{n-ls-1}\right)\mathbf{I}_l,$$ where $\lambda^S:=(\lambda_1,\ldots, \lambda_s)\T{T}.$
\end{lemma}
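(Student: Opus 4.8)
The plan is to compute the covariance matrix of
\[
\hat{v}_{I_i}=X_{I_i}\T{T}X_{I_S}(X_{I_S}\T{T}X_{I_S})^{-1}\big(wH_{\lambda,\beta}-X_{I_S}\T{T}z\big)+X_{I_i}\T{T}z
\]
by conditioning on $X_{I_S}$ first, and then averaging over the randomness of $X_{I_S}$. The key structural fact is that $X_{I_i}$ (columns indexed by an irrelevant group $i>s$) and $X_{I_S}$ have disjoint column indices, so by independence of the entries of $X$, the block $X_{I_i}$ is independent of both $X_{I_S}$ and of $z$. I will exploit this repeatedly. Throughout, write $P:=X_{I_S}(X_{I_S}\T{T}X_{I_S})^{-1}$ and note $\hat{v}_{I_i}=X_{I_i}\T{T}\big(wPH_{\lambda,\beta}-PX_{I_S}\T{T}z+z\big)$, so that $\hat{v}_{I_i}=X_{I_i}\T{T}\xi$ where $\xi:=wPH_{\lambda,\beta}+(\mathbf{I}_n-PX_{I_S}\T{T})z$ is an $n$-vector depending only on $X_{I_S}$ and $z$, hence independent of $X_{I_i}$.

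**The conditioning step.**

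First I would condition on $\xi$ (equivalently on $(X_{I_S},z)$) and use that $X_{I_i}$ has i.i.d.\ $\mathcal{N}(0,1/n)$ entries. Then $\hat{v}_{I_i}=X_{I_i}\T{T}\xi$ is a linear image of a Gaussian matrix, and a direct second-moment computation gives the conditional covariance
\[
\operatorname{Cov}\big(\hat{v}_{I_i}\mid \xi\big)=\frac{\|\xi\|_2^2}{n}\,\mathbf{I}_l,
\]
since each coordinate $(\hat{v}_{I_i})_a=\sum_t (X_{I_i})_{ta}\xi_t$ has variance $\|\xi\|_2^2/n$ and distinct coordinates are uncorrelated. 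The conditional mean is $0$. By the tower property, $\operatorname{Cov}(\hat{v}_{I_i})=\frac1n\,\mathbb{E}\big[\|\xi\|_2^2\big]\,\mathbf{I}_l$, so the whole problem reduces to computing the scalar $\mathbb{E}\|\xi\|_2^2$, and the isotropic $\mathbf{I}_l$ structure in the statement drops out for free.

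**Computing $\mathbb{E}\|\xi\|_2^2$.**

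Now I would expand $\|\xi\|_2^2$ using orthogonality. The matrix $PX_{I_S}\T{T}=X_{I_S}(X_{I_S}\T{T}X_{I_S})^{-1}X_{I_S}\T{T}$ is the orthogonal projection onto the column span of $X_{I_S}$, so $\mathbf{I}_n-PX_{I_S}\T{T}$ is the complementary projection, and $PH_{\lambda,\beta}$ lies in the column span of $X_{I_S}$. Hence the two terms $wPH_{\lambda,\beta}$ and $(\mathbf{I}_n-PX_{I_S}\T{T})z$ are orthogonal as vectors (the first in the range of the projection, the second in its kernel), giving
\[
\|\xi\|_2^2=w^2\,H_{\lambda,\beta}\T{T}(X_{I_S}\T{T}X_{I_S})^{-1}H_{\lambda,\beta}+\big\|(\mathbf{I}_n-PX_{I_S}\T{T})z\big\|_2^2.
\]
For the second summand, conditioning on $X_{I_S}$ and using $z\sim\mathcal{N}(0,\mathbf{I}_n)$ gives expectation equal to the rank of the complementary projection, namely $n-ls$ (the column rank of $X_{I_S}$ is $ls$ almost surely). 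This yields the first term $(n-ls)/n$ in the stated formula. For the first summand I would use that $X_{I_S}\T{T}X_{I_S}$ is a Wishart matrix: its entries are $\mathcal{N}(0,1/n)$, so $n\,X_{I_S}\T{T}X_{I_S}\sim\mathrm{Wishart}_{ls}(n,\mathbf{I}_{ls})$ and $\mathbb{E}\big[(X_{I_S}\T{T}X_{I_S})^{-1}\big]=\frac{n}{n-ls-1}\mathbf{I}_{ls}$ from the standard inverse-Wishart mean. Since $H_{\lambda,\beta}$ is a fixed deterministic vector with $\|H_{\lambda,\beta}\|_2^2=\sum_{i=1}^s\lambda_i^2=\|\lambda^S\|_2^2$ (each block $\lambda_i\beta_{I_i}/\|\beta_{I_i}\|_2$ is a unit vector scaled by $\lambda_i$), this contributes $w^2\|\lambda^S\|_2^2/(n-ls-1)$. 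Dividing the total by $n$ and multiplying by $\mathbf{I}_l$ gives exactly the claimed covariance.

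**Main obstacle.**

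The main obstacle is the expectation of $(X_{I_S}\T{T}X_{I_S})^{-1}$: one must justify invoking the inverse-Wishart mean formula, which requires $n-ls-1>0$ and, strictly, independence of $H_{\lambda,\beta}$ from $X_{I_S}$. The latter is slightly delicate because $H_{\lambda,\beta}$ is built from the true signal directions $\beta_{I_i}/\|\beta_{I_i}\|_2$, which are fixed parameters and not data-dependent, so $H_{\lambda,\beta}$ is indeed deterministic and the factorization $\mathbb{E}\big[H_{\lambda,\beta}\T{T}(X_{I_S}\T{T}X_{I_S})^{-1}H_{\lambda,\beta}\big]=H_{\lambda,\beta}\T{T}\,\mathbb{E}\big[(X_{I_S}\T{T}X_{I_S})^{-1}\big]\,H_{\lambda,\beta}$ is legitimate. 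I would also need to confirm that within the heuristic context $\lambda^S$ is treated as fixed (not re-derived from the same $X$), which is consistent with how the sequence is constructed in Procedure \ref{11091537}. The orthogonality decomposition and the conditioning steps are otherwise routine second-moment calculations.
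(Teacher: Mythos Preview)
Your argument is correct and reaches the same conclusion as the paper, but the organization is genuinely different. The paper splits $\hat v_{I_i}$ additively into $\xi_{X,z}=X_{I_i}\T{T}(\mathbf I_n-A_X)z$ and $\zeta_X=wX_{I_i}\T{T}B_XH_{\lambda,\beta}$, observes that the cross term has zero mean, and then computes each covariance separately by invoking two auxiliary matrix-expectation lemmas: $\mathbb E[A_X]=\frac{ls}{n}\mathbf I_n$ (Lemma~\ref{25031835}, proved via permutation and sign symmetry of the rows of $X_{I_S}$) and $\mathbb E\big[B_XH_{\lambda,\beta}H_{\lambda,\beta}\T{T}B_X\T{T}\big]=\frac{\|\lambda^S\|_2^2}{n-ls-1}\mathbf I_n$ (Lemma~\ref{31032152}). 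You instead condition on $(X_{I_S},z)$ at the outset, which immediately yields the isotropic $\mathbf I_l$ structure and reduces everything to the single scalar $\mathbb E\|\xi\|_2^2$; the geometric orthogonality of $PH_{\lambda,\beta}$ and $(\mathbf I_n-PX_{I_S}\T{T})z$ then replaces the paper's cross-term calculation, and you need only the trace of a projection (rather than the full Lemma~\ref{25031835}) plus the inverse-Wishart mean (which is exactly what underlies Lemma~\ref{31032152}). Your route is shorter and bypasses the symmetry arguments entirely, at the modest cost of making the independence of $X_{I_i}$ from $(X_{I_S},z)$ do more work up front; the paper's route is more explicit about which matrix-valued expectations are being used and packages them as reusable lemmas.
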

\noindent Before proving Lemma \ref{06241642}, we will introduce two auxiliary results, proofs of which can be found at the end of this section.
\begin{lemma}
\label{25031835}
Suppose that entries of a random matrix $X\in M(n,r)$, with $r\leq n$, are independently and identically distributed and have a normal distribution with zero mean. Then, there exists the expected value of a random matrix $A_X = X(X\T{T}X)^{-1}X\T{T}$ and $\mathbb{E}\left(A_X\right) = \frac rn \mathbf{I}_n$.
\end{lemma}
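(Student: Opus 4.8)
The plan is to recognize $A_X = X(X\T{T}X)^{-1}X\T{T}$ as the orthogonal projection onto the column span of $X$, and then to exploit the rotational invariance of the isotropic Gaussian to force its expectation to be a scalar multiple of the identity; the scalar is then read off from the trace.

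First I would dispose of well-definedness and integrability. Since $r \le n$ and the entries of $X$ are i.i.d.\ continuous random variables, the columns of $X$ are almost surely linearly independent, so $X$ has full column rank and $X\T{T}X$ is invertible with probability one; thus $A_X$ is defined almost surely. Being an orthogonal projection, $A_X$ is symmetric and idempotent, so its eigenvalues lie in $\{0,1\}$ and each entry obeys $|(A_X)_{ij}| = |\langle A_X e_i, A_X e_j\rangle| \le \|A_X e_i\|_2\,\|A_X e_j\|_2 \le 1$ by Cauchy--Schwarz. Hence all entries are bounded and $\mathbb{E}(A_X)$ exists.

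Next comes the core invariance argument. For any orthogonal $Q \in M(n,n)$, each column of $QX$ is distributed exactly as the corresponding column of $X$ (an isotropic Gaussian is rotation invariant), and the columns remain independent, so $QX$ and $X$ have the same distribution. A direct computation gives
\begin{equation*}
A_{QX} = QX\big((QX)\T{T}(QX)\big)^{-1}(QX)\T{T} = QX(X\T{T}X)^{-1}X\T{T}Q\T{T} = Q A_X Q\T{T}.
\end{equation*}
Since $QX$ and $X$ are equal in distribution, $Q A_X Q\T{T}$ and $A_X$ are equal in distribution, and taking expectations yields $Q\,\mathbb{E}(A_X)\,Q\T{T} = \mathbb{E}(A_X)$ for every orthogonal $Q$. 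A symmetric matrix invariant under conjugation by every orthogonal $Q$ must be a scalar multiple of $\mathbf{I}_n$: diagonalizing $\mathbb{E}(A_X)$, any two distinct eigenvalues could be interchanged by a suitable rotation of the corresponding eigenplane, contradicting invariance, so all eigenvalues coincide and $\mathbb{E}(A_X) = c\,\mathbf{I}_n$. Finally, $\operatorname{trace}(A_X) = r$ almost surely because $A_X$ projects onto an $r$-dimensional subspace, whence $cn = \operatorname{trace}(\mathbb{E}(A_X)) = \mathbb{E}(\operatorname{trace}(A_X)) = r$ and $c = r/n$.

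I expect the genuinely delicate points to be bookkeeping rather than conceptual: confirming almost-sure invertibility of $X\T{T}X$ so that $A_X$ and its expectation are meaningful, and making the ``invariant under all orthogonal $Q$ implies scalar'' step precise. Both are routine, and the single identity $A_{QX} = Q A_X Q\T{T}$ is what does all of the real work.
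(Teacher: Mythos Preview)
Your proof is correct, but it proceeds by a different invariance than the paper's. You invoke the full rotational symmetry of the isotropic Gaussian: from $A_{QX}=QA_XQ\T{T}$ for every orthogonal $Q$ you conclude that $\mathbb{E}(A_X)$ commutes with all of $O(n)$ and is therefore a scalar matrix, after which the trace fixes the scalar. The paper instead uses only the finite subgroup of signed permutations. It first shows, via $E_X=PE_XP\T{T}$ for permutation matrices $P$, that all diagonal entries of $E_X$ coincide and all off-diagonal entries coincide; the trace identity $\operatorname{trace}(A_X)=r$ then gives the common diagonal value $r/n$; finally a single sign flip $\Sigma=\operatorname{diag}(-1,1,\ldots,1)$ forces the common off-diagonal value to be zero. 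Your route is shorter and more conceptual, relying on the standard fact that the only $O(n)$-invariant symmetric matrices are scalars. The paper's route is more elementary and, notably, uses only that the entries are i.i.d.\ with a distribution symmetric about zero, so it would apply verbatim to non-Gaussian symmetric entries, whereas your argument genuinely needs rotational invariance.
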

\begin{lemma}
\label{31032152}
Suppose that $X\in M(n,r)$, with $r+1<n$, and entries of $X$ are independent and identically distributed, $X_{ij}\sim\mathcal{N}(0,1/n)$ for all $i$ and $j$. Then, there exists expected value of random matrix, $M_{X,\lambda}: = B_XH_{\lambda,\beta}H_{\lambda,\beta}\T{T}B_X\T{T}$, for $B_X=X(X\T{T}X)^{-1}$ and $H_{\lambda,\beta}$ defined in \eqref{27091918}. Moreover, it holds $\mathbb{E}\left(M_{X,\lambda}\right) = \frac {\|\lambda_S\|_2^2}{n-r-1}\, \mathbf{I}_n$.
\end{lemma}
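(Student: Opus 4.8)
The plan is to exploit the orthogonal invariance of the Gaussian ensemble twice, thereby reducing the claim to a single scalar computation involving the trace of an inverse Wishart matrix. First I would record the observation that $M_{X,\lambda} = \big(B_X H_{\lambda,\beta}\big)\big(B_X H_{\lambda,\beta}\big)\T{T}$, where $H_{\lambda,\beta}$ is a fixed (nonrandom) vector with $\|H_{\lambda,\beta}\|_2^2 = \sum_{i=1}^s \lambda_i^2 = \|\lambda_S\|_2^2$. Thus only the squared norm of $H_{\lambda,\beta}$, and not its direction, should survive in the final answer, and making this precise is the purpose of the first invariance argument.

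Step 1 (averaging out the direction of $H_{\lambda,\beta}$): for any fixed orthogonal $Q \in O(r)$ the entries of $XQ$ are again i.i.d.\ $\mathcal{N}(0,1/n)$, so $XQ$ and $X$ have the same law; moreover a direct computation gives $B_{XQ} = XQ(Q\T{T}X\T{T}XQ)^{-1} = B_X Q$. Consequently $\mathbb{E}\big[B_X H_{\lambda,\beta} H_{\lambda,\beta}\T{T} B_X\T{T}\big] = \mathbb{E}\big[B_X\,Q H_{\lambda,\beta} H_{\lambda,\beta}\T{T} Q\T{T}\,B_X\T{T}\big]$ for every $Q$. Integrating this identity against the Haar measure on $O(r)$ and using the standard fact $\int_{O(r)} Q H_{\lambda,\beta} H_{\lambda,\beta}\T{T} Q\T{T}\,dQ = \tfrac{\|H_{\lambda,\beta}\|_2^2}{r}\,\mathbf{I}_r = \tfrac{\|\lambda_S\|_2^2}{r}\,\mathbf{I}_r$, I would obtain $\mathbb{E}[M_{X,\lambda}] = \tfrac{\|\lambda_S\|_2^2}{r}\,\mathbb{E}\big[B_X B_X\T{T}\big]$.

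Step 2 (evaluating $\mathbb{E}[B_X B_X\T{T}]$): the same symmetry applied on the left shows this matrix is isotropic, since for $P \in O(n)$ one has $PX \stackrel{d}{=} X$ and $B_{PX} = P B_X$, so $P\,\mathbb{E}[B_X B_X\T{T}]\,P\T{T} = \mathbb{E}[B_X B_X\T{T}]$ for all $P$, forcing $\mathbb{E}[B_X B_X\T{T}] = c\,\mathbf{I}_n$. Taking traces and using $\mathrm{tr}\big(B_X B_X\T{T}\big) = \mathrm{tr}\big(X(X\T{T}X)^{-2}X\T{T}\big) = \mathrm{tr}\big((X\T{T}X)^{-1}\big)$ gives $c = \tfrac1n\,\mathbb{E}\big[\mathrm{tr}\big((X\T{T}X)^{-1}\big)\big]$. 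Since $\sqrt{n}\,X$ has i.i.d.\ standard normal entries, $nX\T{T}X$ is Wishart $W_r(n,\mathbf{I}_r)$, and the first moment of the inverse Wishart yields $\mathbb{E}\big[(X\T{T}X)^{-1}\big] = \tfrac{n}{n-r-1}\mathbf{I}_r$, whence $c = \tfrac1n\cdot\tfrac{nr}{n-r-1} = \tfrac{r}{n-r-1}$. Substituting back into Step 1 gives $\mathbb{E}[M_{X,\lambda}] = \tfrac{\|\lambda_S\|_2^2}{r}\cdot\tfrac{r}{n-r-1}\,\mathbf{I}_n = \tfrac{\|\lambda_S\|_2^2}{n-r-1}\,\mathbf{I}_n$, as claimed.

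The only nonelementary ingredient, and the main obstacle, is the inverse-Wishart moment $\mathbb{E}[(X\T{T}X)^{-1}] = \tfrac{n}{n-r-1}\mathbf{I}_r$, which is exactly where the hypothesis $r+1 < n$ enters: it guarantees both the finiteness (hence the existence) of $\mathbb{E}[M_{X,\lambda}]$ and the appearance of the denominator $n-r-1$. This is a standard property of the inverse Wishart distribution and can be derived from the density of $X\T{T}X$; everything else in the argument is routine symmetry bookkeeping entirely analogous to the proof of Lemma~\ref{25031835}.
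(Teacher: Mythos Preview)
Your argument is correct. Both proofs hinge on the same two ingredients: Gaussian rotational invariance to show that $\mathbb{E}[M_{X,\lambda}]$ is a scalar multiple of $\mathbf{I}_n$, and the inverse-Wishart first moment $\mathbb{E}[(X\T{T}X)^{-1}] = \tfrac{n}{n-r-1}\mathbf{I}_r$ to pin down that scalar via a trace.

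The route, however, differs. The paper never averages over the direction of $H_{\lambda,\beta}$; it works directly with $M_{X,\lambda}$ and uses only \emph{row} symmetries of $X$ (permutations and sign flips, as in Lemma~\ref{25031835}) to force $\mathbb{E}[M_{X,\lambda}]$ to be scalar, then reads off the constant from $\operatorname{trace}(M_{X,\lambda}) = H_{\lambda,\beta}\T{T}(X\T{T}X)^{-1}H_{\lambda,\beta}$ in one line. Your version inserts an extra Step~1, invoking the \emph{right} $O(r)$-invariance to replace $H_{\lambda,\beta}H_{\lambda,\beta}\T{T}$ by $\tfrac{\|\lambda_S\|_2^2}{r}\mathbf{I}_r$ before applying left invariance. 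This is a clean separation of concerns (direction of $H$ versus isotropy in $\mathbb{R}^n$) but strictly speaking unnecessary: the paper's single trace computation already handles $H_{\lambda,\beta}$ without any right-averaging. One small remark: your Step~1 presupposes that the expectations exist before manipulating them, whereas the paper establishes existence up front via the nuclear-norm bound $|(M_{X,\lambda})_{i,j}| \le \operatorname{trace}(M_{X,\lambda})$; you acknowledge this at the end, and it is easily supplied in your argument as well.
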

\begin{proof}[Proof of Lemma \ref{06241642}]
We have $\hat{v}_{I_i} =\xi_{X,z} + \zeta_X,$ for $\xi_{X,z}: = X_{I_i}\T{T}\left(\mathbf{I}_n-A_X\right)z$, $\zeta_X:= wX_{I_i}\T{T}B_XH_{\lambda,\beta}$, $A_X: = X_{I_S}(X_{I_S}\T{T}X_{I_S})^{-1}X_{I_S}\T{T}$, $B_X: = X_{I_S}(X_{I_S}\T{T}X_{I_S})^{-1}$. Since $\mathbb{E}(\xi_{X,z}\zeta_X\T{T}) = 0$ and mean of $\hat{v}_{I_i}$ is equal to $0$, it holds $Cov(\hat{v}_{I_i})=Cov(\xi_{X,z})+Cov(\zeta_X)$. Now thanks to Lemma \ref{25031835} and Lemma \ref{31032152}
\begin{equation}
\begin{split}
Cov(\xi_{X,z}) = \ &\mathbb{E}\left[X_{I_i}\T{T}\big(\mathbf{I}_n-A_X\big)zz\T{T}\big(\mathbf{I}_n-A_X\big)\T{T}X_{I_i}\right]=\\
&\mathbb{E}\left[X_{I_i}\T{T}\big(\mathbf{I}_n-A_X\big)\big(\mathbf{I}_n-A_X\big)\T{T}X_{I_i}\right]= \mathbb{E}\left[X_{I_i}\T{T}\big(\mathbf{I}_n-A_X\big)X_{I_i}\right] =\\
&\frac1n\big(n-ls\big)\cdot\mathbb{E}\left[X_{I_i}\T{T}X_{I_i}\right]=\frac1n\big(n-ls\big)\cdot\mathbf{I}_l,
\end{split}
\end{equation}
\begin{equation}
\begin{split}
Cov(\zeta_X) = \ & w^2\mathbb{E}\left[X_{I_i}\T{T}B_XH_{\lambda,\beta}H_{\lambda,\beta}\T{T}B_X\T{T}X_{I_i}\right] = w^2\frac{\|\lambda^S\|_2^2}{n-sl-1}\mathbb{E}\left[X_{I_i}\T{T}X_{I_i}\right] =\\
& w^2\frac{\|\lambda^S\|_2^2}{n-sl-1}\mathbf{I}_l,
\end{split}
\end{equation}
which finishes the proof.
\end{proof}
We have shown that for $i>s$ the distribution of $\|v_{I_i}\|_2$ could be approximated by scaled $\Chi$ distribution with $l$ degrees of freedom and scale parameter $\mathcal{S} = \sqrt{\frac{n-ls}n+\frac{w^2\|\lambda_S\|^2_2}{n-sl-1}}$. Now, analogously to the orthogonal situation, lambdas could be defined as $\lambda_i:=\frac{1}{w_i}F_{\mathcal{S}\Chi_l}^{-1}\left (1-\frac{q\cdot i}m\right )= \frac{\mathcal{S}}{w_i}F_{\Chi_l}^{-1}\left (1-\frac{q\cdot i}m\right )$. Since $s$ is unknown, we will apply the strategy used in \cite{SLOPE}: define $\lambda_1$ as in orthogonal case and for $j\geq 2$ define $\lambda_i$ basing on already generated sequence, according to following procedure.
\begin{algorithm}[H]
{\fontsize{9pt}{5pt}\selectfont
  \caption{Selecting lambdas in near-orthogonal situation: equal groups sizes}
	\label{24021313}
  \begin{algorithmic}
		\State \textbf{input:} $q\in (0,1)$,\ \ $w>0$,\ \ $p,\ n,\ m,\ l \in\mathbb{N}$
		\State $\lambda_1:=\frac1wF_{\chi_{\mathnormal{l}}}^{-1}\left (1-\frac{q}m\right )$;
		\State  \textbf{For} $i\in\{2,\ldots,m\}$:
		\State \indent  $\lambda^S: = (\lambda_1,\ldots, \lambda_{i-1})^\mathsf{T}$;
		\State \indent  $\mathcal{S}: = \sqrt{\frac{n-l(i-1)}n+\frac{w^2\|\lambda^S\|^2_2}{n-l(i-1)-1}}$;
		\State \indent  $\lambda^*_i:=\frac{\mathcal{S}}{w}F_{\chi_l}^{-1}\left (1-\frac{q\cdot i}m\right )$; 
		\State \indent  if $\lambda^*_i\leq \lambda_{i-1}$, then put $\lambda_i:=\lambda^*_i$. Otherwise, stop the procedure and put $\lambda_j:=\lambda_{i-1}$ for $j\geq i$;
		\State  \textbf{end for}
  \end{algorithmic}\label{21091256}
	}
\end{algorithm}

Consider now the Gaussian design with arbitrary group sizes and sequence of positive weights $w_1,\ldots,w_m$. One possible approach is to construct consecutive $\lambda_i$ as the largest scaled quantiles among all distributions, i.e. as $\max\limits_{j=1,\ldots,m}\left\{\frac{\mathcal{S}_j}{w_j}F^{-1}_{\chi_{l_j}}\left (1-\frac{q\cdot i}{m}\right )\right\}$ for corrections $\mathcal{S}_j$'s adjusted to different $l_i$ values (the conservative strategy). In this article, however, we will stick to the more liberal strategy based on $\lambda^{mean}$, which leads to the modified sequence of tuning parameters presented in Procedure \ref{11091537}.



\begin{proposition}
\label{06182058}
For any open set $H$ containing zero the subgradient of convex function $f$ at $b$ could be equivalently defined as a vector $g$ satisfying $f(b+h)\geq f(b)+g^\mathsf{T}h,\ \textrm{for all}\ h\in H.$
\end{proposition}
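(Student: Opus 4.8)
The plan is to recognize that Proposition~\ref{06182058} asserts the equivalence of the usual \emph{global} subgradient inequality, $f(b+h)\geq f(b)+g^\mathsf{T}h$ for all $h\in\mathbb{R}^p$, with its \emph{local} version that only requires the inequality on some open neighborhood $H$ of the origin. One implication is immediate: if $g$ is a subgradient in the global sense, the inequality in particular holds for every $h\in H$. The whole content lies in the converse, and the natural tool there is convexity, which lets one transport a local inequality to a global one by rescaling $h$ towards zero so that it lands inside $H$.

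For the converse I would fix an arbitrary $h\in\mathbb{R}^p$ (the case $h=0$ being trivial) and exploit that $H$ is open and contains $0$ to choose a scalar $t\in(0,1]$ small enough that $th\in H$; concretely, if $B(0,\varepsilon)\subseteq H$ then any $t\le\min\{1,\varepsilon/(2\|h\|_2)\}$ works. Applying the assumed local inequality to the admissible perturbation $th\in H$ gives
\begin{equation}
f(b+th)\geq f(b)+t\,g^\mathsf{T}h.
\end{equation}
On the other hand, writing $b+th=(1-t)b+t(b+h)$ and invoking convexity of $f$ on the segment yields
\begin{equation}
f(b+th)\leq (1-t)f(b)+t\,f(b+h).
\end{equation}
Chaining the two displays produces $(1-t)f(b)+t\,f(b+h)\geq f(b)+t\,g^\mathsf{T}h$, and after cancelling $f(b)$ and dividing through by $t>0$ one obtains $f(b+h)\geq f(b)+g^\mathsf{T}h$. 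Since $h$ was arbitrary, $g$ satisfies the global subgradient inequality.

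There is no deep obstacle here; the only point requiring a little care is the selection of the scaling factor $t$, which must simultaneously be small enough to guarantee $th\in H$ (using openness of $H$ at the origin) and satisfy $t\le 1$ so that the convexity step applies on the segment between $b$ and $b+h$. Both constraints are met by the explicit choice above, so the argument closes cleanly and the two characterizations of the subgradient coincide.
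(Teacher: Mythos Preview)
Your proof is correct and follows essentially the same approach as the paper's: scale an arbitrary $h$ by a small $t\in(0,1)$ so that $th\in H$, apply the local subgradient inequality there, and then use convexity on the segment between $b$ and $b+h$ to upgrade to the global inequality. The only cosmetic difference is that the paper packages the convexity step through the auxiliary one-variable function $F(t)=f(b+th)-tg^\mathsf{T}h$, whereas you apply convexity of $f$ directly; the algebra is identical.
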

\begin{proof}
 Suppose that $f$ is convex function and for some $b, g\in \mathbb{R}^p$ it occurs $f(b+h)\geq f(b)+g^\mathsf{T}h$ for $h\in H$, where $H$ is open set containing zero. Let $h_0\in \mathbb{R}^p$ be arbitrary vector. Function $F:\mathbb{R}\rightarrow\mathbb{R}$, defined as ${F(t): = f(b+th_0)-tg^\mathsf{T}h_0}$, is convex. There exists $t_0\in(0,1)$ such that $t_0h_0\in H$, what gives
\begin{equation}
f(b)\leq F(t_0)=F\big((1-t_0)\cdot 0+t_0\cdot 1\big)\leq (1-t_0)f(b)+t_0F(1)
\end{equation}
and $f(b+h_0)\geq f(b)+g^\mathsf{T}h_0$ as a result. 
\end{proof}

\subsection{The proof of Lemma \ref{25031835}}

The claim is obvious for $n=1$ and we will assume that $n>1$. First, we will list some basic properties of $A_X$. It could be easily noticed that: $A_X$ is symmetric matrix, $A_X$ is idempotent matrix (meaning that $A_XA_X=A_X$) and that $\operatorname{trace}(A_X)=\operatorname{trace}(X^\mathsf{T}X(X^\mathsf{T}X)^{-1})=r$. We will now show that for each $i\in\{1,\ldots,n\}$, $j\in\{1,\ldots,r\}$ the support of a $A_X(i,j)$ distribution is bounded, which will give us the existence of the expected value. Let $\|A\|_F$ be the Frobenius norm. Then 
\begin{equation}
\big|(A_X)_{i,j}\big| \leq \|A\|_F = \sqrt{\operatorname{trace}(A_X^\mathsf{T}A_X)}=\sqrt{\operatorname{trace}(A_X)}=\sqrt{r}.
\end{equation}
We will use notation $E_X: = \mathbb{E}\left(A_X\right)$. Since entries of matrix $X$ are randomized independently with the same distribution, $E_X$ is invariant under permutation applied to rows, i.e. $E_X=E_{PX}$ for any permutation matrix $P$. This gives $E_X = PE_XP^\mathsf{T}$, which means that applying the same permutation to rows and columns has no impact on expected value. We will show that
\begin{equation}
\label{24032202}
(E_X)_{i,j}=(E_X)_{1,n},\textrm{ for }i<j.
\end{equation}
Consider first the case when $i=1$ and $1<j<n$. Denoting by $P_{j\leftrightarrow n}$ matrix corresponding to transposition which replaces elements $j$ and $n$, we have $(E_X)_{1,j} = \big(P_{j\leftrightarrow n}E_XP_{j\leftrightarrow n}^\mathsf{T}\big)_{1,j}= (E_X)_{1,n}$. When $j=n$ and $1<i<n$, the same reasoning works with $P_{1\leftrightarrow i}$. Suppose now, that $1<i<n$ and $1<j<n$. We get $(E_X)_{i,j}=(E_X)_{1,n}$ analogously by using arbitrary permutation matrix $P$ which replaces element $j$ with $n$ and element $i$ with $1$. Since $A_X$ is symmetric, (\ref{24032202}) is true also for $i>j$. On the other hand, for all $i,j \in\{1,\ldots, n\}$, we have $(E_X)_{i,i} = \big(P_{j\leftrightarrow i}E_XP_{j\leftrightarrow i}^\mathsf{T}\big)_{i,i}= (E_X)_{j,j}$. Consequently, all off-diagonal entries of $E_X$ are equal to some $t$ and all diagonal entries have the same value $d$. Since
\begin{equation}
nd = \operatorname{trace}(E_X) = \sum_{i=1}^n\mathbb{E}(A_X(i,i)) = \mathbb{E}\left(\sum_{i=1}^n(A_X)_{i,i}\right) = r,
\end{equation}
we have $d=\frac rn$ and it remains to show that $t=0$. Define $\Sigma: = \left[\begin{BMAT}{c0c}{c0c}-1&\mathbf{0}^\mathsf{T}\\ \mathbf{0}&\mathbf{I}_{n-1}\end{BMAT}\right]$. Then $\Sigma X_S$ differs from $X_S$ only by signs of the first row. Since entries of matrix $X_S$ have zero-symmetric distribution, we have $E_X = E_{\Sigma X}$. Now
\begin{equation}
\left[\begin{BMAT}{c0c}{c0c}d&\mathbf{1}_{n-1}^\mathsf{T}t\\ \mathbf{1}_{n-1}t& \ddots\end{BMAT}\right] = E_X = \Sigma E_X \Sigma = \left[\begin{BMAT}{c0c}{c0c}d&-\mathbf{1}_{n-1}^\mathsf{T}t\\ -\mathbf{1}_{n-1}t& \ddots\end{BMAT}\right], 
\end{equation}
which implies $t=0$ and proves the statement.


\subsection{The proof of Lemma \ref{31032152}}
It is easy to see that $M_{X,\lambda}$ is symmetric, positive semi-definite matrix. Denote by $\|M_{X,\lambda}\|_*$ the nuclear (trace) norm of matrix $M_{X,\lambda}$. We have
\begin{equation}
\label{01040935}
\begin{split}
&\mathbb{E}\big|(M_{X,\lambda})_{i,j}\big|\leq \mathbb{E}\big(\|M_{X,\lambda}\|_*\big)=\mathbb{E}\big(\operatorname{trace}(M_{X,\lambda})\big)=\mathbb{E}\big(\operatorname{trace}(H_{\lambda,\beta}^\mathsf{T}B_X^\mathsf{T}B_XH_{\lambda,\beta})\big)=\\
&\mathbb{E}\big(H_{\lambda,\beta}^\mathsf{T}(X^\mathsf{T}X)^{-1}H_{\lambda,\beta}\big)=\frac {n} {(n-r-1)}\,H_{\lambda,\beta}^\mathsf{T}H_{\lambda,\beta}=\frac {n\,\|\lambda^S\|_2^2} {n-r-1},
\end{split}
\end{equation}
since $X^\mathsf{T}X$ follows an inverse Wishart distribution. This gives the existence of $E_X:=\mathbb{E}(M_{X,\lambda})$. Analogously to situation in Lemma \ref{25031835}, $E_X$ is invariant under permutation or signs changes applied to rows of $X$, i.e. $E_X=E_{PX}$ for any permutation matrix $P$, and $E_X = E_{\Sigma X}$ for diagonal matrix $\Sigma$ with entries on diagonal coming from set $\{-1,1\}$. Since $E_{PX} = PE_XP^\mathsf{T}$ and $E_{\Sigma X} = \Sigma E_X\Sigma$, as before we have that $E_X$ is diagonal matrix with all diagonal entries having the same value $d$. The value $d$ could be easy found using (\ref{01040935}) since we have
\begin{equation}
n d = \operatorname{trace}\big(E_X\big) = \frac {n\,\|\lambda^S\|_2^2} {n-r-1}. 
\end{equation}

\end{document}